\def\fullversion{yes}
\def\arxivversion{yes}
\newcommand{\forarxiv}[1]{#1}
\newcommand{\notarxiv}[1]{}
\newcommand{\forarxiv}[1]{}
\newcommand{\notarxiv}[1]{#1}
\newcommand{\fullversionOnly}[1]{#1}
\newcommand{\confversionOnly}[1]{}
\newcommand{\fullversionOnly}[1]{}
\newcommand{\confversionOnly}[1]{#1}
\newenvironment{algorithm}{}{}
\title{Correlation clustering with the $\ell_q$ objective}
\newcommand{\ONE}{\mathds{1}}
\newcommand{\br}[1]{\bar{#1}}
\newtheorem{lemma}{Lemma}[section]
\newtheorem{theorem}[lemma]{Theorem}
\newtheorem{claim}[lemma]{Claim}
\newtheorem{corollary}[lemma]{Corollary}
\newtheorem{definition}{Definition}
\numberwithin{figure}{section}
\newcommand{\Eshort}{E_{\leq\varepsilon}}
\newcommand{\Elong}{E_{>\varepsilon}}
\newcommand{\openLP}{\rule{0pt}{12pt}\hrule height 0.8pt\rule{0pt}{1pt}\hrule height 0.4pt\rule{0pt}{1pt}}
\newcommand{\closeLP}{\rule{0pt}{1pt}\hrule height 0.4pt\rule{0pt}{1pt}\hrule height 0.8pt\rule{0pt}{12pt}}
\newcommand{\calP}{\mathcal{P}}
\newcommand{\calC}{\mathcal{C}}
\newcommand{\calS}{\mathcal{S}}
\newcommand{\cP}{\mathcal{P}}
\newcommand{\bE}{\mathbb{E}}
\newcommand{\cE}{\mathcal{E}}
\newcommand{\calE}{\mathcal{E}}
\newcommand{\pr}{\mathrm{Pr}}
\newcommand{\eps}{\varepsilon}
\DeclareMathOperator{\Ball}{Ball}
\DeclareMathOperator{\cut}{cut}
\DeclareMathOperator{\disagree}{disagree}
\DeclareMathOperator{\cost}{cost}
\newcommand{\given}{\;|\;}
\newcommand{\Given}{\;\big|\;}
\DeclareMathOperator*{\argmax}{arg\,max}
\newcommand{\lp}[1]{LP(#1)}
\newcommand{\alg}[1]{ALG(#1)}
\newcommand{\pft}[1]{\text{profit}(#1)}
\newcommand{\prft}[2]{\text{profit}_{#2}(#1)}
\DeclarePairedDelimiter{\roundup}{\lceil}{\rceil}
\begin{document}
\title{Correlation Clustering with Local Objectives}
%\title{Correlation Clustering with the $\ell_q$ objective}
\author{Sanchit Kalhan \and Konstantin Makarychev \and Timothy Zhou}
\date{Northwestern University}
\maketitle
\begin{abstract}
Correlation Clustering is a powerful graph partitioning model that aims to cluster items based on the notion of similarity between items. An instance of the Correlation Clustering problem consists of a graph $G$ (not necessarily complete) whose edges are labeled by a binary classifier as
``similar'' and ``dissimilar''. An objective which has received a lot of attention in literature is that of minimizing the number of disagreements: an edge is in disagreement if it is a ``similar'' edge and is present across clusters or if it is a ``dissimilar'' edge and is present within a cluster. Define the disagreements vector to be an $n$ dimensional vector indexed by the vertices, where the $v$-th index is the number of disagreements at vertex $v$.
%Classically, we are tasked with producing a clustering that minimizes 
Recently, Puleo and Milenkovic (ICML '16) initiated
the study of the Correlation Clustering framework in which the objectives were more general functions of the disagreements vector. In this paper, we study algorithms for minimizing $\ell_q$ norms $(q \geq 1)$ of the disagreements vector for both arbitrary and complete graphs. We present the first known algorithm for minimizing the $\ell_q$ norm of the disagreements vector on arbitrary graphs and also provide an improved algorithm for minimizing the $\ell_q$ norm $(q \geq 1)$ of the disagreements vector on complete graphs. We also study an alternate cluster-wise local objective introduced by Ahmadi, Khuller and Saha (IPCO '19), which aims to minimize the maximum number of disagreements associated with a cluster. We also present an improved ($2 + \varepsilon$) approximation algorithm for this objective. Finally, we compliment our algorithmic results for minimizing the $\ell_q$ norm of the disagreements vector with some hardness results.

\end{abstract}

\section{Introduction}
A basic task in machine learning is that of clustering items based on the similarity between them. This task can be elegantly captured by Correlation Clustering, a clustering framework first introduced by \citet{BBC04}. In this model, we are given access to items and the \textit{similarity/dissimilarity} between them in the form of a graph $G$ on $n$ vertices. The edges of $G$ represent whether the items are \textit{similar} or \textit{dissimilar} and are labelled as (``$+$'') and (``$-$'') respectively. The goal is to produce a clustering that agrees with the labeling of the edges as much as possible, i.e., to group positive edges in the same cluster and place negative edges across different clusters (a positive edge that is present across clusters or a negative edge that is present within the same cluster is said to be in disagreement). The Correlation Clustering problem can be viewed as an agnostic learning problem, where we are given noisy examples and the task is to fit a hypothesis as best as possible to these examples. Co-reference resolution (see e.g., \citet*{CR01, CR02}), spam detection (see e.g., \citet{RFV07,BGL14}) and image segmentation (see e.g., \citet*{Wirth17}) are some of the applications to which Correlation Clustering has been applied to in practice.

This task is made trivial if the labeling given is consistent (transitive): if $(u,v)$ and $(v,w)$ are similar, then $(u,w)$ is similar for
all vertices $u,v,w$ in $G$ (the connected components on similar edges would give an optimal clustering). Instead, it is assumed that the given labeling is inconsistent, i.e., it is possible that $(u,w)$ are dissimilar even though $(u,v)$ and $(v,w)$ are similar. For such a triplet $u,v,w$, every possible clustering incurs a disagreement on at least one edge and thus, no perfect clustering exists. The optimal clustering is the one which minimizes the disagreements. Moreover, as the number of clusters is not predefined, the optimal clustering can use anywhere from $1$ to $n$ clusters.

%The classical metric that has been used in Correlation Clustering is that of minimizing the total number of edges in disagreement.
Minimizing the total weight of edges in disagreement is the objective that has received the most consideration in literature. Define the disagreements vector be an $n$ dimensional vector indexed by the vertices where the $v$-th coordinate equals the number of disagreements at $v$. Thus, minimizing the total number of disagreements is equivalent to minimizing the $\ell_1$ norm of the disagreements vector. \citet*{PM16} initiated the study of local objectives in the Correlation Clustering framework. They focus on complete graphs and study the minimization of $\ell_q$ norms $(q \geq 1)$ of the disagreements vector -- for which they provided a $48$ approximation algorithm. \citet*{CGS17} gave an improved $7$ approximation algorithm for minimizing $\ell_q$ disagreements on complete graphs. They also studied the problem of minimizing the $\ell_\infty$ norm of the disagreements vector
(also known as Min Max Correlation Clustering) for arbitrary graphs, for which they provided a $O(\sqrt{n})$ approximation.

For higher values of $q$ (particularly $q=\infty$), a clustering optimized for minimizing the $\ell_q$ norm prioritizes reducing the
disagreements at vertices that are worst off. Thus, such metrics are very unforgiving in most cases as it is possible that in
the optimal clustering there is only one vertex with high disagreements while every other vertex has low disagreements. Hence, one is forced to infer the most pessimistic picture about the overall clustering. The $\ell_2$ norm is a solution to this tension between the $\ell_1$ and $\ell_\infty$ objectives. The $\ell_2$ norm of the disagreements vector takes into account the disagreements at each vertex while also penalizing the vertices with high disagreements more heavily. Thus, a clustering optimized for the minimum $\ell_2$ norm gives a more balanced clustering as it takes into consideration both the global and local picture.

Recently, \cite*{AKS} introduced an alternative min max objective for correlation clustering (which we call AKS min max objective).
For a cluster $C \subseteq V$, let us refer to similar edges with exactly one endpoint in $C$ and dissimilar edges with both endpoints in $C$
as edges in disagreements with respect to $C$. We call the weight of all edges in disagreement with $C$ the cost of $C$. Then,
the AKS min max objective asks to find a clustering $C_1,\dots, C_T$ that minimizes the maximum cost $C_i$.
\cite{AKS} give an $O(\log n)$ approximation algorithm for this objective.

% Let $\calP$ be a partition of the vertices and let $\cut(\calP, u)$ be the number of edges incident to $u$ that are in disagreement for each $u \in V$. Define the disagreements vector $\cut_\calP$ as an $n$ dimensional vector indexed by $u$ where the $u$th index is $\cut(\calP, u)$. Thus, optimizing the clustering to minimize the total number of edges in disagreement is equivalent to minimizing the $\ell_1$ norm of $\cut_\calP$, $||\cut_\calP||_1$.

\textbf{Our contributions. } In this paper, we provide positive and negative results for Correlation Clustering with the $\ell_q$ objective. We first study the problem of minimizing disagreements on arbitrary graphs. We present the first approximation algorithm minimizing any $\ell_q$ norm $(q \geq 1)$ of the disagreements vector.

\begin{theorem}\label{Main.Thm.}
There exists a polynomial time $O(n^{\frac{1}{2} - \frac{1}{2q}} \cdot \log^{\frac{1}{2} + \frac{1}{2q}} n)$ approximation algorithm for the minimum $\ell_q$ disagreements problem on general weighted graphs.
\end{theorem}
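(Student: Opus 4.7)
The plan is to formulate a convex LP relaxation and round it via a careful ball-growing scheme that interpolates between the region-growing technique used for the classical $\ell_1$ objective and the Charikar--Gupta--Schwartz algorithm for the $\ell_\infty$ objective. For the LP, I would introduce a fractional distance $x_{uv} \in [0,1]$ for each pair of vertices, enforce the triangle inequality $x_{uv} \leq x_{uw} + x_{wv}$, and let the LP cost at vertex $v$ be $y_v = \sum_{u:(u,v)\in E^+} w_{uv}\, x_{uv} + \sum_{u:(u,v)\in E^-} w_{uv}(1-x_{uv})$. The objective is to minimize $\|y\|_q$, a convex program solvable in polynomial time whose optimum lower-bounds $\OPT$; let $y^*$ denote the optimal LP values.

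For the rounding, I would iteratively pick a pivot $p$ greedily to maximize the current ball density, sample a radius $r$ from a carefully chosen distribution on $[0, r_0]$, and form the cluster $\Ball(p, r) = \{u : x_{up} \leq r\}$. When no remaining vertex admits a sufficiently dense ball (size exceeding a threshold $\tau$), the rest of the graph is left as singletons. The two parameters $r_0$ (maximum radius) and $\tau$ (density threshold) will be tuned to balance the two sources of per-vertex disagreement. The central analytic step is a per-vertex bound on $\bE[\ALG_v]$. If $v$ sits inside a ball cluster with pivot $p$, then any negative edge $(u,v)$ inside the cluster satisfies $x_{uv} \leq 2r_0$ by the triangle inequality, so the LP pays $\Omega(1)$ on it, charging $O(y^*_v)$ to internal negative disagreements; meanwhile, positive edges cut by the ball boundary are controlled by a random-radius argument yielding $\Pr[(u,v)\text{ is cut}] \leq x_{uv}/r_0$, for an expected cut of $O((\log n / r_0)\, y^*_v)$ (incurring an extra $\log n$ factor from region growing). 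If $v$ is a singleton, any positive neighbor $u$ with $x_{uv} > r_0$ contributes at least $r_0$ to $y^*_v$, so there are at most $y^*_v / r_0$ such neighbors; the remaining positive neighbors lie inside $\Ball(v, r_0)$, whose size is at most $\tau$ by the density criterion.

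Aggregating per-vertex bounds in the $\ell_q$ norm gives $\|\ALG\|_q \lesssim (\log n / r_0)\,\|y^*\|_q + n^{1/q}\tau$. With $r_0$ and $\tau$ tuned appropriately, and using a lower bound on $\|y^*\|_q$ coming from the very existence of the singleton regime, this should yield the target ratio $O(n^{1/2 - 1/(2q)}\log^{1/2 + 1/(2q)} n)$, which specializes to $O(\log n)$ at $q=1$ and to $O(\sqrt{n \log n})$ at $q=\infty$. The hard part is proving the per-vertex region-growing bound for cut positive edges: classical region growing bounds only the total cut cost by $\log n$ times the total LP cost inside the cluster, and converting this into a per-vertex statement requires either a new distribution over radii or a new pivot-selection rule so that the $\log n$ factor materializes pointwise rather than only in aggregate. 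A secondary subtlety is accounting for singletons that arise after earlier clusters have been removed: the remaining ball of such a vertex is small by construction, but its ball in the input graph may be much larger, so a telescoping / charging argument across rounds is needed to ensure the positive-degree bound holds globally.
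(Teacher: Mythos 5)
Your proposal takes a genuinely different route (iterative pivot-based region growing with a density threshold) from the paper's approach (one-shot padded metric decomposition with an accept/reject test on the size of the boundary neighborhood), but it has a gap in the aggregation step that isn't resolved by the difficulty you flag.

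The gap is in the line ``aggregating per-vertex bounds in the $\ell_q$ norm gives $\|\ALG\|_q \lesssim (\log n/r_0)\|y^*\|_q + n^{1/q}\tau$.'' Even if you obtain a per-edge bound $\Pr[(u,v)\text{ cut}]\le O(\log n)\,x_{uv}/r_0$ (which \emph{is} available pointwise from FRT-type decompositions, so that is not the real obstacle), what you get is $\bE[\ALG_v]\le O((\log n/r_0)\,y^*_v)$. This does not yield a bound on $\bE[\|\ALG\|_q]$, because $\|\ALG\|_q^q = \sum_v \ALG_v^q$ and in general $\bE[\ALG_v^q]$ can be vastly larger than $(\bE[\ALG_v])^q$: a vertex $v$ with small $y^*_v$ but large degree might, with small probability, have all of its positive edges cut simultaneously, contributing a huge $q$-th moment. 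Controlling $\bE[\ALG_v^q]$ is precisely the crux for $q>1$, and per-edge probability bounds are not enough. This is also why your LP (objective $\|y\|_q$ alone) is too weak: the paper's relaxation (P) additionally carries the constraints $z_u = \sum_v w^q_{uv}x_{uv}$ and the term $(\sum_u z_u)^{1/q}$ in the objective, and this second lower bound on $\OPT$ is what makes the higher-moment analysis go through.

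Concretely, the paper splits edges into ``short'' ($x_{uv}\le\varepsilon$) and ``long'' ($x_{uv}>\varepsilon$). For long edges, the LP cost per edge is $\ge\varepsilon$, so their total weight at $v$ is at most $\frac{1}{\varepsilon}\sum_v w_{uv}x_{uv}$; a H\"older-type step then controls $\bE[(\sum_{\text{long}}w_{uv}X_{uv})^q]$ by the $y$-part of the LP. For short edges, the algorithm only accepts the sampled partition when the $\varepsilon$-neighborhood of the boundary has at most $M$ vertices; conditioned on this event, at most $M$ of the short-edge indicators $X_{uv}$ at any fixed $u$ can be nonzero, and the paper's Lemma~\ref{lem:jensen-bound} converts this support bound into $\bE[(\sum w_{uv}X_{uv})^q]\le M^{q-1}\sum \bE[w_{uv}^q X_{uv}^q]$, which is controlled by the $z$-part of the LP. Your density threshold $\tau$ is trying to play the role of $M$, but it governs the ball around a \emph{singleton} $v$, not the number of short edges at $v$ that can be simultaneously cut when $v$ lands inside a cluster, so it does not yield the moment control. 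To repair the argument you would need (i) the $z$-variables in the relaxation, and (ii) some mechanism that, conditioned on the run of your iterative procedure, bounds for every vertex the number of short positive edges that can be cut at once -- the paper achieves exactly this by rejecting decompositions with a large boundary neighborhood, and this is a global property of the partition that an iterative greedy scheme does not automatically provide.
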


For the $\ell_2$ objective, the above algorithm leads to an approximation ratio of $\tilde{O}(n^{\nicefrac{1}{4}})$, thus providing the first known approximation ratio for optimizing the clustering for this version of the objective. Note that the above algorithm matches the best approximation guarantee of $O(\log n)$ for the classical objective of minimizing the $\ell_1$ norm of the disagreements vector. For the $\ell_\infty$ norm, our algorithm matches the guarantee of the algorithm by \citet*{CGS17} up to $\log$ factors. Fundamental combinatorial optimization problems like \textit{Multicut, Multiway Cut} and \textit{s-t Cut} can be framed as special cases of Correlation Clustering. Thus, Theorem \ref{Main.Thm.} leads
to the first known algorithms for \textit{Multicut, Multiway Cut} and \textit{s-t Cut} with the $\ell_q$ objective when
$q\neq 1$ and $q \neq \infty$. We can also
use the algorithm from Theorem~\ref{Main.Thm.} to obtain $O(n^{\frac{1}{2} - \frac{1}{2q}} \cdot \log^{\frac{1}{2} + \frac{1}{2q}} n)$
bi-criteria approximation for Min $k$-Balanced Partitioning with the $\ell_q$ objective (we omit details here).

Next, we study the case of complete graphs. For this case, we present an improved $5$ approximation algorithm for minimizing any $\ell_q$ norm $(q \geq 1)$ of the disagreements vector.

\begin{theorem}
There exists a polynomial time $5$ approximation algorithm for the minimum $\ell_q$ disagreements problem on complete graphs.
\end{theorem}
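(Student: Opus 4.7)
The plan is to round a natural convex relaxation to obtain a \emph{per-vertex} guarantee: the output clustering satisfies $\disagree(v)\le 5\,y_v$ for every vertex $v$, where $y_v$ denotes the fractional disagreement at $v$ under the LP. Since the $\ell_q$ norm is monotone in each coordinate, such a pointwise bound upgrades automatically to a $5$-approximation for every $q\ge 1$ simultaneously (in particular for $q=1$, $q=2$, and $q=\infty$), without any $q$-dependent loss in the constant.

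\textbf{Relaxation.} I would use the standard triangle-inequality LP: variables $x_{uv}\in[0,1]$ for every pair of vertices, constraints $x_{uv}\le x_{uw}+x_{wv}$ for every triple, and per-vertex fractional disagreement
\[
y_v \;=\; \sum_{u\,:\,(u,v)\in E^+} x_{uv} \;+\; \sum_{u\,:\,(u,v)\in E^-}(1-x_{uv}).
\]
Minimizing $\|y\|_q$ over this polytope is a convex program solvable in polynomial time, and any integer clustering induces a feasible LP solution whose objective coincides with its true disagreement cost, so the LP optimum lower-bounds $\OPT$.

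\textbf{Rounding.} In the spirit of Ailon--Charikar--Newman and Puleo--Milenkovic, I would apply a threshold pivot algorithm: repeatedly select a pivot $u$ (using an LP-dependent criterion that ensures the local ball around $u$ is ``heavy enough'' compared to $y_u$), form the cluster $C_u=\{u\}\cup\{v:x_{uv}\le \alpha\}$ for a carefully tuned constant $\alpha$, output $C_u$, and recurse on the remaining vertices. For each $v\in C_u$, the algorithm's disagreements at $v$ decompose into (i) intra-cluster negative edges $(v,w)$ with $w\in C_u$, for which the triangle inequality gives $x_{vw}\le 2\alpha$ and hence the LP already paid $1-x_{vw}\ge 1-2\alpha$ per such edge; and (ii) cross-cluster positive edges $(v,w)$ with $w\notin C_u$, which are charged via triangle inequality by combining $x_{uw}>\alpha$ with $x_{uv}\le \alpha$ to relate $x_{vw}$ to the LP cost of either $v$ or $u$. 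A separate, similar argument handles the pivot's own disagreement using the ball-density condition built into the pivot rule.

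\textbf{Main obstacle.} The central technical challenge is tuning $\alpha$ and the pivot-selection criterion so that both kinds of charges above fit simultaneously within a budget of $5\,y_v$ for every vertex $v$---not only when $v$ is itself a pivot, but also for ordinary cluster members whose LP cost may be much smaller than the pivot's. Achieving the constant $5$ should require a careful case split on $x_{uv}$ inside the cluster (for instance, treating vertices with $x_{uv}$ near $0$ differently from those near $\alpha$) and a matching choice of $\alpha$; an induction on the number of surviving vertices then completes the argument.
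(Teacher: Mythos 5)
Your high-level framework matches the paper's: solve a triangle-inequality LP, run a pivot-style region-growing rounding, and prove a per-vertex bound $\disagree(v)\le 5\,y_v$ that transfers to every $\ell_q$ norm simultaneously. But the proposal misidentifies both the pivot-selection rule and the role it plays in the analysis, and this is exactly where the constant $5$ lives.

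You describe the pivot criterion as ensuring ``the local ball around $u$ is heavy enough compared to $y_u$'' and say it is used ``to handle the pivot's own disagreement.'' Neither is how the argument works. The paper sets $r=1/5$, clusters with the \emph{larger} radius $2r=2/5$, and selects the pivot $w_t$ maximizing $L_t(w)=\sum_{u\in\Ball(w,r)\cap V_t}(r-x_{uw})$ over all remaining vertices. The pivot itself needs no special treatment: $x_{ww}=0\le r$, so it falls under the easy case in which every incident disagreeing edge has LP cost at least $r$ and charges to itself. The quantity $L_t$ is there to rescue the genuinely problematic vertices, namely those $u$ with $x_{uw}\in(r,3r]$: such a $u$ can be incident to many cross-cluster positive edges $(u,v)$ with $x_{uv}$ arbitrarily small (when $x_{uw}$ and $x_{vw}$ are both near $2r$), so the naive per-edge triangle-inequality charge you sketch in item (ii) fails. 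The paper fixes this by splitting the profit at $u$ into $P_{\mathrm{high}}(u)$ (from vertices in $\Ball(w,r)$, each contributing at least $r-x_{vw}\ge 0$ regardless of sign, summing to $\ge L_t(w)$) and $P_{\mathrm{low}}(u)$ (everything else, bounded below by $-L_t(u)$), and then invokes the maximality $L_t(w)\ge L_t(u)$ to conclude $P_{\mathrm{high}}(u)+P_{\mathrm{low}}(u)\ge 0$. This comparison between the pivot's ball-slack and the current vertex's ball-slack is the missing ingredient; without it, a case split on $x_{uv}$ inside the cluster, however careful, will not close the argument. You would also need to separate the radius used in the pivot rule ($r$) from the cluster radius ($2r$), which your single-parameter $\alpha$ conflates.
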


We also study the case of complete bipartite graphs where disagreements need to be bounded for only one side of the bipartition, and not the whole vertex set. We give an improved $5$ approximation algorithm for minimizing any $\ell_q$ norm $(q \geq 1)$ of the disagreements vector.

\begin{theorem}\label{thm:complete}
There exists a polynomial time $5$ approximation algorithm for the minimum $\ell_q$ disagreements problem on complete bipartite graphs where disagreements are measured for only one side of the bipartition.
\end{theorem}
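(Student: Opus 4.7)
The plan is to adapt the $5$-approximation algorithm for complete graphs (the preceding theorem) to the bipartite one-sided setting, via an LP relaxation and a pivot-based rounding that yields a pointwise bound on the algorithm's disagreements at each $v \in L$, where $L$ is the measured side of the bipartition.

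First, I would set up an LP with pseudometric variables $x_{uv} \in [0,1]$ for all pairs $u,v \in L \cup R$ (triangle inequality plus $x_{uv} \in [0,1]$). For each $v \in L$ define the LP disagreement
\[
y_v \;=\; \sum_{u:\, (u,v) \in E^+} x_{uv} \;+\; \sum_{u:\, (u,v) \in E^-} (1 - x_{uv}),
\]
and minimize $\|y\|_q$ subject to the pseudometric constraints. Since $y$ is linear in $x$ and $\|\cdot\|_q$ is convex, the LP is polynomial-time solvable.

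The rounding iteratively picks a pivot $p \in L$. At each step I would choose $p$ so that $B(p,r)$ has small aggregate LP cost in a sense mimicking the complete-graph algorithm of the previous theorem, form a cluster from the remaining vertices in $B(p,r)$, and continue; after all $L$-vertices are clustered, leftover $R$-vertices are placed in singletons (they do not affect the objective). The radius $r$ is taken around $1/5$ so as to balance the two error terms in the analysis. The analysis then fixes $v \in L$ and aims at the pointwise bound $\mathrm{alg}_v \leq 5\, y_v$, which immediately implies $\|\mathrm{alg}\|_q \leq 5\|y\|_q \leq 5\,\OPT$ for every $q \geq 1$. If $v$ lies in cluster $C$ with pivot $p$, then for $u \in C$ the triangle inequality gives $x_{vu} \leq x_{vp} + x_{pu} \leq 2r$; internal negative disagreements $(v,u)\in E^-$ with $u \in C$ each contribute $\geq 1-2r$ to $y_v$, bounding their count by $y_v/(1-2r)$. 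Cut positive disagreements $(v,u)\in E^+$ with $u \notin C$ split into two subcases: either $u \notin B(p,r)$ at the time of pivoting (so $x_{pu} > r$ and a triangle-inequality lower bound on $x_{vu}$ charges the disagreement to $y_v$), or $u$ was swept into an earlier cluster, where the pivot-selection rule supplies the required charge.

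The main obstacle will be designing the pivot rule so that the pointwise bound survives for non-pivot $v$: when $v$ is captured by another vertex's cluster, the analysis must charge $v$'s disagreements to $v$'s own $y_v$ rather than to the pivot's, which requires the pivot selection to respect an averaging condition over the ball (essentially, each pivot must certify a bound for all $L$-vertices it captures). The asymmetry of the bipartite objective is an additional complication, since $R$-vertices still shape the clusters even though they contribute nothing to the cost: some $R$-vertices will end up as singletons while others are swept into $L$-pivots' clusters, and both scenarios must be charged to $y_v$ cleanly in the cut-positive case analysis. Once the pivot rule is set up to deliver the charge in the captured-vertex case, optimizing $r$ should produce the constant $5$, matching the complete-graph bound.
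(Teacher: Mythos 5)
Your high-level plan -- the same LP relaxation, a deterministic pivot procedure over $L$, and a pointwise charge $\mathrm{alg}_v \le 5y_v$ for each $v\in L$ -- is indeed the right skeleton, and it matches the structure of the paper's proof. However, two concrete pieces are wrong or missing, and together they constitute the actual content of the argument.

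First, the cluster radius. You form clusters as $B(p,r)\cap V_t$ with $r\approx 1/5$. With that radius, the claimed triangle-inequality charge for cut positive edges fails: if $v\in B(p,r)$ and $u\in V_t\setminus B(p,r)$, the only bound you get is $x_{vu}\ge x_{pu}-x_{vp}>r-r=0$, which charges nothing to $y_v$ when $x_{pu}$ is just above $r$ and $x_{vp}$ just below $r$. The paper uses two radii: the pivot is selected by looking at the $r$-ball, but the cluster is $\Ball(w_t,2r)\cap V_t$. With $r=1/5$, internal negative edges have $x_{uv}\le 4r$ and hence LP cost $\ge 1-4r=r$, and the positive-edge cases with $x_{uw}\le r$ or $x_{uw}\ge 3r$ both give a clean per-edge charge of at least $r$ (Lemmas \ref{lem:0r} and \ref{lem:r1}). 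Your single-radius scheme cannot recover these clean cases.

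Second, and more importantly, the step you defer -- ``the pivot-selection rule supplies the required charge'' -- is precisely where the proof lives, and you explicitly flag it as the main obstacle rather than resolving it. The paper's pivot rule picks $w_t\in L$ maximizing $L^R_t(w)=\sum_{u\in\Ball_R(w,r)\cap V_t}(r-x_{uw})$, and the key argument is for $v$ with $x_{vw}\in(r,3r]$: split the per-step profit into $P_{high}(v)$ (edges to $\Ball_R(w,r)\cap V_t$) and $P_{low}(v)$ (edges to $V_t\setminus\Ball_R(w,r)$). Claim \ref{claim:prof-from-core-v} shows each term in $P_{high}$ is at least $r-x_{uw}$, so $P_{high}(v)\ge L^R_t(w)$; Claim \ref{claim:prof-uv-lower-bound} gives $P_{low}(v)\ge -L^R_t(v)$; and maximality of $w_t$ yields $P_{high}+P_{low}\ge L^R_t(w)-L^R_t(v)\ge 0$. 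Without this specific rule and this decomposition, the pointwise bound does not follow. As written, the proposal is an outline of the right approach with its central lemma missing and a radius choice that breaks the simple cases.
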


In this paper, we also consider the AKS min max objective. For this objective, we give a $(2 + \varepsilon)$ approximation algorithm,
which improves the approximation ratio of $O(\log n)$ given by~\cite*{AKS}.

\begin{theorem}
There exists a polynomial time $(2 + \eps)$ approximation algorithm for the AKS min max problem on arbitrary graphs.
\end{theorem}

%%% begin conf version only %%%
\confversionOnly{
Finally, in the full version of this paper (see supplemental materials), we present an integrality gap of $\Omega(n^{\frac{1}{2} - \frac{1}{2q}})$
for minimum $\ell_q$ $s-t$ cut and
prove a hardness of approximation of 2 for minimum $\ell_\infty$ $s-t$ cut.
}
%%% end conf version only %%%

%%% begin full version only %%%
\fullversionOnly{
Our algorithm for the minimum $\ell_q$ disagreements problem is based on rounding the natural convex programming relaxation for this problem. We show that our result is best possible according to this relaxation by providing an almost matching integrality gap. The integrality gap example we provide is for the minimum $\ell_q$ $s-t$ cut problem (a special case of correlation clustering) and show the following result.

\begin{theorem}
The natural convex programming relaxation for the minimum $\ell_q$ disagreements problem has an integrality gap of $\Omega(n^{\frac{1}{2} - \frac{1}{2q}})$ on arbitrary graphs.
\end{theorem}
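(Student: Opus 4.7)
I would exhibit the gap on a weighted instance of minimum $\ell_q$ $s$-$t$ cut (a special case of correlation clustering via a heavy dissimilar $(s,t)$-edge). Let $k = \lfloor\sqrt{n-2}\rfloor$. The graph $G$ has vertex set $\{s,t\} \cup G_1 \cup \cdots \cup G_k$, where each ``layer'' $G_i$ contains $k$ vertices. Between consecutive layers $G_i$ and $G_{i+1}$ include all $k^2$ possible edges as unit-weight ``similar'' edges; join $s$ to every vertex of $G_1$, and $t$ to every vertex of $G_k$, by ``similar'' edges of weight $w := k^{1/q}$; and add a single ``dissimilar'' edge $(s,t)$ of weight $M \gg k^{1+1/q}$ to force $s$ and $t$ apart. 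The plan is to exhibit an LP solution of $\ell_q$-value $O(n^{1/q})$ and show that every integer clustering has $\ell_q$-value $\Omega(k^{1+1/q})$.

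\textbf{LP upper bound.} I would use the ``line-metric'' solution: place $s$ at position $0$, every vertex of $G_i$ at position $i/(k+1)$, and $t$ at position $1$, setting $d(u,v)$ to the absolute difference of positions. This is a valid semi-metric with $d(s,t) = 1$, so the $M(1-d(s,t))$ contribution to the disagreements at $s$ and $t$ vanishes. The disagreement at $s$ (and symmetrically $t$) is $wk/(k+1) = O(k^{1/q})$; at each vertex of $G_1$ or $G_k$ it is $(w+k)/(k+1) = O(1)$; at each internal layer vertex it is $2k/(k+1) = O(1)$. Since $w^q = k$, the $q$-th powers sum to $O(k + k^2) = O(n)$, so $\|x\|_q = O(n^{1/q})$.

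\textbf{Integer lower bound.} For any $s$-$t$ cut $(S,T)$, let $a_i := |G_i \cap S|$ and $b_i := k - a_i$. If $b_1 \geq k/2$, the weighted cut degree at $s$ is at least $w \cdot b_1 \geq \tfrac{1}{2}k^{1+1/q}$, so $\|x\|_q \geq \tfrac{1}{2}k^{1+1/q}$; the case $a_k \geq k/2$ is symmetric. Otherwise $a_1 > k/2$ and $a_k < k/2$, so the integer sequence $(a_i)_{i=1}^k$ must cross $k/2$: there exists $i \in \{1,\ldots,k-1\}$ with $a_i > k/2$ and $a_{i+1} \leq k/2$, hence $b_{i+1} \geq k/2$. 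Since $G_i$-$G_{i+1}$ is complete bipartite, every one of the $> k/2$ vertices in $G_i \cap S$ has cut degree $\geq b_{i+1} \geq k/2$, giving $\|x\|_q \geq ((k/2)(k/2)^q)^{1/q} = \Omega(k^{1+1/q})$.

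\textbf{Main obstacle and conclusion.} The delicate step is the integer lower bound: one must simultaneously rule out both ``concentrated'' cuts near $s$ or $t$ and ``balanced'' cuts across a middle-layer transition. The weight $w = k^{1/q}$ is tuned precisely so that these two natural cut strategies have matching cost $\Theta(k^{1+1/q})$, and the above case analysis shows no other cut can do better. The LP, by contrast, spreads the $s$-$t$ separation uniformly over all $k+1$ interlayer gaps, paying only $O(1)$ at each of $n$ vertices. Taking the ratio $\Omega(k^{1+1/q})/O(n^{1/q}) = \Omega(n^{1/2 + 1/(2q) - 1/q}) = \Omega(n^{1/2 - 1/(2q)})$ completes the proof.
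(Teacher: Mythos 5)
Your proposal is correct, and it is a genuinely different construction from the paper's, though both share the same high-level template of a layered $s$-$t$-cut instance whose LP optimum spreads the separation uniformly across layers. The paper builds an \emph{unweighted} graph $G_{a,b}$: $b$ layers each a complete bipartite graph $K_{a,a}$, with a single bottleneck terminal $s_i$ inserted between consecutive layers and $s,t$ at the ends. Because every path between consecutive terminals is forced through the single vertex $s_i$ or through a dense $K_{a,a}$, the integral lower bound follows by arguing that \emph{some one vertex} has $\Omega(a)$ cut edges, giving $\OPT^q = \Omega(a^q)$. The LP assigns length $1/b$ to the $s_i$--$R(G^i)$ edges only, so $\|y\|_q^q = ab(1/b)^q + b(a/b)^q$, and setting $a = b = \Theta(\sqrt{n})$ yields the same $\Omega(n^{1/2-1/(2q)})$ gap. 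Your construction dispenses with the intermediate terminals entirely and instead introduces edge weights $w = k^{1/q}$ on the $s$- and $t$-incident edges, carefully tuned so that the two natural families of cuts (concentrated near $s$ or $t$, versus crossing at an interior layer) have matching cost $\Theta(k^{1+1/q})$; you then need a slightly more involved crossing argument for the integral lower bound (either $\Omega(1)$ vertices incident to $s$/$t$ with large cut degree, or $\Omega(k)$ interior vertices each with cut degree $\Omega(k)$). The trade-off is clear: the paper's gadget is unweighted, which is a somewhat cleaner and arguably stronger statement, at the cost of extra bottleneck vertices; your gadget has fewer vertex types and a more elementary ``line-metric'' LP solution, at the cost of requiring weighted edges (both on the $(s,t)$ dissimilar edge and on the $s,t$-incident similar edges). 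Both are valid, both give the claimed gap of $\Omega(n^{1/2-1/(2q)})$, and both verify the objective bound on $\max(\|y\|_q, (\sum_u z_u)^{1/q})$ (your line-metric solution gives $O(n^{1/q})$ for each term, as you check implicitly since $w^q = k$).
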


Finally, we present a hardness of approximation result for minimum $\ell_\infty$ $s-t$ cut.

\begin{theorem}
There is no $\alpha$-approximation algorithm for the min $\ell_\infty$ \textit{s-t cut problem} for $\alpha<2$ unless P = NP.
\end{theorem}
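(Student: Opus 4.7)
We plan to show hardness of approximation via a gap reduction: it is NP-hard to distinguish instances of min $\ell_\infty$ $s$-$t$ cut with optimum at most $1$ from those with optimum at least $2$. Because the objective is integer-valued, any $\alpha$-approximation with $\alpha < 2$ run on a yes-instance must output a cut of value strictly less than $2$, i.e., exactly $1$, thereby deciding this gap problem and implying the claimed inapproximability.

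The key observation is combinatorial: the cut $E(S,T)$ has $\ell_\infty$ value at most $1$ if and only if no vertex is incident to more than one cut edge, i.e., the set of cut edges forms a matching. Since further splitting any cluster can only increase cut-degrees, the best two-cluster solution is optimal overall. Thus our decision problem reduces to the \emph{$s$-$t$ Matching Cut} problem: does $G$ admit a bipartition $(S,T)$ with $s \in S$, $t \in T$ whose cut edges form a matching?

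We then plan to establish NP-hardness of $s$-$t$ Matching Cut by a direct reduction from NAE-$3$-SAT (alternatively, by adapting Chv\'atal's classical reduction for unrestricted Matching Cut). Each variable $x_i$ is represented by an edge between literal vertices $x_i$ and $\bar x_i$, which in any matching cut must itself be a cut edge and therefore encodes a Boolean truth assignment. Positive and negative literal vertices are then ``anchored'' to $s$ and $t$ respectively via rigidity gadgets such as triangles joining a literal to $s$ or $t$; triangles are necessarily monochromatic in any matching cut, and so force the desired side. Each NAE clause is encoded by a small gadget that is infeasible under the matching-cut constraint precisely when its three literals lie on the same side. The most delicate step will be designing this clause gadget: it must exclude both all-$S$ and all-$T$ literal assignments while admitting every non-monochromatic one, all without introducing a gadget vertex that is forced to have cut-degree at least $2$ in valid solutions. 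Once such gadgets are in place, satisfiable formulas yield graphs admitting an $s$-$t$ matching cut ($\OPT = 1$), and unsatisfiable formulas yield graphs in which every $s$-$t$ cut has some vertex of cut-degree at least $2$ ($\OPT \geq 2$), completing the gap reduction.
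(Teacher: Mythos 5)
Your high-level strategy matches the paper's: reduce from a SAT variant and exhibit a gap between instances with optimum $1$ and instances with optimum at least $2$, then invoke integrality. Your reformulation of the ``$\OPT\le 1$'' condition as an $s$-$t$ matching cut is a nice lens, and working with triangles as rigidity gadgets (rather than the infinite-weight edges the paper uses to pin $x_i^T$ to True and $x_i^F$ to False) would, if completed, give the slightly cleaner unweighted statement. However, this is a plan with two genuine gaps, and you acknowledge the bigger one yourself.

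First, the clause gadget is never constructed. You correctly identify it as ``the most delicate step'' and then assume it into existence. But the entire reduction hinges on producing a gadget that (i) forbids exactly the monochromatic literal placements, (ii) admits all non-monochromatic ones, and (iii) does so without any internal gadget vertex being forced above cut-degree $1$ in the good case. None of these three properties is verified, and they are in tension with each other; the paper's clause gadget (five vertices $y_1,y_2,y_3,C_a,C_b$ in a path-like arrangement with $C_a$ pinned to True, plus the accompanying ``Fact 1'') is precisely the nontrivial object your plan needs and does not supply. Note also that the paper reduces from plain $3$SAT rather than NAE-$3$SAT, and the asymmetry of its clause gadget (pinning $C_a$ to True only) is what lets it encode the one-sided ``at least one literal is True'' condition; if you insist on NAE, your gadget must enforce a symmetric constraint, which is a different design problem you have not solved.

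Second, the variable encoding as written does not make sense. You say each variable is ``an edge between literal vertices $x_i$ and $\bar x_i$'' which ``in any matching cut must itself be a cut edge'' and then that ``positive and negative literal vertices are anchored to $s$ and $t$ respectively.'' If $x_i$ is rigidly anchored to $s$ and $\bar x_i$ to $t$, the assignment is fixed, not free; there is nothing left to encode. What you actually need is a gadget with two internal choice vertices whose placement is free but coupled (exactly one on each side), which is what the paper's $x_i^\dagger$, $\bar x_i^\dagger$ vertices achieve via the degree-$1$ budget at the anchored vertices $x_i^T$, $x_i^F$. Until both the variable coupling and the clause gadget are pinned down and verified in both directions (satisfiable $\Rightarrow \OPT=1$ and $\OPT=1\Rightarrow$ satisfiable), the reduction is not established.
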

}

%%% end full version only %%%

\textbf{Previous work.} \citet*{BBC04} showed that it is NP-hard to find a clustering that minimizes the total disagreements, even on complete graphs. They give a constant-factor approximation algorithm to minimize disagreements and a PTAS to maximize agreements on complete graphs. For complete graphs, \citet*{ACN08} presented a randomized algorithm with an approximation guarantee of $3$ to minimize total disagreements. They also gave a $2.5$ approximation algorithm based on LP rounding. This factor was improved to slightly less than $2.06$ by \citet*{CMSY15}. Since, the natural LP is known to have an integrality gap of $2$, the problem of optimizing the classical objective is almost settled with respect to the natural LP. For arbitrary graphs, the best known approximation ratio is $O(\log n)$ (see \citet*{CGW03, DEFI06}).  Assuming the Unique Games Conjecture, there is no constant-factor approximation algorithm for minimizing $\ell_1$ disagreements on arbitrary graphs (see~\citet{CKKRS06}). \citet*{PM16} first studied Correlation Clustering with more local objectives. For minimizing $\ell_q$ $(q \geq 1)$ norms of the disagreements vector on complete graphs, their algorithm achieves an approximation guarantee of $48$. This was improved to $7$ by \citet*{CGS17}. \citet{CGS17} also studied the problem of minimizing the $\ell_\infty$ norm of the disagreements vector on general graphs. They showed that the natural LP/SDP has an integrality gap of $\nicefrac{n}{2}$ for this problem and provided a $O(\sqrt{n})$ approximation algorithm for minimum $\ell_\infty$ disagreements. \citet*{PM16} also initiated the study of minimizing the $\ell_q$ norm of the disagreements vector (for one side of the bipartition) on complete bipartite graphs. The presented a $10$ approximation algorithm for this problem, which was improved to $7$ by \citet*{CGS17}. Recently, \cite{AKS} studied an alternative objective for the correlation clustering problem. Motivated by creating balanced communitites for problems such as image segmentation and community detection in social networks, they propose a new cluster-wise min-max objective. This objective minimizes the maximum weight of edges in disagreement associated with a cluster, where an edge is in disagreement with respect to a cluster if it is a similar edge and has exactly one end point in the cluster or if it is a dissimilar edge and has both its endpoints in the cluster. They gave an $O(\log n)$ approximation algorithm for this objective.

\section{Preliminaries}\label{sec:prelim}

We now formally define the Correlation Clustering with $\ell_q$ objective problem. We will need the following
definition. Consider a set of points $V$ and two disjoint sets of edges on $V$: positive edges $E^+$ and negative edges
$E^-$. We assume that every edge has a weight $w_{uv}$. For every partition $\calP$ of $V$, we say that a positive
edge is in disagreement with $\calP$ if the endpoints $u$ and $v$ belongs to different parts of $\calP$; and
a negative edge is in disagreement with $\calP$ if the endpoints $u$ and $v$ belongs to the same part of $\calP$.
The vector of disagreements, denoted by $\disagree(\calP, E^+, E^-)$, is a $|V|$ dimensional vector
indexed by elements of $V$. Its coordinate $v$ equals
$$
\disagree_u(\calP, E^+, E^-) =  \smashoperator{\sum_{v:(u,v)\in E^+\cup E^-}} w_{uv} \ONE((u,v) \text{ is in disagreement with }\calP).
$$
That is, $\disagree_u(\calP, E^+, E^-)$ is the weight of disagreeing edges incident to $u$. We similarly define a cut vector
for a set of edges $E$:
$$
\cut_u(\calP, E) = \smashoperator{\sum_{v:(u,v)\in E}} w_{uv} \ONE(u \text{ and } v \text{ are separated by }\calP).
$$
We use the standard definition for the $\ell_q$ norm of a vector $x$: $\|x\|_q= (\sum_u x_u^q)^{\frac{1}{q}}$ and
$\|x\|_{\infty}= \max_u x_u$. For a partition $\calP$, we denote by $\calP(u)$ the piece that contains vertex $u$.

\begin{definition}
In the Correlation Clustering problem with $\ell_q$ objective, we are given a graph $G$ on a set $V$ with
two disjoint set of edges $E^+$ and $E^-$ and a set of weights $w_{uv}$. The goal is find a partition $\calP$ that minimizes the $\ell_q$
norm of the disagreements vector, $\|\disagree(\calP, E^+, E^-)\|_q$.
\end{definition}

In our algorithm for Correlation Clustering on arbitrary graphs, we will use a powerful technique of padded
metric space decompositions~(see e.g., \citet*{Bartal96, Rao99, FT03, GKL03}).
\begin{definition}[Padded Decomposition]
Let $(X, d)$ be a metric space on $n$ points, and let $\Delta > 0$. A probabilistic distribution of partitions $\cP$ of $X$ is called a padded decomposition if it satisfies the following properties:
\begin{itemize}
\item Each cluster $C \in \cP$ has diameter at most $\Delta$.
\item For every $u \in X$ and $\varepsilon >  0$,
$\pr(\Ball(u, \delta) \not\subset \cP(u)) \leq D\cdot \frac{\delta}{\Delta}$
where $\Ball(u, \delta) = \{v \in X : d(u,v) \leq \delta\}$
\end{itemize}
\end{definition}

\begin{theorem}[\citet*{FRT03}]\label{prelim:thm:padded-decomposition}
Every metric space $(X,d)$ on $n$ points admits a $D=O(\log n)$ separating padded decomposition. Moreover, there is a polynomial-time algorithm that samples
a partition from this distribution.
\end{theorem}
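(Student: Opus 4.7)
The plan is to invoke the classical Calinescu--Karloff--Rabani (CKR) random partitioning scheme, which is known to yield padded decompositions with the claimed $D=O(\log n)$ parameter. Concretely, I would sample a radius $R$ uniformly from the interval $[\Delta/4,\Delta/2]$ and, independently, sample a uniformly random permutation $\pi$ of $X$. Then I sweep through $X$ in the order prescribed by $\pi$: when a vertex $v$ is processed, I create a new cluster consisting of every not-yet-assigned $w \in X$ with $d(v,w)\le R$. This procedure runs in polynomial time and partitions $X$. Since each resulting cluster lies inside a ball of radius $R\le \Delta/2$, its diameter is at most $\Delta$, so the first property is immediate.

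For the padding guarantee, fix $u\in X$ and $\delta>0$, and enumerate the points of $X$ as $v_1,v_2,\ldots,v_n$ in order of increasing distance from $u$. The ball $\Ball(u,\delta)$ fails to be contained in $\calP(u)$ only if some $v_i$ is the first vertex in the order $\pi$ that ``reaches'' $\Ball(u,\delta)$, meaning $R\ge d(v_i,u)-\delta$, yet does not engulf the entire ball, meaning $R<d(v_i,u)+\delta$. Following the CKR charging argument, I attribute any cut to this unique $v_i$. I would then bound $\pr[v_i \text{ cuts } \Ball(u,\delta)]$ as the product of two probabilities, exploiting independence of $R$ and $\pi$: (a) $R\in[d(v_i,u)-\delta,\,d(v_i,u)+\delta]$, which has probability at most $O(\delta/\Delta)$ since the interval has length $2\delta$ while $R$ is uniform over an interval of length $\Delta/4$; and (b) conditionally on (a), $v_i$ precedes $v_1,\ldots,v_{i-1}$ in $\pi$, which has probability exactly $1/i$ since the points $v_1,\ldots,v_i$ are the only candidates whose balls of radius $R$ can possibly reach $u$ before $v_i$'s does. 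Summing over $i$ gives $\sum_{i=1}^n \frac{O(\delta)}{i\,\Delta}=O(\log n)\cdot \delta/\Delta$, which is the desired bound.

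The main obstacle is making the charging argument rigorous: one has to verify that attributing every cutting event to a single ``first intruder'' $v_i$ is well-defined and loses nothing, and one has to identify the correct index set over which the sum runs (only those $v_i$ with $d(v_i,u)\le \Delta/2+\delta$ can possibly trigger event (a), and for $\delta\ge \Delta/(8D)$ the claim is vacuous so we may assume $\delta$ is small enough that event (a) really does have probability $O(\delta/\Delta)$). The independence of $R$ and $\pi$, combined with the observation that the conditional law of $\pi$ restricted to $\{v_1,\ldots,v_i\}$ is still uniform, is what turns the charging into a clean product bound, after which the harmonic sum $\sum_{i=1}^n 1/i$ delivers the $O(\log n)$ factor.
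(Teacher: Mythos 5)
The paper states this theorem as a citation to \citet*{FRT03} and does not give a proof, so there is no in-paper argument to compare against. Your reconstruction via the Calinescu--Karloff--Rabani random-shift scheme is correct and standard: sampling $R$ uniformly from $[\Delta/4,\Delta/2]$ and a uniformly random permutation $\pi$, and forming clusters by sweeping in $\pi$-order, each cluster lies inside a ball of radius at most $\Delta/2$ and so has diameter at most $\Delta$. The padding bound is the classical CKR charging argument: if $\Ball(u,\delta)$ is cut, the first vertex $v_i$ in $\pi$-order whose ball $B(v_i,R)$ meets $\Ball(u,\delta)$ must satisfy $R\in[d(v_i,u)-\delta,\,d(v_i,u)+\delta)$ (it touches without engulfing), and must precede $v_1,\dots,v_{i-1}$ in $\pi$ (otherwise some closer point, which touches at the same $R$, would have been processed first). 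By independence of $R$ and $\pi$, the probability of this joint event is at most $(8\delta/\Delta)\cdot(1/i)$, and summing the harmonic series gives $D=8H_n=O(\log n)$. Two minor remarks: the caveat that the claim is vacuous for $\delta\ge\Delta/(8D)$ is unnecessary, since $\pr[R\in[a_i,b_i)]\le 8\delta/\Delta$ is a valid upper bound even when the right-hand side exceeds $1$; and the paper's citation to FRT03 points to a multi-scale hierarchical random partition (built for tree embeddings), whereas your CKR argument is single-scale --- both are legitimate sources for the $O(\log n)$ padded decomposition, with CKR being the more commonly cited one for exactly the statement at hand.
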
 

\section{Convex Relaxation}\label{sec:lp}
%%\openLP%%\closeLP
\begin{figure*}
  \centering
\begin{equation*}
\begin{array}{ll@{}llr}
\text{minimize}  & \displaystyle \max\Big(\|y\|_q, \big(\sum\limits_{u \in V} z_u\big)^{\frac{1}{q}}\Big) \tag{P}\\
\text{subject to}
& y_u=\displaystyle\sum_{v:(u,v)\in E^+} w_{uv} x_{uv} + \sum_{v:(u,v)\in E^-} w_{uv} (1 - x_{uv}) \quad &\text{for all } u \in V & \text{(P1)}\\
& z_u=\displaystyle\sum_{v:(u,v)\in E^+} w^q_{uv} x_{uv} + \sum_{v:(u,v)\in E^-} w^q_{uv} (1 - x_{uv}) \quad &\text{for all } u \in V& \text{(P2)}\\ \\
&x_{v_1v_2}+x_{v_2v_3} \geq x_{v_1v_3} \quad &\text{for all } v_1,v_2,v_3 \in V&\text{(P3)}\\
&x_{uv} = x_{vu}\quad &\text{for all } u,v \in V& \text{(P4)}\\
& x_{uv} \in [0,1] &\text{for all } u,v \in V & \text{(P5)}\\
\end{array}
\end{equation*}
  \caption{Convex relaxation for Correlation Clustering with min $\ell_q$ objective for $q < \infty$.}\label{figure:LPRelaxation}
\end{figure*}

In our algorithms for minimizing $\ell_q$ disagreements in arbitrary and complete graphs, we use a convex relaxation given in Figure~\ref{figure:LPRelaxation}. Our convex relaxation for
Correlation Clustering is fairly standard. It is similar to relaxations used in the papers by~\citet*{GVY96, DEFI06, CGW03}. For every pair
of vertices $u$ and $v$, we have a variable $x_{uv}$ that is equal to the distance between $u$ and $v$ in the ``multicut metric''.
Variables $x_{uv}$ satisfy the triangle inequality constraints~(P3). They are also symmetric~(P4) and $x_{uv}\in [0,1]$~(P5). Thus, the set of
vertices $V$ equipped with the distance function $d(u,v)= x_{uv}$ is a metric space.

Additionally, for every vertex $u\in V$, we have variables $y_u$ and $z_u$ (see constraints~(P1) and (P2)) that lower bound the number of disagreeing edges
incident to $u$. The objective of our convex program is to minimize $\max(\|y\|_q, (\sum_{u} z_u)^{\frac{1}{q}})$. Note that all constraints in
the program (P) are linear; however, the objective function of (P) is not convex as is. So in order to find the optimal solution, we raise
the objective function to the power of $q$ and find feasible $x,y,z$ that minimizes the objective $\max(\|y\|^q_q, \sum_{u} z_u)$.

Let us verify that program (P) is a relaxation for Correlation Clustering. Consider an arbitrary partitioning $\calP$ of $V$.
In the integral solution corresponding to $\calP$, we set $x_{uv} = 0$ if $u$ and $v$ are in the same cluster in $\calP$; and $x_{uv} = 1$ if $u$ and $v$ are
in different clusters in $\calP$. In this solution, distances $x_{uv}$ satisfy triangle inequality constraints~(P3) and
$x_{uv} = x_{vu}$ (P4). Observe that a positive edge $(u,v)\in E^+$ is in disagreement with $\calP$ if  $x_{uv} = 1$; and a negative edge
$(u,v)\in E^-$ is in disagreement if  $x_{uv} = 0$. Thus, in this integral solution, $y_u = \disagree_u(\calP, E^+,E^-)$ and
moreover, $z_u \leq y^q_u$. Therefore, in the integral solution corresponding to $\calP$, the objective function of (P) equals
$\|\disagree_u(\calP, E^+,E^-)\|_q$. Of course, the cost of the optimal fractional solution to the problem may be
less than the cost of the optimal integral solution. Thus, (P) is a relaxation for our problem. Below, we
denote the cost of the optimal fraction solution to (P) by $LP$.

We remark that we can get a simpler relaxation by removing variables $z$ and changing the objective function to $\|y\|_q$.
This relaxation also works for $\ell_{\infty}$ norm. We use it in our 5-approximation algorithm. 

\section{Correlation Clustering on Arbitrary Graphs}

In this section, we describe our algorithm for minimizing $\ell_q$ disagreements on arbitrary graphs. We will prove the following main theorem.

\begin{theorem}\label{thm:arbit-Graphs}
There exists a randomized polynomial-time $O(n^{\frac{q-1}{2q}}\log^{\frac{q+1}{2q}} n)$ approximation algorithm for Correlation Clustering with the $\ell_q$ objective ($q\geq 1$).
\end{theorem}

We remark that the same algorithm gives $O(\sqrt{n\log n})$ approximation for the $\ell_{\infty}$ norm. We omit the details in the conference version of the paper.

\noindent Our algorithm relies on a procedure for partitioning arbitrary metric spaces into pieces of small diameter, which we describe first.

\subsection{Algorithm for Partitioning Metric Spaces}\label{sec:partition-metric-spaces}
In this section, we will prove the following main theorem,

% we present our algorithm for partitioning arbitrary metric spaces into pieces of small diameter and its analysis. Note that this proves Theorem~\ref{thm:part-metric-spaces}.

% we prove the following theorem.
\begin{theorem}\label{thm:part-metric-spaces}
There exists a polynomial-time randomized algorithm that given a metric space $(X,d)$ on $n$ points and parameter $\Delta$
returns a random partition $\calP$ of $X$ such that the diameter of every set $P$ in $\cP$ is at most $\Delta$ and for every $q\geq 1$ ($q\neq \infty$) and
every weighted graph $G=(X,E,w)$, we have
\begin{multline}\label{eq:thm:part-metric-spaces}
\bE\Big[\|\cut(\cP, E)\|_q\Big] \leq C n^{\frac{q-1}{2q}}\log^{\frac{q+1}{2q}} n \cdot \Big[
\Big(\sum_{u\in X}\smashoperator[r]{\sum_{v:(u,v) \in E}}  w^q_{uv} \frac{d(u,v)}{\Delta}\Big)^{1/q} + \\+
\Big(\sum_{u\in X}\Big(\sum_{v:(u,v) \in E}  w_{uv} \frac{d(u,v)}{\Delta} \Big)^q\Big)^{1/q} \Big],
\end{multline}
for some absolute constant $C$.
\end{theorem}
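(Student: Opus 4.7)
The algorithm I would use is simply the padded decomposition of Theorem~\ref{prelim:thm:padded-decomposition} applied at scale $\Delta$: by construction every cluster of the random partition $\calP$ has diameter at most $\Delta$, and for each edge $(u,v)\in E$ the indicator $X_{uv}=\mathbf{1}((u,v)\text{ is cut by }\calP)$ satisfies $\Pr[X_{uv}=1]\le D\cdot d(u,v)/\Delta$ with $D=O(\log n)$. The substance of the theorem is therefore the probabilistic bound \eqref{eq:thm:part-metric-spaces} on $\mathbb{E}\|\cut(\calP,E)\|_q$.

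First, I pass from the $\ell_q$ norm to expected $q$-th moments via Jensen's inequality (concavity of $t\mapsto t^{1/q}$), obtaining $\mathbb{E}\|\cut(\calP,E)\|_q\le(\sum_{u}\mathbb{E}\,\cut_u^q)^{1/q}$. Writing $\cut_u=\sum_{v:(u,v)\in E}w_{uv}X_{uv}$ and $N_u=\sum_{v}X_{uv}$ (the unweighted cut-degree), Hölder's inequality on the sum, using $X_{uv}^r=X_{uv}$ for all $r>0$, gives the pointwise estimate $\cut_u^q\le N_u^{q-1}\sum_{v}w_{uv}^q X_{uv}$. This already isolates the weight-$q$ quantity (the progenitor of $T_1$) from an unweighted cut-degree factor $N_u^{q-1}$ that must be controlled probabilistically.

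The central step of the proof is then a Cauchy-Schwarz over the sum in $u$, $\sum_u N_u^{q-1}\sum_v w_{uv}^q X_{uv}\le\bigl(\sum_u N_u^{2(q-1)}\bigr)^{1/2}\bigl(\sum_u(\sum_v w_{uv}^q X_{uv})^2\bigr)^{1/2}$, which is where the square-root saving appears: the deterministic bound $N_u\le n$ enters the first factor as $n^{q-1}$ under a square root, and a further $1/q$-th root at the outer level turns this into the $n^{(q-1)/(2q)}$ in the theorem. After pushing expectations inside both square roots (Jensen again), the first factor is controlled by $n^{(q-1)/2}$ times $\sum_u\mathbb{E}\,N_u=2\sum_{(u,v)\in E}p_{uv}$, which together with the power-mean inequality $\|\hat y\|_1\le n^{(q-1)/q}\|\hat y\|_q=n^{(q-1)/q}T_2$ (where $\hat y_u=\sum_v w_{uv}d(u,v)/\Delta$) yields the $T_2$ contribution; the second factor, after a further use of the marginal bound $\Pr[X_{uv}=1]\le D\,d(u,v)/\Delta$ and a Cauchy-Schwarz over $v$, yields the $T_1$ contribution. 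The triangle inequality in $\ell_q$ then separates the geometric mean of these two contributions into the sum $T_1+T_2$ in the statement, and the $O(\log n)$ factors accumulated from the two applications of the padding property collapse into the exponent $(q+1)/(2q)$ of the logarithm.

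The main obstacle is the lack of independence of $\{X_{uv}\}_v$ for edges sharing an endpoint: under any padded decomposition these events are positively correlated (vertices close to $u$ tend to be assigned to $\calP(u)$ together), so direct variance- or sub-Gaussian moment bounds on $\cut_u$ do not apply, and a naive bound $\cut_u^q\le n^{q-1}\sum_v w_{uv}^q X_{uv}$ would only yield the suboptimal exponent $n^{(q-1)/q}$. The route through Hölder plus Cauchy-Schwarz is designed precisely to use only the marginal cut probabilities together with the deterministic ceiling $N_u\le n$, thereby sidestepping any need to reason about the joint distribution of the $X_{uv}$.
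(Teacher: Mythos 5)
There are two substantive gaps between your outline and a correct proof.

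\paragraph{The algorithm is not the bare padded decomposition.}
You propose to run the FRT padded decomposition once at scale $\Delta$ and analyse it directly. The paper's algorithm (Figure~\ref{fig:Alg1}) is different in an essential way: it draws a padded decomposition, computes the size of the $\varepsilon$-neighborhood of the boundary with $\varepsilon = 1/\sqrt{2Dn}$, and \emph{rejects} the draw unless $|N_\varepsilon(\partial\cP)|\leq M=2D\varepsilon n/\Delta$. The marginal bound $\Pr[X_{uv}=1]\leq D\,d(u,v)/\Delta$ alone does not control the per-vertex count $N_u=\sum_v X_{uv}$ beyond the trivial $N_u\leq n$; one needs the conditioning event $\cE=\{|N_\varepsilon(\partial\cP)|\leq M\}$ to obtain a nontrivial uniform cap. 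In the paper's analysis this cap is only needed for \emph{short} edges ($d(u,v)\leq\varepsilon$), where the observation is that any short cut edge at $u$ has its other endpoint in $N_\varepsilon(\partial\cP)$; on $\cE$ this gives at most $M\approx\sqrt{n\log n}/\Delta$ nonzero short $X_{uv}$'s, and then Lemma~\ref{lem:jensen-bound} gives $M^{q-1}$ rather than $n^{q-1}$, which after the $1/q$-th root is exactly the stated $n^{(q-1)/(2q)}\log^{(q-1)/(2q)}n$. Long edges ($d(u,v)>\varepsilon$) are handled separately: their total weight at $u$ is at most $\sum_v w_{uv}d(u,v)/\varepsilon$, and H\"older brings out a factor $1/\varepsilon^{q-1}=(2Dn)^{(q-1)/2}$, again yielding $n^{(q-1)/(2q)}$ after the outer root. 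Your sketch has no short/long split and no rejection step, and these are precisely the ingredients that manufacture the $\sqrt n$-type saving.

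\paragraph{The Cauchy--Schwarz route does not close.}
Two concrete problems appear in the chain you describe. First, the passage from $\bigl(\sum_u N_u^{2(q-1)}\bigr)^{1/2}$ to ``$n^{(q-1)/2}$ times $\sum_u \mathbb{E}N_u$'' uses $N_u^{2(q-1)}\leq n^{q-1}N_u$, i.e.\ $N_u^{2q-3}\leq n^{q-1}$; this holds when $N_u\leq n$ only for $q\leq 2$, and fails for larger $q$ (e.g.\ $q=3$ requires $N_u\leq n^{2/3}$). Second, and more fundamentally, $\sum_u\mathbb{E}N_u=2\sum_{(u,v)\in E}p_{uv}\leq 2D\sum_{(u,v)\in E}d(u,v)/\Delta$ is an \emph{unweighted} quantity, while $\|\hat y\|_1=\sum_u\sum_v w_{uv}d(u,v)/\Delta$ carries the edge weights. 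These are not comparable without additional assumptions on $w$ (they agree only when all $w_{uv}=1$), so the claimed passage to the $T_2$ term via a power-mean inequality does not go through for general weighted graphs. The H\"older step $\cut_u^q\leq N_u^{q-1}\sum_v w_{uv}^q X_{uv}$ irrevocably strips the weights out of the $N_u$ factor, and no amount of averaging in $u$ puts them back; this is one reason the paper works with the weighted quantity $\sum_v w_{uv}d(u,v)/\varepsilon$ directly for long edges instead of a weight-free cut-degree. Finally, the parenthetical remark that ``the triangle inequality in $\ell_q$ separates the geometric mean into the sum $T_1+T_2$'' does not follow: a geometric mean $(F_1F_2)^{1/(2q)}$ is controlled by $\max(F_1^{1/q},F_2^{1/q})$, not generally by a fixed multiple of $T_1+T_2$, unless one has already shown $F_1^{1/q}\lesssim T_2$ and $F_2^{1/q}\lesssim T_1$ with the correct $n$- and $\log$-factors, which is exactly what is missing.

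\paragraph{What the paper does instead.}
Normalize $\Delta=1$. Run Algorithm~\ref{alg:metric-decomposition}, condition on success $\cE$ (probability $\geq 1/2$, by Markov on $\mathbb{E}|N_\varepsilon(\partial\cP)|\leq D\varepsilon n$), so that $\mathbb{E}[X_{uv}\given\cE]\leq 2D\,d(u,v)$. Split $E=\Eshort\cup\Elong$ at length $\varepsilon$. For $\Elong$, use H\"older with the deterministic weight bound $\sum_{v:(u,v)\in\Elong}w_{uv}\leq\varepsilon^{-1}\sum_v w_{uv}d(u,v)$ to get $\mathbb{E}[\|\cut(\cP,\Elong)\|_q^q\given\cE]\leq 2D\varepsilon^{-(q-1)}\sum_u(\sum_v w_{uv}d(u,v))^q$. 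For $\Eshort$, use the cap $M$ and Lemma~\ref{lem:jensen-bound} to get $\mathbb{E}[\|\cut(\cP,\Eshort)\|_q^q\given\cE]\leq 2D M^{q-1}\sum_u\sum_v w_{uv}^q d(u,v)$. Take $q$-th roots (Jensen), add by the triangle inequality for $\ell_q$, and substitute $\varepsilon=1/\sqrt{2Dn}$ and $M=2D\varepsilon n$ to produce the common prefactor $Cn^{(q-1)/(2q)}\log^{(q+1)/(2q)}n$. Unconditioning costs a factor 2 since $\Pr(\cE)\geq 1/2$, and the (low-probability) case of $\lceil\log_2 n\rceil$ consecutive failures is handled by a crude deterministic fallback (Lemma~\ref{lem:simple-n-approx-alg}).
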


We remark that our algorithm also works for $q=\infty$. Indeed, the behaviour of the algorithm does not depend on $q$ (in fact,
$q$ is not even a part of the algorithm's input). Hence, inequality~(\ref{eq:thm:part-metric-spaces}) holds for any $q<\infty$. In the limit
as $q$ tends to infinity, we get the following result.
%% begin conf version only %%%
\confversionOnly{
We provide the details in the full version of the paper (see supplemental materials for details).}
%%% end conf version only %%%
%%% begin full version only %%%
\fullversionOnly{

\begin{corollary}
The following inequality holds for a random partition $\calP$ from Theorem~\ref{thm:part-metric-spaces}:
$$\bE\Big[\|\cut(\cP, E)\|_{\infty}\Big] \leq  C n^{\frac{1}{2}}\log^{\frac{1}{2}} n \cdot \Big[
\max_{(u,v)\in E} w \cdot \ONE (d(u,v)\neq 0)+ \max_{u\in V}\Big(\sum_{v:(u,v) \in E}  w_{uv} \frac{d(u,v)}{\Delta} \Big)\Big].
$$
\end{corollary}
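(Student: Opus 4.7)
The plan is to derive the corollary directly as the $q\to\infty$ limit of Theorem~\ref{thm:part-metric-spaces}. Since the algorithm producing $\calP$ does not depend on $q$, the random partition $\calP$ it returns satisfies inequality~(\ref{eq:thm:part-metric-spaces}) simultaneously for every finite $q\geq 1$, so it suffices to take the limit on both sides and identify each side explicitly.

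For the left-hand side, fix a realization of $\calP$. The vector $\cut(\calP, E)\in\mathbb{R}^V$ is finite-dimensional, non-negative, and bounded coordinate-wise by the total edge weight $W=\sum_{(u,v)\in E} w_{uv}$. Hence $\|\cut(\calP, E)\|_q \to \|\cut(\calP, E)\|_\infty$ pointwise as $q\to\infty$, and the norms are uniformly dominated by $|V|\cdot W$. The bounded convergence theorem then gives $\lim_{q\to\infty}\bE\!\left[\|\cut(\calP, E)\|_q\right] = \bE\!\left[\|\cut(\calP, E)\|_\infty\right]$.

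For the right-hand side, the prefactor $n^{(q-1)/(2q)}\log^{(q+1)/(2q)} n = n^{1/2 - 1/(2q)}\log^{1/2 + 1/(2q)} n$ converges to $n^{1/2}\log^{1/2} n$. The second bracketed expression is $(\sum_u g_u^q)^{1/q}$ with $g_u = \sum_{v:(u,v)\in E} w_{uv}\,d(u,v)/\Delta$, whose limit is $\max_u g_u$ by the standard $\ell_q\to\ell_\infty$ convergence for a fixed finite-dimensional non-negative vector. For the first bracketed expression, I would factor out $w^\star := \max_{(u,v)\in E,\,d(u,v)>0} w_{uv}$: terms with $d(u,v)=0$ vanish, and the surviving contribution satisfies $(w^\star)^q\cdot c_0 \leq \sum_{u,v} w_{uv}^q\,d(u,v)/\Delta \leq 2|E|\cdot (w^\star)^q\cdot D_0$, where $c_0>0$ comes from a single maximizing edge and $D_0=\max_{(u,v)\in E} d(u,v)/\Delta$. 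Taking the $1/q$-th root and letting $q\to\infty$ sandwiches the limit to $w^\star$, which equals $\max_{(u,v)\in E} w_{uv}\cdot\ONE(d(u,v)\neq 0)$.

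Combining the three limits reproduces the inequality in the corollary. The only mildly delicate step is the first bracketed term, where $w_{uv}^q$ grows with $q$ and must be pulled out before the routine $(\text{bounded sum})^{1/q}\to 1$ argument applies; every other step is standard, so no genuine obstacle is expected.
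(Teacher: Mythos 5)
Your proposal is correct and follows essentially the same route as the paper, which obtains the corollary by letting $q\to\infty$ in Theorem~\ref{thm:part-metric-spaces} after noting that the algorithm (and hence the distribution of $\calP$) does not depend on $q$; you simply spell out the limit computations (bounded convergence on the left, $\ell_q\to\ell_\infty$ convergence and the sandwich for the $w_{uv}^q$ term on the right) that the paper leaves implicit. As a small simplification, the left-hand side needs no limit at all, since $\|\cut(\calP,E)\|_\infty\leq\|\cut(\calP,E)\|_q$ pointwise for every finite $q$.
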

}

We will need the following definition.

\begin{definition}%%[Neighborhood of a Set and Neighborhood of the Boundary of a Partition]
Let $(X,d)$ be a metric space. The $\varepsilon$-neighborhood of a set $S\subset X$ is the set of points at distance at most $\varepsilon$ from $S$:
$$N_\varepsilon(S) = \{u \in X: \exists v\in S \text{ such that } d(u,v) \leq \varepsilon\}.$$
The $\varepsilon$-neighborhood of the boundary of a partition $\calP$ is the set of points
$$
N_\varepsilon(\partial \cP) = \bigcup_{P\in\cP} (N_{\varepsilon}(P)\setminus P) =  \{ u \in X: \exists v\in X  \text{ s.t. }
d(u,v) \leq \varepsilon \text{ and } \cP(u) \neq \cP(v) \}.
$$
\end{definition}

We first describe an algorithm which succeeds with probability at least $1/2$ and fails with probability at most $1/2$. If the algorithm succeeds it outputs
a random partition $\calP$ of $X$ such that the diameter of every set $P$ in $\cP$ is at most $\Delta$ and for every $q$ and every weighted
graph $G=(X,E,w)$, we have
\begin{multline}\label{eq:cond-bound-on-partition}
\bE\Big[\|\cut(\cP, E)\|_q\given \text{algorithm succeeds}\Big] \leq
C' n^{\frac{q-1}{2q}}\log^{\frac{q+1}{2q}} n \cdot \Big(\sum_{u\in X}
\smashoperator[r]{\sum_{v:(u,v) \in E}}  w^q_{uv} \frac{d(u,v)}{\Delta}\Big)^{1/q} +\\
\Big(\sum_{u\in X}\Big(\sum_{v:(u,v) \in E}  w_{uv} \frac{d(u,v)}{\Delta} \Big)^q\Big)^{1/q}.
\end{multline}
To obtain a valid partition with probability 1, we repeat our algorithm for at most $\roundup{\log_2 n}$ iterations till it succeeds and
output the obtained solution. If the algorithm does not succeed after $\roundup{\log_2 n}$ iterations (which happens with probability at
most $1/n$), we partition the graph using a simple deterministic procedure which we describe in the end of this section.

Our algorithm is based on the procedure for generating bounded padded stochastic decompositions (see Section~\ref{sec:prelim}). First, the algorithm picks a random padded decomposition $\calP$ of the metric space $X$. Then, it finds the $\varepsilon$-neighborhood
 $N_\varepsilon(\partial \cP)$ of the boundary of $\calP$.
Finally, it outputs $\calP$ if $|N_\varepsilon(\partial \cP)| \leq 2 D \varepsilon/\Delta$ and fails otherwise. We present a pseudo-code for our algorithm in
Figure~\ref{fig:Alg1}.
\begin{figure}
\notarxiv{\begin{center}}

\openLP
\smallskip

\begin{algorithm}%%[H]
%%\SetAlgoNoLine
%%\SetKwInOut{Input}{Input}
%%\SetKwInOut{Output}{Output}
\smallskip
\noindent\textbf{Input: } metric space $(X,d)$ and parameter $\Delta > 0$.\notarxiv{\\}

\noindent\textbf{Output: } a random partition $\calP$ of $X$. %%\\

\begin{enumerate}
  \item Let $D=O(\log n)$ be the parameter from Theorem~\ref{prelim:thm:padded-decomposition},
   $\varepsilon = 1/\sqrt{2Dn}$ and $M=2D\varepsilon n/\Delta$.
  \item Draw a random padded decomposition $\cP$ of the metric space $(X, d)$ with parameter $\Delta$ using Theorem~\ref{prelim:thm:padded-decomposition}.
  \item Find the neighborhood $N_\varepsilon(\partial \cP)$ of the partition boundary.
  \item If $|N_\varepsilon(\cP)| \leq M$ then output $\cP$; else fail.
\end{enumerate}
%%\caption{Metric decomposition algorithm}\label{alg:metric-decomposition}
\end{algorithm}
\closeLP
\notarxiv{\end{center}}
\caption{Metric decomposition algorithm.}\label{fig:Alg1}\label{alg:metric-decomposition}
\end{figure}

\subsection{Analysis}

Our algorithm is scale invariant i.e., its output  does not change if we multiply all distances in the metric space $(X,d)$ and the parameter $\Delta$ by some
positive number $\lambda$. Thus, for the sake of analysis, we assume that $\Delta = 1$. Algorithm~\ref{alg:metric-decomposition} succeeds when $N_\varepsilon(\cP)$ has size at most $M$.
Denote this event by $\cE$. We first show that $\pr(\cE) \geq 1/2$.
\begin{lemma}
Algorithm~\ref{alg:metric-decomposition} succeeds with probability at least~$\nicefrac{1}{2}$.
\end{lemma}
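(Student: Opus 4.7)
The plan is to identify when a vertex $u$ lies in $N_\varepsilon(\partial \cP)$ directly in terms of the padded decomposition property, then apply linearity of expectation and Markov's inequality.

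First I would observe that the event $\{u \in N_\varepsilon(\partial \cP)\}$ is exactly the event $\{\Ball(u,\varepsilon) \not\subset \cP(u)\}$: if some $v$ with $d(u,v) \leq \varepsilon$ lies in a different part of $\cP$ than $u$, then $\Ball(u,\varepsilon)$ is not contained in $\cP(u)$, and conversely. This is the whole point of casting things in terms of the neighborhood of the boundary; it converts a global structural condition on the partition into a per-vertex condition to which Theorem~\ref{prelim:thm:padded-decomposition} applies directly.

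Next I would invoke the padded decomposition guarantee with $\delta = \varepsilon$ and $\Delta = 1$ (using scale invariance to reduce to this case), which gives $\pr(u \in N_\varepsilon(\partial \cP)) \leq D\varepsilon$ for every fixed $u \in X$. Summing over the $n$ points of $X$ and using linearity of expectation yields
\[
\bE\bigl[|N_\varepsilon(\partial \cP)|\bigr] \;\leq\; D n \varepsilon \;=\; M/2,
\]
since $M = 2 D \varepsilon n / \Delta = 2 D \varepsilon n$ in our normalized setup.

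Finally, a one-line application of Markov's inequality gives
\[
\pr\bigl(|N_\varepsilon(\partial \cP)| > M\bigr) \;\leq\; \frac{\bE[|N_\varepsilon(\partial \cP)|]}{M} \;\leq\; \frac{1}{2},
\]
so the algorithm succeeds with probability at least $1/2$. There is essentially no obstacle here: the real work is in the definition of $M$, which is already calibrated precisely so that Markov's inequality yields the constant $\nicefrac{1}{2}$. The harder part of the theorem (which will come in subsequent lemmas) is controlling the $\ell_q$ norm of the cut vector conditional on success, not this probability bound itself.
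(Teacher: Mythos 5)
Your proof is correct and follows exactly the same route as the paper: characterize membership in $N_\varepsilon(\partial\cP)$ via the padded-decomposition event $\{\Ball(u,\varepsilon)\not\subset\cP(u)\}$, bound $\bE[|N_\varepsilon(\partial\cP)|]$ by $D\varepsilon n = M/2$, and apply Markov's inequality. No gaps.
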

\begin{proof}
Let $\bar\cE$ be the complement of the event $\cE$. We need to show that $\pr(\bar\cE)\leq 1/2$. To this end, we bound
the expected size of the set $N_\varepsilon(\cP)$ using the second property of padded decompositions:
\begin{align*}
     \bE[|N_\varepsilon(\partial \cP)|]
     &= \sum_{u \in X} \pr(u \in N_\varepsilon(\partial\cP))\\
     &= \sum_{u \in X} \pr(\Ball(u, \varepsilon) \not\subset \cP(u))\\
     &\leq \sum_{u \in X} D\varepsilon = D\varepsilon n.
\end{align*}
Here, we used that $u \in N_\varepsilon(\partial\cP)$ if and only if $\Ball(u, \varepsilon) \not\subset \cP(u)$.
Now, by Markov's inequality,
$$ \pr(\bar\cE) =  \pr(|N_\varepsilon(\partial\cP)| > \underbrace{2D\varepsilon n}_M) \leq \frac{D\varepsilon n}{2D\varepsilon n} = \frac{1}{2}.$$
\end{proof}

Let $X_{uv}$ be the indicator of the event $\{\calP(u)\neq \calP(v)\}$ i.e., the event that points $u$ and $v$ are separated by the partition $\calP$.
By the second property of padded stochastic decompositions, we have $\bE(X_{uv})=\pr(\calP(u)\neq \calP(v))\leq D\cdot d(u,v)$.
Since $\pr(\calE)\geq 1/2$, for each $(u,v)\in E$, we have
$$\bE[X_{uv}\given \calE]\leq \frac{\bE[X_{uv}]}{\pr(\calE)}\leq 2\bE[X_{uv}] \leq 2D\cdot d(u,v).$$
Consequently,
\begin{align}
\bE[w_{uv}X_{uv}\given \calE]&\leq 2D\cdot w_{uv} d(u,v) \;\;\;\text{ and } \label{eq:ineq-for-EX-uv-A}\\
\bE[w^q_{uv}X^q_{uv}\given \calE] &\leq 2D\cdot w^q_{uv} d(u,v). \label{eq:ineq-for-EX-uv-B}
\end{align}

We split all edges $E$ into two groups: short edges, which we denote by $\Eshort$, and long edges,
which we denote by $\Elong$. Short edges are edges of length at most $\varepsilon$; long edges are edges
of length greater than $\varepsilon$. Note that
$\cut(\cP, E) = \cut(\cP, \Eshort) +  \cut(\cP, \Elong)$.

For every subset $E'\subset E$ (in particular, for $E'= \Eshort$ and $E'=\Elong$), we have
\begin{equation}
\bE\Big[\|\cut(\cP, E')\|^q_q | \calE\Big] =
\sum_{u\in X} \bE\Big[\big(\smashoperator{\sum_{v:(u,v) \in E'}} w_{uv}X_{uv}\big)^q | \calE\Big].
\label{eq:formula-for-Lq}
\end{equation}\
We separately upper bound $\bE[\|\cut(\cP, \Eshort)\|^q_q\given \calE]$ and $\bE[\|\cut(\cP, \Elong)\|^q_q\given \calE]$ using
the formula above and inequalities (\ref{eq:ineq-for-EX-uv-A}), (\ref{eq:ineq-for-EX-uv-B}) and then
use the triangle inequality for $\ell_q$ norms to bound $\bE[\|\cut(\cP, E)\|_q\given \calE]$.

\medskip

\noindent\textbf{Long edges.} Fix a vertex $u$ and consider long edges incident to $u$. Their total weight is upper bounded by
\begin{align*}
\sum_{v:(u,v) \in \Elong} w_{uv}  &\leq \sum_{v:(u,v) \in \Elong} w_{uv}\;\underbrace{\frac{d(u,v)}{\varepsilon}}_{\geq 1}.
%%\\&=\varepsilon^{-1} \Big(\sum_{v:(u,v) \in \Elong}  w_{uv} d(u,v) \Big).
%%\smashoperator{\sum_{v:(u,v) \in \Elong}} w_{uv}  &\leq \smashoperator[r]{\sum_{v:(u,v) \in \Elong}} w_{uv}\;\underbrace{\frac{d(u,v)}{\varepsilon}}_{\geq 1} =
%%\varepsilon^{-1} \Big(\smashoperator{\sum_{v:(u,v) \in \Elong}}  w_{uv} d(u,v) \Big).
\end{align*}
Thus,
\begin{align*}
\Big(\smashoperator[r]{\sum_{v:(u,v) \in \Elong}}  w_{uv}X_{uv}\Big)^q &\leq \Big(\smashoperator[r]{\sum_{v:(u,v) \in \Elong}} w_{uv}\Big)^{q-1}\Big(\smashoperator[r]{\sum_{v:(u,v) \in \Elong}} w_{uv}X_{uv}\Big)\\
&\leq \Big(\smashoperator[r]{\sum_{v:(u,v) \in \Elong}}  \frac{w_{uv} d(u,v)}{\varepsilon} \Big)^{q-1} \Big(\smashoperator[r]{\sum_{v:(u,v) \in \Elong}} w_{uv} X_{uv}\Big).
\end{align*}
Plugging this expression into formula~(\ref{eq:formula-for-Lq}) with $E'=\Elong$ and using inequality~(\ref{eq:ineq-for-EX-uv-A}), we get the following upper bound on
$\bE\Big[\|\cut(\cP, \Elong)\|^q_q \Given \calE\Big]$:
$$
\sum_{u\in X}
\Big(\smashoperator[r]{\sum_{v:(u,v) \in \Elong}}  \frac{w_{uv} d(u,v)}{\varepsilon} \Big)^{q-1} \bE\Big[\smashoperator{\sum_{v:(u,v) \in \Elong}} w_{uv}X_{uv}\given \calE\Big]
 \leq\frac{2D}{\varepsilon^{q-1}} \sum_{u\in X}\Big(\sum_{v:(u,v) \in \Elong}  w_{uv} d(u,v) \Big)^q.
$$

Finally, by Jensen's inequality, we have
\begin{align}
\notag\bE\big[\|\cut(\cP,\Elong)\|_q\given \calE]&= \bE\big[(\|\cut(\cP, \Elong)\|^q_q)^{\frac{1}{q}}\given \calE\big]\\
&\notag\leq \Big(\bE\big[\|\cut(\cP, \Elong)\|^q_q\given \calE\big]\Big)^{\frac{1}{q}}\\
&\leq \Big(\frac{2D}{\varepsilon^{q-1}} \sum_{u\in X}\Big(\sum_{v:(u,v) \in \Elong}  w_{uv} d(u,v) \Big)^q\Big)^{\frac{1}{q}}.\label{eq:ineq-long}
\end{align}

\medskip
\noindent\textbf{Short edges.} To bound $||\cut(\cP, E_{short})||_q $, we will make use of the following lemma.
\begin{lemma}\label{lem:jensen-bound}
Consider non-negative (dependent) random variables $X_1,\dots, X_n$. Suppose that at most $M$ of them are non-zero with probability 1. Then, for every $q\geq 1$,
the following bound holds:
$$\bE\big[(X_1+\cdots+X_n)^q\big]\leq  M^{q-1}\sum_{i=1}^n \bE\big[X_i^q\big].$$
\end{lemma}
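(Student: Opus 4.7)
The plan is to establish the stated bound pointwise (almost surely), and then take expectations. Fix any sample point $\omega$ in the probability space and let $S(\omega) = \{i : X_i(\omega) > 0\}$; by hypothesis $|S(\omega)| \leq M$. Since all the summands are non-negative, $X_1(\omega)+\cdots+X_n(\omega) = \sum_{i \in S(\omega)} X_i(\omega)$, so it suffices to prove
$\bigl(\sum_{i \in S(\omega)} X_i(\omega)\bigr)^q \leq M^{q-1} \sum_{i \in S(\omega)} X_i(\omega)^q$.

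I would prove this pointwise inequality via Hölder's inequality (equivalently, the power-mean inequality). Writing $|S(\omega)|=m \leq M$ and applying Hölder with exponents $q$ and $q/(q-1)$ to the pair $(X_i)_{i \in S}$ and the constant $1$-sequence,
\[
\sum_{i \in S(\omega)} X_i(\omega) \;\leq\; \Bigl(\sum_{i \in S(\omega)} X_i(\omega)^q\Bigr)^{1/q} \cdot m^{(q-1)/q}.
\]
Raising both sides to the $q$-th power and using $m \leq M$ yields
\[
\Bigl(\sum_{i \in S(\omega)} X_i(\omega)\Bigr)^q \;\leq\; m^{q-1} \sum_{i \in S(\omega)} X_i(\omega)^q \;\leq\; M^{q-1} \sum_{i=1}^n X_i(\omega)^q,
\]
where in the last step we added back the zero terms for $i \notin S(\omega)$, which does not change the sum.

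Since this holds pointwise (with probability 1), I would then take expectations of both sides and use linearity of expectation on the right-hand side to conclude
$\bE[(X_1+\cdots+X_n)^q] \leq M^{q-1} \sum_{i=1}^n \bE[X_i^q]$.

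There is no real obstacle here: the only subtle point is that the random set $S$ of non-zero indices is itself random and possibly correlated with the $X_i$, which is why it is important to establish the inequality almost surely before taking expectations rather than trying to pull $M$ outside the expectation after the fact. The argument works uniformly for all $q \geq 1$ (the case $q=1$ is trivial with $M^{0}=1$).
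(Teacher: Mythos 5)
Your proof is correct and follows essentially the same approach as the paper: both establish the pointwise inequality $(\sum_{i \in S} X_i)^q \leq m^{q-1}\sum_{i \in S} X_i^q$ on the random support of size $m \leq M$ and then take expectations, the only cosmetic difference being that you invoke H\"older's inequality where the paper applies Jensen's inequality to the convex function $t \mapsto t^q$ (these yield the identical power-mean bound).
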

\begin{proof}
Let $x_{i_1}, \ldots, x_{i_m}$ be the non-zero random variables in a certain sampling of $X_1,\dots, X_n$ for some $m\leq M$. Suppose that $m\neq 0$.
Using Jensen's inequality, we have
$$\bigg( \frac{x_{i_1} +  \ldots + x_{i_m}}{m} \bigg)^q \leq \frac{1}{m} \sum_{j = 1}^m x_{i_j}^q,$$
and, therefore,
$$\bigg( {x_{i_1} +  \ldots + x_{i_m}} \bigg)^q \leq m^{q-1} \sum_{j = 1}^m x_{i_j}^q\leq M^{q-1} \sum_{j = 1}^m x_{i_j}^q.$$
The inequality above also holds when $m=0$. Thus, the expectation of the left hand side is upper bounded by the expectation of the right hand side.
This concludes the proof.
\end{proof}

Fix a vertex $u$. Observe that if $(u,v)$ is a short edge which is cut by $\calP$ then $v$ must belong to $N_\varepsilon(\partial\cP)$. Thus,
the number of non-zero random variables $X_{uv}$ for a given $u$ and $(u,v)\in\Eshort$ is upper bounded by $|N_\varepsilon(\partial\cP)|$.
If the algorithm succeeds, then $|N_\varepsilon(\partial\cP)| \leq M$. Thus, by Lemma~\ref{lem:jensen-bound},
$$\bE\big[\big(\smashoperator{\sum_{v:(u,v) \in \Eshort}}  w_{uv}X_{uv}\big)^q\given \calE \big]
\leq M^{q-1}\smashoperator{\sum_{v:(u,v) \in \Eshort}}  \bE\big[w_{uv}^qX_{uv}^q\given \calE \big].$$
Plugging this bound into formula~(\ref{eq:formula-for-Lq}) with $E'=\Eshort$ and using inequality~(\ref{eq:ineq-for-EX-uv-B}), we get
the following upper bound on $\bE\Big[\|\cut(\cP, \Eshort)\|^q_q \Given \calE\Big]$:
$$
\sum_{u\in X}\Big(M^{q-1}\smashoperator{\sum_{v:(u,v) \in \Eshort}}  \bE\big[w_{uv}^qX_{uv}^q\given \calE \big] \Big)\leq
2D\,M^{q-1} \sum_{u\in X}\sum_{v:(u,v) \in \Eshort}  w^q_{uv} d(u,v).
$$
Finally, by Jensen's inequality, we have
\begin{equation}\label{eq:ineq-short}
\bE[\|\cut(\cP, \Eshort)\|_q\given \calE] \leq \Big(2D\,M^{q-1}\Big)^{1/q} \Big(\sum_{u\in X}\sum_{v:(u,v) \in \Eshort}  w^q_{uv} d(u,v)\Big)^{1/q}.
\end{equation}

\medskip
To obtain the desired bound~(\ref{eq:cond-bound-on-partition}), we substitute $D=O(\log n)$, $\varepsilon = 1/\sqrt{2Dn}$, and $M=2D\varepsilon n/\Delta$ in bounds~(\ref{eq:ineq-long}) and~(\ref{eq:ineq-short}) and then apply the triangle inequality for the $\ell_q$ norm.

%%% end conf version only %%%
\confversionOnly{
To finish the proof of Theorem~\ref{thm:part-metric-spaces}, we need to describe what we do in the unlikely event that Algorithm~\ref{alg:metric-decomposition} fails
$\roundup{\log_2 n}$ times. In this case, we create a new graph on $X$ with edges between pairs of vertices at distance at most $1/n$ from each other and partition it
into connected components. We analyze this algorithm in the full version of the paper (see supplemental materials for details).}
%%% end conf version only %%%

%%% begin full version only %%%
\fullversionOnly{
To finish the proof of Theorem~\ref{thm:part-metric-spaces}, we describe what we do in the unlikely event that Algorithm~\ref{alg:metric-decomposition} fails
$\roundup{\log_2 n}$ times.
\begin{lemma}\label{lem:simple-n-approx-alg}
There exists a polynomial-time deterministic algorithm that given a metric space $(X,d)$ on $n$ points and parameter $\Delta$
returns a partition $\calP$ of $X$ such that the diameter of every set $P$ in $\cP$ is at most $\Delta$ and
for every $q$ and every weighted graph $G=(X,E,w)$, we have
$$\|\cut(\cP, E)\|_q \leq n \Big(\sum_{u\in X}\Big(\sum_{v:(u,v) \in E}  w_{uv} \frac{d(u,v)}{\Delta} \Big)^q\Big)^{1/q}.$$
\end{lemma}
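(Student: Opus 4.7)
By the scale invariance observed earlier in this section, it suffices to prove the lemma for $\Delta = 1$; the general case follows by rescaling distances and $\Delta$ by the same factor. The plan is to build a threshold graph and take connected components.

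The algorithm I would use is: form an auxiliary graph $H$ on vertex set $X$ by connecting $u$ and $v$ whenever $d(u,v) \leq 1/n$, and let $\cP$ be the partition of $X$ into connected components of $H$. This is clearly computable in polynomial time and is deterministic.

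The first step is to verify that every piece $P \in \cP$ has diameter at most $1 = \Delta$. Fix $P$ and two points $u, v \in P$. Since $u$ and $v$ lie in the same connected component of $H$, any spanning tree of $H[P]$ contains a path $u = x_0, x_1, \ldots, x_k = v$ with $k \leq |P| - 1 \leq n - 1$ and $d(x_i, x_{i+1}) \leq 1/n$. By the triangle inequality in $(X,d)$,
\[
d(u,v) \;\leq\; \sum_{i=0}^{k-1} d(x_i, x_{i+1}) \;\leq\; \frac{n-1}{n} \;<\; 1,
\]
so the diameter condition is satisfied.

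The second step is the cut bound. The key observation is that if an edge $(u,v) \in E$ is cut by $\cP$, then in particular $u$ and $v$ are not adjacent in $H$, which forces $d(u,v) > 1/n$, i.e.\ $1 < n\, d(u,v)/\Delta$. Therefore, for every vertex $u \in X$,
\[
\cut_u(\cP, E) \;=\; \smashoperator{\sum_{\substack{v:(u,v)\in E\\ \cP(u)\neq \cP(v)}}} w_{uv} \;\leq\; \smashoperator{\sum_{v:(u,v)\in E}} w_{uv}\,\frac{n\, d(u,v)}{\Delta} \;=\; n\smashoperator{\sum_{v:(u,v)\in E}} w_{uv}\,\frac{d(u,v)}{\Delta}.
\]
Raising to the $q$-th power, summing over $u$, and taking $q$-th roots yields exactly the stated inequality for all $q \geq 1$; the $q=\infty$ case follows by taking a maximum over $u$ in the same bound.

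There is no real obstacle here: this is a routine deterministic fallback procedure whose only content is the two elementary observations that a connected component of a $(1/n)$-threshold graph on $n$ points has small diameter (via the spanning-tree argument) and that any cut edge must be long relative to $1/n$. Combined with Algorithm~\ref{alg:metric-decomposition} and its $\lceil \log_2 n\rceil$ independent repetitions, this deterministic fallback, which is invoked with probability at most $1/n$, contributes only a lower-order term to the expected cost in~(\ref{eq:thm:part-metric-spaces}), completing the proof of Theorem~\ref{thm:part-metric-spaces}.
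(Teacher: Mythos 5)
Your proof is correct and takes essentially the same route as the paper: both build a threshold graph with edges between points at distance at most $\Delta/n$, take connected components, bound the diameter by a path of at most $n-1$ short hops, and bound the cut cost by observing that any cut edge has $d(u,v) > \Delta/n$, so $n\,d(u,v)/\Delta \geq 1$. The only difference is cosmetic: you spell out the spanning-tree argument and the normalization to $\Delta = 1$ explicitly, whereas the paper leaves these implicit.
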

\begin{proof}
Consider a graph $\tilde G = (X,\tilde E)$ on $X$ with edges $\tilde E = \{(u,v)\in X\times X: d(u,v) \leq \Delta/n\}$. The algorithm partitions
$\tilde G$ into connected components and outputs the result. Note that the diameter of each connected component $P\in \calP$
is less than $\Delta$, since the length of every edge in $\tilde G$ is less than $\Delta/n$. Let $E_{cut}$ be the set of cut edges in graph $G$.
If two vertices $(u,v)$ are separated by $\calP$, then $d(u,v) \geq \Delta/n$. Hence, for every cut edge $(u,v)\in E_{cut}$, we have $n d(u,v)/\Delta \geq 1$.
Thus,
\begin{align*}
\|\cut(\cP, E)\|_q &= \Big(\sum_{u\in X}\Big(\sum_{v:(u,v) \in E}  w_{uv}\Big)^q\Big)^{1/q}\\
&\leq  \Big(\sum_{u\in X}\Big(\sum_{v:(u,v) \in E}  w_{uv} \frac{n\cdot d(u,v)}{\Delta} \Big)^q\Big)^{1/q}\\
&= n  \Big(\sum_{u\in X}\Big(\sum_{v:(u,v) \in E}  w_{uv} \frac{d(u,v)}{\Delta} \Big)^q\Big)^{1/q}.
\end{align*}
\end{proof}
}
%%% end full version only %%%

\subsection{Proof of Theorem~\ref{thm:arbit-Graphs}}

We now show how to use the above metric space partitioning scheme to obtain an approximation algorithm for Correlation Clustering on arbitrary graphs. %Note that this proves Theorem~\ref{thm:arbit-Graphs}.

\begin{proof}[Proof of Theorem~\ref{thm:arbit-Graphs}]
Our algorithm first finds the optimal solution $x,y,z$ to the convex relaxation (P) presented in Section~\ref{sec:lp}. Then, it defines a metric $d(u,v)= x_{uv}$ on
the vertices of the graph. Finally, it runs the metric space partitioning algorithm with $\Delta = 1/2$ from Section~\ref{sec:partition-metric-spaces}
(see Theorem~\ref{thm:part-metric-spaces}) and outputs the obtained partitioning $\calP$.

Let us analyze the performance of this algorithm. Denote the cost of the optimal solution $x,y,z$ by $LP$. We know that the cost of the optimal solution
$OPT$ is lower bounded by $LP$ (see Section~\ref{sec:lp} for details). By Theorem~\ref{thm:part-metric-spaces}, applied to the graph $G=(V,E^+)$
(note: we ignore negative edges for now),
\begin{equation}\label{eq:thm:part-metric-spaces:approx-alg}
\bE\Big[\|\cut(\cP, E^+)\|_q\Big] \leq \frac{C}{\Delta} n^{\frac{q-1}{2q}}\log^{\frac{q+1}{2q}} n \cdot \Big(\big(\sum_{u\in V} y_u^q\big)^{\frac{1}{q}} +
 \big(\sum_{u\in V} z_u\big)^{\frac{1}{q}}\Big)\leq 4C n^{\frac{q-1}{2q}}\log^{\frac{q+1}{2q}} n \cdot LP.
\end{equation}
Recall that a positive edge is not in agreement if and only if it is cut. Hence, $\disagree(\cP,E^+,\varnothing) = \cut(\cP, E^+)$, and the bound above holds
for $\bE \|\disagree(\cP, E^+,\varnothing)\|_q $. By the triangle inequality,
$\bE\|\disagree(\cP, E^+,E^-)\|_q \leq \bE\|\disagree(\cP, E^+,\varnothing)\|_q + \bE\|\disagree(\cP, \varnothing, E^-)\|_q$.
Hence, to finish the proof, it remains to upper bound $\bE\|\disagree(\cP, \varnothing, E^-)\|_q$.

Observe that the diameter of every cluster returned by the algorithm is at most $\Delta = 1/2$. For all disagreeing
negative edges $(u,v)\in E^-$, we have $x_{uv}\leq 1/2$ and $1-x_{uv}\geq 1/2$. Thus, $\disagree_u(\cP, \varnothing, E^-)\leq 2y_u$ for
every $u$, and $\bE\|\disagree(\cP, \varnothing, E^-)\|_q\leq 2\|y\|_q\leq 2LP$.
This completes the proof.
\end{proof}

% \section{Missing proofs from Section 4.2}\label{sec:5apx}
\section{Correlation Clustering on Complete Graphs}\label{sec:cor-clust-complete}

In this section, we present our algorithm for Correlation Clustering on complete graphs and its analysis. Our algorithm
achieves an approximation ratio of $5$ and is an improvement over the approximation ratio of $7$ by \citet*{CGS17}.

\subsection{The Algorithm}
Our algorithm is based on rounding an optimal solution to the convex relaxation~(P). Recall that for complete graphs, we can get a simpler relaxation by removing the variables $z$ in our convex programming formulation. We start with considering the entire vertex set of
unclustered vertices. At each step $t$ of the algorithm, we select a subset of vertices as a cluster $C_t$ and remove it
from unclustered vertices. Thus, each vertex is assigned to a cluster exactly once and is never removed from a cluster once it is assigned.

For each vertex $w \in V$, let $\Ball(w,\rho) = \{u \in V : x_{uw} \leq \rho\}$ be the set of vertices within a distance of $\rho$ from $w$.
For $r = 1/5$ the quantity $r - x_{uw}$ where $u \in Ball(w,r)$ represents the distance from $u$ to the boundary of the ball of
radius $1/5$ around $w$. Let $V_t \subseteq V$ be the set of unclustered vertices at step $t$, and define
$$L_t(w) = \sum_{u \in \Ball(w,r) \cap V_t} r - x_{uw}.$$
At each step $t$, we select the vertex $w_t$ that maximizes the quantity $L_t(w)$ over all unclustered vertices $w\in V_t$ and select the set $Ball(w_t,2r)$ as a cluster.
We repeat this step until all the nodes have been clustered. A pseudo-code for our algorithm is given in Figure~\ref{fig:Alg2}.

\begin{figure}

\notarxiv{\begin{center}}
\begin{algorithm}%%[H]
\openLP
\smallskip

\noindent \textbf{Input: } Optimal solution $x$ to the linear program (P).\notarxiv{\\}

\noindent \textbf{Output: } Clustering $\calC$.
\medskip
\begin{enumerate}
  \item Let  $V_0 = V$, $r = 1/5$, $t = 0$.
  \item \textbf{while} ($V_t \neq \varnothing$)
  \begin{itemize}
\item Find $w_t = \argmax\limits_{w \in V_t} L_t(w)$.
\item Create a cluster $C_t = \Ball(w_t,2r)\cap V_t$.
\item Set $V_{t+1} = V_t \setminus C_t$ and $t = t+1$.
\end{itemize}
\item Return  $\calC = (C_0,\dots, C_{t-1})$.
\end{enumerate}
%% \caption{Correlation Clustering on complete graphs}\label{alg:corelation-complete}
\end{algorithm}
\notarxiv{\end{center}}
\closeLP

  \caption{Algorithm for Correlation Clustering on complete graphs.}\label{fig:Alg2}\label{alg:corelation-complete}
\end{figure}

\subsection{Analysis}
In this section, we present an analysis of our algorithm.

\begin{theorem}\label{thm:5-apx-main}
Algorithm 2 gives a $5$-approximation for Correlation Clustering on complete graphs.
\end{theorem}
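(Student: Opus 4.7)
The plan is to prove the pointwise bound $\disagree_u(\calC,E^+,E^-)\le 5\,y_u$ for every vertex $u$, where $y$ comes from the optimal solution to (P); taking $\ell_q$ norms on both sides then yields $\|\disagree\|_q\le 5\|y\|_q\le 5\cdot LP\le 5\cdot\OPT$, which is the theorem.

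Fix $u$ and let $C_t=\Ball(w_t,2r)\cap V_t$ be its cluster, so $x_{uw_t}\le 2r=2/5$. I split the disagreements at $u$ by edge sign. For a negative disagreement edge $(u,v)\in E^-$ with $v\in C_t$, the triangle inequality yields $x_{uv}\le x_{uw_t}+x_{vw_t}\le 4r=4/5$, hence $1-x_{uv}\ge 1/5=r$ and $w_{uv}\le 5\,w_{uv}(1-x_{uv})$. Summing over such edges bounds the negative part of $\disagree_u$ by $5$ times the negative part of $y_u$.

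For positive disagreements, the goal is to argue $x_{uv}\ge r$, so that $w_{uv}\le 5\,w_{uv}x_{uv}$ charges to the positive part of $y_u$. Fix a positive disagreement edge whose endpoints lie in different clusters, and let $t'$ be the earlier of the two steps. The endpoint sitting in $C_{t'}$---call it $a$---satisfies $x_{aw_{t'}}\le 2r$, while the other endpoint $b$ belonged to $V_{t'}$ but was not selected into $C_{t'}$, so $x_{bw_{t'}}>2r$. Triangle inequality gives $x_{ab}\ge x_{bw_{t'}}-x_{aw_{t'}}>2r-x_{aw_{t'}}$, which exceeds $r$ whenever $a$ lies in the inner ball $\Ball(w_{t'},r)$.

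The main obstacle---and the principal technical content---is handling those positive disagreement edges for which $a$ lies in the outer shell $\Ball(w_{t'},2r)\setminus\Ball(w_{t'},r)$, where the naive triangle inequality collapses to $x_{ab}\ge 0$. To handle this case I would exploit the algorithm's selection rule $L_{t'}(w_{t'})\ge L_{t'}(b)$: many short positive disagreement edges at $b$ (with $x_{ab}<r$) would inflate $L_{t'}(b)$ since such partners $a$ would lie in $\Ball(b,r)\cap V_{t'}$, and by maximality the inner ball $\Ball(w_{t'},r)\cap V_{t'}$ must contain comparable LP slack whose cost can absorb the charge. The delicate technical step is to design a charging scheme that routes each problematic disagreement to LP cost at an inner-ball vertex of a single earlier cluster without double-counting across clusters, while preserving the per-vertex factor-$5$ bound demanded by the $\ell_q$ norm---a looser per-total bound would suffice for the $\ell_1$ objective but not for higher $q$.
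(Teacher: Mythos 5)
You have the right overall plan --- prove the pointwise bound $\disagree_u(\calC,E^+,E^-)\le 5\,y_u$ and then take $\ell_q$ norms --- and you correctly dispatch the easy cases: negative disagreements via the $4r$-diameter bound, and positive disagreements $(u,v)$ where the in-cluster endpoint lies in the inner ball $\Ball(w_{t'},r)$. But you stop exactly at the hard case, writing ``the delicate technical step is to design a charging scheme'' rather than giving one. That step \emph{is} the theorem; what you have is an accurate reconnaissance of the difficulty, not a proof. Moreover, the charging you gesture at --- routing each problematic disagreement to ``LP cost at an inner-ball vertex'' --- would, if taken literally, charge vertex $u$'s disagreements against a \emph{different} vertex's LP budget, which destroys the per-vertex bound you yourself note is required for $\ell_q$ with $q>1$.

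Here is what closes the gap. Work step by step: define the per-step profit $\prft{u}{t}=\sum_{(u,v)\in\Delta E_t}\big(\lp{u,v}-r\,\alg{u,v}\big)$ over the edges $\Delta E_t$ removed at step $t$. Since the sets $\Delta E_t$ partition the edges, $\pft{u}=\sum_t\prft{u}{t}$, so it suffices to show $\prft{u}{t}\ge 0$ for every $t$ --- no cross-cluster bookkeeping is ever required, which dissolves your double-counting concern. For $x_{uw}\in(r,3r]$ split $\prft{u}{t}$ into $P_{high}(u)$, the contribution from partners $v\in\Ball(w,r)\cap V_t$, and $P_{low}(u)$ from the rest. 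The key per-edge inequality you are missing is that for \emph{every} $v\in\Ball(w,r)\cap V_t$, regardless of sign or agreement, $\prft{u,v}{t}\ge r-x_{vw}$: e.g.\ when $x_{uw}\in(r,2r]$ both $u,v$ land in $C_t$, so a positive $(u,v)$ is in agreement with $\lp{u,v}=x_{uv}\ge x_{uw}-x_{vw}\ge r-x_{vw}$, while a negative $(u,v)$ has $\lp{u,v}-r\ge 1-(2r+x_{vw})-r=2r-x_{vw}$; the band $(2r,3r]$ is symmetric. Summing gives $P_{high}(u)\ge\sum_{v\in\Ball(w,r)\cap V_t}(r-x_{vw})=L_t(w)$. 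On the other side, every edge satisfies $\prft{u,v}{t}\ge\min(x_{uv}-r,0)$, whence $P_{low}(u)\ge-\sum_{v\in\Ball(u,r)\cap V_t}(r-x_{uv})=-L_t(u)$. The greedy rule $L_t(w)\ge L_t(u)$ then gives $\prft{u}{t}\ge 0$. Note the compensating LP mass comes from edges $(u,v)$ incident to $u$ itself (with $v$ in the inner ball), not from another vertex's LP budget --- that is precisely what preserves the per-vertex factor of $5$.
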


For an edge $(u,v) \in E$, let $LP(u,v)$ be the LP cost of the edge $(u,v)$:
$\lp{u,v} = x_{uv} $ if $(u,v) \in E^+$ and  $\lp{u,v} = 1 - x_{uv}$ if $(u,v) \in E^-$. Let $\alg{u,v} = \ONE( (u,v) \text{ is in disagreement )}$.

Define
$$\pft{u} = \sum_{(u,v) \in E} \lp{u,v} - r \sum_{(u,v) \in E} \alg{u,v},$$
where $r=1/5$. We show that for each vertex $u \in V$, we have $\pft{u} \geq 0$ (see Lemma~\ref{lem:pft} below) and,
therefore, the number of disagreeing edges incident to $u$ is upper bounded by $5y(u)$:
$$ALG(u) = \smashoperator[r]{\sum_{v:(u,v) \in E}} \alg{u,v} \leq \frac{1}{r} \smashoperator[r]{\sum_{v:(u,v) \in E}} \lp{u,v} = 5y(u).$$
Thus, $\|ALG\|_q \leq 5 \|y\|_q$ for any $q\geq 1$. Consequently, the approximation ratio of the algorithm is at most $5$ for any norm $\ell_q$.

\begin{lemma}\label{lem:pft}
For every $u\in V$, we have $\pft{u} \geq 0$.
\end{lemma}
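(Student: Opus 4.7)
I fix a vertex $u \in V$ and let $t$ be the step at which the algorithm places $u$ into a cluster. Let $w = w_t$ be the pivot at step $t$, so that $u \in C_t = \Ball(w, 2r) \cap V_t$ and in particular $x_{uw} \leq 2r = 2/5$. The plan is to decompose
\begin{align*}
\pft{u} = S_1 + S_2 + S_3 + S_4,
\end{align*}
where $S_1 = \sum_{(u,v)\in E^+,\, v\in C_t} x_{uv}$, $S_2 = \sum_{(u,v)\in E^-,\, v\notin C_t} (1-x_{uv})$, $S_3 = \sum_{(u,v)\in E^-,\, v\in C_t} (1-x_{uv}-r)$, and $S_4 = \sum_{(u,v)\in E^+,\, v\notin C_t} (x_{uv}-r)$, and argue that each of these sums is nonnegative. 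The first two sums consist only of edges that are \emph{not} in disagreement with the clustering $\calC$, so their summands are manifestly nonnegative.

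For $S_3$, I would use the triangle inequality together with the choice $r=1/5$: any $v \in C_t$ satisfies $x_{vw} \leq 2r$, so $x_{uv} \leq x_{uw}+x_{vw} \leq 4r$, and hence $1-x_{uv}-r \geq 1-5r = 0$ termwise. Equivalently, the cluster $C_t$ has LP-diameter at most $4r$, and with $r=1/5$ there is enough LP slack on each within-cluster negative edge to pay for its $r$-cost as a disagreement; the constant $r=1/5$ is calibrated so that this inequality is tight.

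The heart of the proof is $S_4$, which concerns positive disagreements whose other endpoint $v$ lies outside $C_t$. For such a $v$ I would split into two cases. (a) If $v \in V_t$ but $v \notin C_t$, then $x_{vw} > 2r$, so by the triangle inequality $x_{uv} \geq x_{vw} - x_{uw} > 2r - x_{uw}$; if $u$ lies in the inner ball ($x_{uw} \leq r$), this already yields $x_{uv} \geq r$ and the summand is nonnegative. (b) If $v$ was clustered at an earlier step $s<t$ with pivot $w_s$, then $x_{v w_s} \leq 2r$ while $x_{u w_s} > 2r$ (since $u \notin C_s$), and $x_{uv} \geq x_{u w_s} - x_{v w_s} > 2r - x_{v w_s}$; this yields $x_{uv} \geq r$ whenever $v$ lies in the inner ball of $w_s$. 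In the residual ``outer ring'' configurations I would invoke the pivot-maximality inequalities $L_t(w) \geq L_t(u)$ (for case (a)) or $L_s(w_s) \geq L_s(u)$ (for case (b)), which force enough structure on the inner balls around $u$ at the relevant step, and then reallocate LP slack from edges of $u$ inside its own cluster --- slack captured by $S_1$ (positive agreements) and by the excess in $S_3$ (since in fact $1-x_{uv}-r \geq r$ whenever $v$ is in the inner ball of $w$) --- to absorb the unpaid debt $(r - x_{uv})$ from each bad edge in $S_4$.

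The main obstacle is the bookkeeping for case (b): bad edges incident to $u$ may originate from many distinct past pivots $w_s$, so the individual maximality inequalities must be combined into a single charging scheme that maps each unit of debt to disjoint LP slack without double-counting. The two-radius structure of the algorithm --- inner radius $r$ for scoring pivot candidates and outer radius $2r$ for forming clusters --- combined with $r=1/5$ is precisely what makes this accounting balance.
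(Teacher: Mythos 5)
Your decomposition of $\pft{u}$ into $S_1+S_2+S_3+S_4$ by cluster membership (relative to $C_{t^*}$, the cluster that eventually contains $u$) is not the paper's decomposition, and the difference is not cosmetic: your grouping forces the cross-step bookkeeping you correctly identify as an obstacle, and you do not resolve it. The paper instead decomposes \emph{by algorithm step}: it defines $\Delta E_t$ as the edges removed at step $t$ and writes $\pft{u} = \sum_t \prft{u}{t}$, where $\prft{u}{t}$ is the contribution of edges removed at step $t$. It then proves the stronger, modular statement $\prft{u}{t} \geq 0$ for \emph{every} step $t$, using only the single pivot-maximality $L_t(w_t) \geq L_t(u)$ valid at that step. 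This is exactly what sidesteps the issue you flag in case (b): edges from $u$ to previously-clustered vertices $v$ are attributed to the step $s<t^*$ at which $v$ was removed, and the slack needed to pay for those bad edges is found among the \emph{other} edges removed at that same step $s$ (the $P_{high}(u)$ vs.\ $P_{low}(u)$ split around $\Ball(w_s,r)$). There is never any need to combine maximality inequalities across steps or to track a global charging map.

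Concretely, the gap is this: $S_4 \geq 0$ is false in general (a positive edge to a neighboring cluster can have $x_{uv} < r$), so your plan requires borrowing slack from $S_1$ and $S_3$; but the bad edges in $S_4$ arise at many different steps $s$, each with its own pivot $w_s$ and its own maximality inequality, and the ``disjoint LP slack'' needed to pay for the step-$s$ debt naturally lives among the \emph{other} vertices of $\Ball(w_s,r)$ removed at step $s$ --- not inside $C_{t^*}$, which is where $S_1$ and $S_3$ look. Your charging scheme would be trying to pay step-$s$ debts with step-$t^*$ slack, and it is not clear this can balance. The per-step decomposition is the missing idea that makes the argument local. Your observations that $S_1,S_2 \geq 0$ trivially and that $S_3 \geq 0$ termwise via the $4r$-diameter bound are correct and correspond to the paper's Claim~\ref{cl:neg-edge-profit-nenneg} and Lemmas~\ref{lem:0r}--\ref{lem:r1}; it is the treatment of positive disagreements across clusters that needs the step-by-step structure.
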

At each step $t$ of the algorithm, we create a new cluster $C_t$ and remove it from the graph. We also remove all
edges with at least one endpoint in $C_t$. Denote this set of edges by
$$\Delta E_t=\{(u,v): u\in C_t \text{ or } v \in C_t\}.$$
Now let
$$
\prft{u,v}{t} =  \begin{cases}\lp{u,v} - r \alg{u,v},& \text{if } (u,v)\in \Delta E\\0,&\text{otherwise}\end{cases}.
$$

\begin{equation}\label{eq:for-profit-u}
\prft{u}{t} = \sum_{v\in V_t}\prft{u,v}{t}
= \smashoperator[r]{\sum_{(u,v) \in \Delta E_t}} \lp{u,v} - r \smashoperator[r]{\sum_{(u,v) \in \Delta E_t}}\alg{u,v}.
\end{equation}
As all sets $\Delta E_t$ are disjoint, $\pft{u} = \sum_t \prft{u}{t}$. Thus, to prove Lemma~\ref{lem:pft},
it is sufficient to show that $\prft{u}{t}\geq 0$ for all $t$. Note that we only need to consider $u\in V_t$ as $\prft{u}{t} = 0$ for
$u\notin V_t$.

\iffalse
\fullversionOnly{
\begin{figure}
  \centering
  \input{image-5apx}
  \caption{Illustration for Algorithm 2.}
\end{figure}}
\fi

Consider a step $t$ of the algorithm and vertex $u\in V_t$. Let $w = w_t$ be the center of the cluster chosen at this step.
First, we show that since the diameter of the cluster $C_t$ is $4r$, for all negative edges $(u,v) \in E^-$ with $u,v \in C_t$, we can charge the cost of
disagreement to the edge itself, that is, $\prft{u,v}{t}$ is nonnegative for $(u,v)\in E^-$ (see Lemma~\ref{cl:neg-edge-profit-nenneg}). We then consider two cases:
$x_{uw}\in [0, r]\cup [3r,1]$ and $x_{uw}\in (r,3r]$.

The former case is fairly simple since disagreeing positive edges $(u,v)\in E^+$ (with  $x_{uw}\in [0, r]\cup [3r,1]$) have a ``large'' LP cost. In Lemma~\ref{lem:0r} and Lemma~\ref{lem:r1},
we prove that the cost of disagreement can be charged to the edge itself and hence $\prft{u}{t} \geq 0$.

We then consider the latter case. For vertices $u$ with $x_{uw} \in (r, 3r]$,
$\prft{u,v}{t}$ for some disagreeing positive edges $(u,v)$ might be negative. Thus, we split the profit at step $t$ for such vertices $u$ into the profit
they get from edges $(u,v)$ with $v$ in $\Ball(w,r)\cap V_t$ and from edges with $v$ in $V_t\setminus \Ball(w,r)$. That is,
$$
\prft{u}{t} = \underbrace{\sum_{v\in \Ball(w,r)}\prft{u,v}{t}}_{P_{high}(u)} + \underbrace{\sum_{v\in V_t\setminus \Ball(w,r)}\prft{u,v}{t}}_{P_{low}(u)}.
$$
Denote the first term by $P_{high}(u)$ and the second term by $P_{low}(u)$. We show that $P_{low}(u)\geq -L_t(u)$ (see Lemma~\ref{lem:PLow-Ltu})
and $P_{high}\geq L_t(w)$  (see Lemma~\ref{lem:PHigh-Ltw}) and
 conclude that $\prft{u}{t} = P_{high}(u) + P_{low}(u)\geq L_t(w)-L_t(u)\geq 0$ since $L_t(w) = \max_{w'\in V_t} L_t(w') \geq L_t(u)$.

In the following claim, we show that we can charge the cost of disagreement of a negative edge to the edge itself.
\begin{claim}\label{cl:neg-edge-profit-nenneg}
For a negative edge $(u,v)\in E^-$, $\prft{u,v}{t}$ is always nonnegative.
\end{claim}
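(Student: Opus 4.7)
The plan is to unwind the definitions of $\prft{u,v}{t}$ for a negative edge and split into two cases based on whether the edge is in disagreement with the output clustering $\calC$.

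First, recall that for $(u,v) \in E^-$ we have $\lp{u,v} = 1 - x_{uv} \in [0,1]$ and $\prft{u,v}{t} = \lp{u,v} - r\cdot\alg{u,v}$ whenever $(u,v)\in \Delta E_t$ (and $0$ otherwise). If the edge is not in disagreement then $\alg{u,v} = 0$ and the claim is immediate since $\prft{u,v}{t} = 1 - x_{uv} \geq 0$. The only nontrivial case is when $(u,v)\in E^-$ is in disagreement, i.e.\ both endpoints end up in the same cluster of $\calC$. Since every vertex is removed into a cluster exactly once, the step $t$ at which $(u,v)$ enters $\Delta E_t$ is the step at which this common cluster is formed, so both $u,v\in C_t$.

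The key step is to bound $x_{uv}$ from above using the fact that $C_t\subseteq \Ball(w_t,2r)$. By the triangle inequality (constraint (P3) of the relaxation), $x_{uv}\leq x_{uw_t} + x_{vw_t} \leq 2r + 2r = 4r$. Substituting $r = 1/5$ gives $x_{uv}\leq 4/5$, hence $\lp{u,v} = 1 - x_{uv}\geq 1 - 4r = r$. Therefore
\[
\prft{u,v}{t} = \lp{u,v} - r\cdot \alg{u,v} \geq r - r = 0,
\]
which completes the case analysis and proves the claim.

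There is no real obstacle here — the argument is essentially just the triangle inequality applied inside a ball of radius $2r$, combined with the particular choice $r = 1/5$ that makes $1 - 4r = r$. This is precisely the reason the algorithm picks $r = 1/5$: it is the largest radius for which negative edges inside a cluster of diameter $4r$ are guaranteed to contribute nonnegative profit.
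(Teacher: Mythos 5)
Your proof is correct and takes essentially the same route as the paper: in the disagreement case, both endpoints lie in $\Ball(w_t,2r)$, the triangle inequality gives $x_{uv}\leq 4r$, and $1-4r=r$ yields nonnegative profit. You spell out the easy non-disagreement case a bit more explicitly than the paper does, but the key step is identical.
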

\begin{proof}
The only case when $(u,v)$ is in disagreement is when both $u$ and $v$ belong to the new cluster. In this
case, they lie in the ball of radius $2r$ around $w$ (and thus $x_{uw}, x_{vw} \leq 2r$). Thus the distance $x_{uv}$ between them is at most $4r$ (because $x_{uv} \leq x_{uw} + x_{vw} \leq 4r$). The LP cost of the edge $(u,v)$ is at least $LP(u,v) = 1 - x_{uv} \geq 1- 4r = r$. Thus, $ \prft{u,v}{t} = LP(u,v)-r ALG(u,v) = LP(u,v)- r \geq 0$.
\end{proof}

In Lemma~\ref{lem:0r} and Lemma~\ref{lem:r1}, we consider the case when $x_{uw} \in [0,r] \cup (3r, 1]$.
%%We observe that, for the case when $u$ is either very close to the cluster center $w$ or very far from the cluster $C_t$, any positive edges which disagree should have a large enough $LP$ cost,
%%allowing us to conclude that the profit from such edges is non-negative.
%%First, consider the case when $x_{uw} \in [0, r]$.
\begin{lemma}\label{lem:0r}
If $x_{uw}\leq r$, then $\prft{u,v}{t}\geq 0$ for all $v\in V_t$.
\end{lemma}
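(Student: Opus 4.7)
The plan is to show that under the hypothesis $x_{uw}\le r$, the vertex $u$ lies in the current cluster $C_t$, and then to verify nonnegativity of $\prft{u,v}{t}$ by a simple case analysis on the sign of the edge $(u,v)$ and on whether $v$ also lies in $C_t$. Since $C_t=\Ball(w,2r)\cap V_t$ and $x_{uw}\le r\le 2r$, we have $u\in C_t$, so every edge $(u,v)$ with $v\in V_t$ lies in $\Delta E_t$ and contributes $\lp{u,v}-r\cdot\alg{u,v}$ to $\prft{u}{t}$. For $v\notin V_t$, the edge was removed at an earlier step and $\prft{u,v}{t}=0$ by definition.

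Split into four subcases. First, if $(u,v)\in E^-$, Claim~\ref{cl:neg-edge-profit-nenneg} already gives $\prft{u,v}{t}\ge0$ regardless of whether $v\in C_t$ or not. Next, consider $(u,v)\in E^+$. If $v\in C_t$, both endpoints end up in the same cluster, so the edge is not in disagreement, $\alg{u,v}=0$, and $\prft{u,v}{t}=\lp{u,v}\ge 0$. The last subcase is a positive edge with $v\in V_t\setminus C_t$: then $v\notin\Ball(w,2r)$, so $x_{vw}>2r$, and the triangle inequality~(P3) yields
\[
x_{uv}\;\ge\;x_{vw}-x_{uw}\;>\;2r-r\;=\;r.
\]
Hence $\prft{u,v}{t}=x_{uv}-r\cdot\alg{u,v}\ge x_{uv}-r>0$.

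Combining the four subcases gives $\prft{u,v}{t}\ge 0$ for every $v\in V_t$, which is the claim. There is no real obstacle here: the key quantitative input is just the radius arithmetic $2r-r=r$ on the positive-edge cut side and $1-4r=r$ on the negative-edge same-cluster side (the latter coming from Claim~\ref{cl:neg-edge-profit-nenneg}), both of which are exactly the inequalities that dictated the choice $r=1/5$.
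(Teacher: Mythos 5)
Your proof is correct and follows essentially the same route as the paper's: handle negative edges via Claim~\ref{cl:neg-edge-profit-nenneg}, observe $u\in C_t$ since $x_{uw}\le r< 2r$, and use the triangle inequality to show that any cut positive edge $(u,v)$ has $x_{uv}\ge r$, hence nonnegative profit. You spell out the case $v\in C_t$ with $(u,v)\in E^+$ explicitly, which the paper folds into the phrase ``$(u,v)$ disagrees only if $v$ does not belong to that cluster,'' but the substance is identical.
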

\begin{proof}
If $x_{uw}\in E^-$, then $\prft{u,v}{t}\geq 0$ by Claim~\ref{cl:neg-edge-profit-nenneg}. Assume that $x_{uw}\in E^+$. Since
$x_{uw}\leq r$, $u$ belongs to the cluster $C_t$. Thus, $(u,v)$ disagrees only if $v$ does not belong to that cluster.
In this case, $x_{wv}\geq 2r$ and by the triangle inequality $x_{uv}\geq x_{vw} - x_{uw}\geq r$. Therefore,
$\prft{u,v}{t} = x_{u,v}-r \geq 0$.
\end{proof}
\begin{lemma}\label{lem:r1}
If $x_{uw}\geq 3r$, then $\prft{u,v}{t}\geq 0$ for all $v \in V_t$.
\end{lemma}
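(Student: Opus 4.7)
The plan is to mimic the structure of Lemma~\ref{lem:0r}, but now the roles of ``inside'' and ``outside'' of the new cluster are swapped: here $u$ lies outside the ball $\Ball(w,2r)$, so $u\notin C_t$, and we need to argue that any edge $(u,v)$ removed at this step (i.e., with $v\in C_t$) contributes nonnegative profit. First I would split the vertices $v\in V_t$ according to whether $v\in C_t$ or not. If $v\notin C_t$, then neither endpoint of $(u,v)$ is in $C_t$, so $(u,v)\notin \Delta E_t$ and by definition $\prft{u,v}{t}=0$, which is fine.

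So the substance is the case $v\in C_t$, i.e., $x_{vw}\leq 2r$. For negative edges $(u,v)\in E^-$, I would simply invoke Claim~\ref{cl:neg-edge-profit-nenneg}, which already handles all negative edges regardless of the position of $u$. For positive edges $(u,v)\in E^+$, I would note that since $u\notin C_t$ while $v\in C_t$, the two endpoints are being separated by the algorithm, so $\alg{u,v}=1$; and by the triangle inequality together with the hypothesis $x_{uw}\geq 3r$ and $x_{vw}\leq 2r$, we get
\[
x_{uv}\;\geq\; x_{uw}-x_{vw}\;\geq\; 3r-2r \;=\; r.
\]
Hence $\prft{u,v}{t} = \lp{u,v} - r\cdot\alg{u,v} = x_{uv}-r \geq 0$, completing this subcase.

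Combining the three subcases gives $\prft{u,v}{t}\geq 0$ for every $v\in V_t$, which is the statement of the lemma. I do not expect a genuine obstacle: the only quantitative input is the choice $r=1/5$ used implicitly through the gap $3r-2r=r$ between the two ball radii in the algorithm, which is exactly what makes the triangle inequality give back the discount factor $r$ we need to absorb the disagreement cost. The whole point of this lemma, together with Lemma~\ref{lem:0r}, is to isolate the ``easy'' regime of $x_{uw}$, leaving only the harder intermediate regime $x_{uw}\in(r,3r]$ for the charging argument via $L_t(w)\geq L_t(u)$.
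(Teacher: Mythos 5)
Your proof follows essentially the same route as the paper's: reduce to positive edges with $v\in C_t$ (negative edges handled by Claim~\ref{cl:neg-edge-profit-nenneg}, edges with both endpoints outside $C_t$ contributing zero), then apply the triangle inequality $x_{uv}\geq x_{uw}-x_{vw}\geq 3r-2r=r$ to conclude $\prft{u,v}{t}=x_{uv}-r\geq 0$. The only difference is that you spell out the $v\notin C_t$ case explicitly, which the paper leaves implicit; the substance is identical and correct.
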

\begin{proof}
As in the previous lemma, we can assume that $x_{uw}\in E^+$. If $x_{uw}\geq 3r$,
then $u$ does not belong to the new cluster $C_t$. Thus, $(u,v)$ disagrees only if $v$ belongs to $C_t$.
In this case, $x_{wv}\leq 2r$ and by the triangle inequality $x_{uv}\geq x_{uw} - x_{vw}\geq r$. Therefore,
$\prft{u,v}{t} = x_{u,v}-r \geq 0$.
\end{proof}

We next consider $u$ such that $x_{uw} \in (r, 3r]$. First, we show that the profit we obtain from every edge $(u,v)$ with $v \in \Ball(w,r)$ is
at least $r - x_{vw}$, regardless of whether the edge is positive or negative.
\begin{claim}\label{claim:prof-from-core-v}
If $x_{uw} \in (r,3r]$ and $v \in \Ball(w,r)\cap V_t$, then $\prft{u,v}{t}\geq r-x_{vw}$.
\end{claim}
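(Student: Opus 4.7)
The plan is a short case analysis driven entirely by the triangle inequality on $x$. Since $v \in \Ball(w,r) \subseteq C_t$, the edge $(u,v)$ has both endpoints in the new cluster iff $u \in C_t$, which in turn is equivalent to $x_{uw} \leq 2r$; in the complementary sub-range $x_{uw} \in (2r, 3r]$ the edge crosses a cluster boundary. Combining this dichotomy with the sign of $(u,v)$ splits the argument into four cases, and in each one both $\alg{u,v}$ and $\lp{u,v}$ are explicit functions of $x_{uv}$.

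I would then estimate $\prft{u,v}{t} = \lp{u,v} - r\,\alg{u,v}$ in each case via the triangle inequality $x_{uw} - x_{vw} \leq x_{uv} \leq x_{uw} + x_{vw}$. For the two positive-edge cases I use the lower bound on $x_{uv}$: if $u \in C_t$ then the edge agrees and $\prft{u,v}{t} = x_{uv} \geq x_{uw} - x_{vw} > r - x_{vw}$ since $x_{uw} > r$; if $u \notin C_t$ the edge disagrees and $\prft{u,v}{t} = x_{uv} - r \geq x_{uw} - x_{vw} - r > r - x_{vw}$ since $x_{uw} > 2r$. For the two negative-edge cases I use the upper bound on $x_{uv}$: if $u \in C_t$ the edge disagrees and $\prft{u,v}{t} = 1 - x_{uv} - r \geq 1 - x_{uw} - x_{vw} - r \geq 1 - 3r - x_{vw}$; if $u \notin C_t$ the edge agrees and $\prft{u,v}{t} = 1 - x_{uv} \geq 1 - x_{uw} - x_{vw} \geq 1 - 3r - x_{vw}$. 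Substituting $1 = 5r$ reduces both of these to $2r - x_{vw} \geq r - x_{vw}$.

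There is no real obstacle; the whole claim is a bookkeeping exercise on the triangle inequality. The only point worth flagging is that the constant $r = 1/5$ is chosen precisely so that $5r = 1$ makes the negative-disagreeing-edge case tight: if $r$ were any larger, the bound $1 - 3r - x_{vw} \geq r - x_{vw}$ would fail, and this is exactly the place where the overall $5$-approximation constant is forced.
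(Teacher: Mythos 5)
Your proof is correct and takes essentially the same route as the paper: the same four-way split (whether $u \in C_t$, equivalently $x_{uw} \le 2r$ versus $x_{uw} > 2r$, crossed with the sign of the edge) and the same triangle-inequality estimates on $x_{uv}$. One small correction to your closing remark about where the constant is forced: the inequality $1 - 3r - x_{vw} \ge r - x_{vw}$ only needs $r \le 1/4$; the binding constraint $r \le 1/5$ actually comes from Claim~\ref{cl:neg-edge-profit-nenneg}, where a disagreeing negative edge has both endpoints inside a cluster of diameter $4r$ and one needs $1 - 4r \ge r$.
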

\begin{proof}
First consider $u$ such that $x_{uw} \in (r, 2r]$. Note that $x_{uv} \geq x_{uw} - x_{vw} \geq r - x_{vw}$. Moreover, $x_{uv} \leq x_{uw} + x_{vw} \leq 2r + x_{vw}$. Thus, if $(u,v) \in E^+$, then $\prft{u,v}{t} \geq r - x_{vw}$. Otherwise, $\prft{u,v}{t} \geq (1 - 2r - x_{vw}) - r \geq 2r - x_{vw}$.

For $u \in (2r, 3r]$, note that $x_{uv} \geq x_{uw} - x_{vw} \geq 2r - x_{vw}$. Moreover, $x_{uv} \leq x_{uw} + x_{vw} \leq 3r + x_{vw}$.
Thus, if $(u,v) \in E^+$, then $\prft{u,v}{t} \geq (2r - x_{vw}) - r \geq r - x_{vw}$. Otherwise, $\prft{u,v}{t} \geq (1 - 3r - x_{vw}) \geq 2r - x_{vw}$.
\end{proof}

Using the above claim, we can sum up the profits from all vertices $v$ in $\Ball(w, r)$ and lower bound $P_{high}(u)$ as follows.
\begin{lemma}\label{lem:PHigh-Ltw}
If $x_{uw}\in (r,3r]$, then $P_{high}(u) \geq L_t(w)$.
\end{lemma}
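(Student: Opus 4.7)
The plan is to deduce this lemma as an immediate consequence of the preceding Claim~\ref{claim:prof-from-core-v} by summing the per-edge bound over all $v$ in the ``core'' $\Ball(w,r) \cap V_t$. Since we are working on a complete graph, every pair $(u,v)$ is an edge in $E^+ \cup E^-$, so each such $v$ contributes a well-defined term $\prft{u,v}{t}$ to $P_{high}(u)$.

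Concretely, I would first unpack the definitions: $P_{high}(u) = \sum_{v \in \Ball(w,r) \cap V_t} \prft{u,v}{t}$ and $L_t(w) = \sum_{v \in \Ball(w,r) \cap V_t} (r - x_{vw})$. Then, since we are in the case $x_{uw} \in (r, 3r]$, Claim~\ref{claim:prof-from-core-v} applies to every $v \in \Ball(w,r) \cap V_t$ and yields $\prft{u,v}{t} \geq r - x_{vw}$. Summing this pointwise inequality over all $v \in \Ball(w,r) \cap V_t$ gives $P_{high}(u) \geq \sum_{v \in \Ball(w,r) \cap V_t}(r - x_{vw}) = L_t(w)$, as desired.

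There is essentially no obstacle here, since the key work has already been done in Claim~\ref{claim:prof-from-core-v}: the per-edge profit is bounded below by $r - x_{vw}$ uniformly, irrespective of whether $(u,v)$ is a positive or negative edge and regardless of whether $x_{uw}$ lies in $(r, 2r]$ or $(2r, 3r]$. The only minor thing to observe is that the right-hand side $L_t(w)$ is defined using precisely the same index set over which we sum, so the bounds match up with no extra slack and no boundary cases to handle.
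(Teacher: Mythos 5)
Your proof is correct and follows exactly the same route as the paper: apply Claim~\ref{claim:prof-from-core-v} to each $v\in\Ball(w,r)\cap V_t$ and sum the per-edge bound $\prft{u,v}{t}\geq r-x_{vw}$ to get $P_{high}(u)\geq L_t(w)$. No gaps.
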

\begin{proof}
By Claim~\ref{claim:prof-from-core-v}, we have $\prft{u,v}{t}\geq r-x_{vw}$ for all $v\in V_t$. Thus,
$$
P_{high}(u) = \sum_{v\in \Ball(w,r)\cap V_t}\prft{u,v}{t}\geq \sum_{v\in \Ball(w,r)\cap V_t}r-x_{vw} = L_t(w).
$$
\end{proof}

We now lower bound $P_{low}(u)$. To this end. we estimate each term $\prft{u,v}{t}$ in the definition of $P_{low}$.
\begin{claim}\label{claim:prof-uv-lower-bound}
If $x_{uw} \in (r,3r]$ and $v \in V_t \setminus \Ball(w,r)$, then $\prft{u,v}{t}\geq \min(x_{uv} - r, 0)$.
\end{claim}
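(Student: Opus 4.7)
The plan is to prove the bound by a short case analysis based on (i) whether $(u,v)\in \Delta E_t$, (ii) the sign of the edge, and (iii) whether $(u,v)$ is in agreement with the new cluster $C_t=\Ball(w,2r)\cap V_t$. Recall that $\prft{u,v}{t}=0$ whenever $(u,v)\notin \Delta E_t$, i.e., whenever neither $u$ nor $v$ lies in $C_t$; in this subcase the bound holds immediately since $\min(x_{uv}-r,0)\leq 0$. If instead $(u,v)\in E^-$, Claim~\ref{cl:neg-edge-profit-nenneg} already gives $\prft{u,v}{t}\geq 0\geq \min(x_{uv}-r,0)$, so this case is also done.

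The only remaining situation is a positive edge $(u,v)\in E^+$ with at least one endpoint in $C_t$. If both endpoints lie in $C_t$ then $(u,v)$ is in agreement and $\prft{u,v}{t}=x_{uv}\geq x_{uv}-r\geq \min(x_{uv}-r,0)$. If exactly one endpoint lies in $C_t$, then $(u,v)$ is in disagreement and $\prft{u,v}{t}=x_{uv}-r$, which equals the claimed lower bound when $x_{uv}\leq r$ and exceeds the other branch $0$ when $x_{uv}>r$. This exhausts the cases. Notably, neither the assumption $x_{uw}\in(r,3r]$ nor the assumption $v\notin\Ball(w,r)$ is actually used for the pointwise bound; they are needed only when this claim is invoked to bound $P_{low}(u)$ from below.

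The ``hard'' part is really just bookkeeping. One has to be careful to charge profit only to edges in $\Delta E_t$ and not to edges whose endpoints are both still unclustered (those are accounted for in later iterations). The reason the right-hand side of the claim is $\min(x_{uv}-r,0)$ rather than the cleaner $0$ is precisely to absorb the single possibly-negative case above, a short positive cut edge. When later summed over $v\in V_t\setminus\Ball(w,r)$, this minimum will vanish for $v\notin\Ball(u,r)$ and collapse on $v\in\Ball(u,r)$ to $-L_t(u)$, yielding exactly the inequality $P_{low}(u)\geq -L_t(u)$ needed in the next lemma.
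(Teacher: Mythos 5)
Your proposal is correct and follows essentially the same case analysis as the paper's proof: dismiss $(u,v)\notin\Delta E_t$ and negative edges via Claim~\ref{cl:neg-edge-profit-nenneg}, then observe that a positive edge in $\Delta E_t$ contributes either $x_{uv}$ or $x_{uv}-r$, both of which are at least $\min(x_{uv}-r,0)$. Your added remark that the hypotheses $x_{uw}\in(r,3r]$ and $v\notin\Ball(w,r)$ are not actually used in this pointwise bound is accurate and consistent with the paper, which also never invokes them inside this claim's proof.
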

\begin{proof}
By Claim~\ref{cl:neg-edge-profit-nenneg}, if $(u,v)$ is a negative edge, then $\prft{u,v}{t} \geq 0$. The profit is $0$ if $x_{uv}\notin \Delta E_t$ (i.e., neither $u$ nor $v$ belong
to the new cluster). So let us assume that $(u,v)$ is a positive edge in $\Delta E_t$. Then, the profit obtained from $(u,v)$ is $x_{uv}$ if $(u,v)$ is in agreement
and $x_{uv} - r$ if $(u,v)$ is in disagreement. In any case, $\prft{u,v}{t} \geq x_{uv} - r \geq \min(x_{uv} - r, 0)$.
\end{proof}

Lemma~\ref{lem:PLow-Ltu} is an immediate corollary of Claim~\ref{claim:prof-uv-lower-bound}.
\begin{lemma}\label{lem:PLow-Ltu}
If $x_{uw}\in (r,3r]$, then $P_{low}(u) \geq -L_t(u)$.
\end{lemma}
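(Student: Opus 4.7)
The plan is to apply Claim~\ref{claim:prof-uv-lower-bound} term by term to the sum defining $P_{low}(u)$, and then recognize that the negative contributions are exactly captured by $-L_t(u)$. Concretely, from the definition,
$$P_{low}(u) = \sum_{v\in V_t \setminus \Ball(w,r)} \prft{u,v}{t} \geq \sum_{v\in V_t \setminus \Ball(w,r)} \min(x_{uv}-r,\,0),$$
so the only terms that can pull the sum below zero are those with $x_{uv} < r$, i.e., $v \in \Ball(u,r)$.

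The next step is to rewrite this restricted sum in terms of $L_t(u)$. Dropping the nonnegative terms (those with $x_{uv}\geq r$) on the right-hand side gives
$$P_{low}(u) \geq \sum_{v \in (\Ball(u,r) \cap V_t)\setminus \Ball(w,r)} (x_{uv} - r).$$
Every summand here is nonpositive, so removing the restriction $\setminus \Ball(w,r)$ can only make the sum smaller (more negative), yielding
$$P_{low}(u) \geq \sum_{v \in \Ball(u,r) \cap V_t} (x_{uv} - r) \;=\; -L_t(u),$$
where the final equality is exactly the definition of $L_t(u)$ from the algorithm section.

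This is essentially a one-line consequence of the claim combined with the definition of $L_t$; there is no real obstacle. The only thing to be careful about is the direction of the inequality when enlarging the index set: it works precisely because each dropped/added term is nonpositive, which in turn is why we restricted attention to $v\in \Ball(u,r)$ first.
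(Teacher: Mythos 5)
Your proof is correct and uses essentially the same argument as the paper: apply Claim~\ref{claim:prof-uv-lower-bound} termwise and then use the two facts that $\min(x_{uv}-r,0)$ vanishes off $\Ball(u,r)$ and is nonpositive on it. You merely apply these two observations in the opposite order from the paper's steps (a) and (b), arriving at the same chain of inequalities.
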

\begin{proof}
By Claim~\ref{claim:prof-uv-lower-bound}, we have $\prft{u,v}{t}\geq \min(x_{uv} - r,0)$ for all $v\in V_t$. Thus,
\begin{align*}
P_{low}(u) &= \sum_{v\in V_t\setminus \Ball(w,r)}\prft{u,v}{t}\\
&\geq  \sum_{v\in V_t\setminus \Ball(w,r)} \min(x_{uv} - r,0)\\
&\overset{a}{\geq} \;\;\;\;\;\sum_{v\in V_t} \min(x_{uv} - r,0) \\
&\overset{b}{=} \sum_{v\in \Ball(u,r) \cap V_t} x_{uv} - r \\
&= - L(u).
\end{align*}
Here we used that (a) all terms $\min(x_{uv} - r,0)$ are nonpositive, and (b) $\min(x_{uv} - r, 0) = 0$ if $v\notin \Ball(u,r)$.
\end{proof}

This finishes the proof of Lemma~\ref{lem:pft}.
\pagebreak

\section{Correlation Clustering with AKS Min Max Objective}

In this section, we present our improved algorithm for Correlation Clustering with AKS Min Max Objective. Our algorithm produces a clustering of cost at most $(2 + \varepsilon) OPT$, which improves upon the bound of $O(\log n)$ given by~\citet*{AKS}.

 For a subset $S \subseteq V$ of vertices, we use $\cost^+(S)$ to refer to the weight of positive edges ``associated'' with $S$ that
are in disagreement. These are the edges with exactly one end point in $S$. Thus, $\cost^+(S) = \sum_{(u,v) \in E^+, u \in S, v \not\in S} w_{uv}$. Similarly, we use $\cost^-(S)$ to refer to the weight of dissimilar edges ``associated'' with $S$ that are in disagreement. These are the edges with both endpoints in $S$. Thus, $\cost^-(S) = \sum_{(u,v) \in E^-, u,v \in S} w_{uv}$. The total cost of the set $S$ is $\cost(S) = \cost^+(S) + \cost^-(S)$.

Similar to the algorithm of~\citet{AKS}, our algorithm works in two phases. In the first phase,
the algorithm covers  all vertices of the graph with (possibly overlapping) sets $S_1,\dots, S_k$ such that the cost of each set $S_i$ is at most $2OPT$ (i.e., $\cost(S_i)\leq 2OPT$ for each $i \in \{1,\dots, k\}$).
In the second phase, the algorithm finds sets $P_1,\dots, P_k$ such that:
(1) $P_1,\dots, P_k$ are disjoint and cover the vertex set;
(2) $P_i \subseteq S_i$ (and, consequently, $\cost^-(P_i)\leq \cost^-(S_i)$);
(3) $\cost^+(P_i)\leq (1+\varepsilon)\cost^+(S_i)$.

The sets $P_1,\dots, P_k$ are obtained from $S_1, \ldots,S_k$ using an uncrossing procedure of~\cite{BFKMNS}. Hence the clustering that is output is $\calP=(P_1,\dots, P_k)$. The improvement in the approximation factor comes from the first phase of the algorithm.

\subsection{The algorithm}\label{alg:mincostcover}

%%At the core of our algorithm is a simple subproblem. We formulate the subproblem as an integer program, consider its corresponding linear programming %%relaxation and round its solution to solve the subproblem.
At the core of our algorithm is a simple subproblem:
%%The subproblem that we consider is:
For a given vertex $z \in V$, find a subset $S \subseteq V$ containing $z$ such that $\cost(S)$ is minimized.
We solve this subproblem using a linear programming relaxation, which is formulated as follows:
%%The linear programming relaxation for this problem is formulated as follows:
The LP has a variable $x_u$ for each vertex $u\in V$. In the intended integral
solution, we have $x_u = 1$ if $u$ is in the set $S$, and $x_u = 0$, otherwise. That is, $x_u$ is the indicator of the event
 ``$u\in S$''. The LP has only one constraint: $x_z = 1$. A complete description of the LP can be found in Figure~\ref{fig:cover-z}. In Claim~\ref{clm:valid-lp-subproblem} we show that this LP is indeed a valid relaxation for our subproblem.

\begin{figure}
  \centering
\begin{equation*}
\begin{array}{ll@{}llr}
\text{minimize}  & \displaystyle \sum_{(u,v)\in E^+} w_{uv}|x_u-x_v| + \sum_{(u,v)\in E^-} w_{uv}(x_u + x_v -1)^+\\
\text{subject to}
&x_z = 1\\
&0\leq x_u\leq 1\text{ for all } u \in V
\end{array}
\end{equation*}
Here, we use notation $(t)^+ = \max(t,0)$.
  \caption{LP relaxation for covering $z$ with a low cost set $S$.}\label{fig:cover-z}
\end{figure}

\begin{claim}\label{clm:valid-lp-subproblem}
The LP relaxation described in Figure~\ref{fig:cover-z} is a valid relaxation for the subproblem.
\end{claim}
\begin{proof}
Let us verify that this is a valid relaxation for the problem. As we discussed above,
in the intended integral solution, we have $x_u = 1$ if $u$ is in the set $S$, and $x_u = 0$, otherwise.
That is, $x_u$ is the indicator of the event ``$u\in S$''.

Consider a positive edge $(u,v)\in E^+$. In the integral solution, $|x_u-x_v|=1$ if and only if one of the vertices $u$ or $v$ is in $S$ and the other one is not.
In this case, the edge $(u,v)$ is in disagreement with $S$. Now, consider
a negative edge $(u,v)\in E^-$. In the integral solution, $(x_u + x_v -1)^+ = 1$ if and only if both $u$ and $v$ are
in $S$. Again, in this case,  the edge $(u,v)$ is in disagreement with $S$. Thus, this LP is a relaxation for our
problem.

Note that we can linearize the $|\cdot|$ and $(\cdot)^+$ terms in the objective as follows. We can replace terms of the type $|x_u - x_v|$ with variables $\mu_{uv}$ and introduce the constraints $\mu_{uv} \geq (x_u - x_v)$ and $\mu_{uv}  \geq (x_v - x_u)$. Similarly, we can replace terms of the type $(x_u + x_v -1)^+$ with variables $\eta_{uv}$ and introduce the constraints 
$\eta_{uv} \geq (x_u + x_v -1)$ and $\eta_{uv} \geq 0$. It is easy to see that the minimum values for the variables $\mu_{uv}$ and $\eta_{uv}$ is attained at $|x_u - x_v|$ and $(x_u + x_v -1)^+$ respectively.
\end{proof}

 We are now ready to present our algorithm.

\noindent \textbf{Algorithm (Find minimum cost set):} For each $t\in [0,1]$, define a threshold
set $S_t$ as $S_t=\{u:x_u \geq t\}$. There are at most $n$ such distinct sets $S_t$ (since the set $\{x_u:u\in V\}$
contains at most $n$ elements). Our algorithm picks a set $S_t$ that minimizes
$\cost(S_t)$ and outputs it.

\begin{lemma}\label{lm:2approx-round}
The algorithm described above finds a set of cost at most $2LP$, where $LP$ is the
cost of the $LP$ solution.
\end{lemma}
\begin{proof}
We show that if we pick $t$ uniformly at random in $[1/2,1]$
then the expected cost of a random set $S_t$ is at most $2LP$. Consequently,
the minimum cost of set $S_t$ for $t\in[0,1]$ is at most $2LP$ and, hence,
the algorithm returns a solution of cost at most $2 LP$.

The probability that a positive edge $(u,v)\in E^+$ is in disagreement
with $S$ equals the probability that random $t$ lies between $x_u$ and $x_v$,
which is at most $2|x_u-x_v|$ (since $|x_u-x_v|$ is the length of the
interval $[x_u,x_v]$ and $2$ is the density of the random variable $t$
on the interval $[1/2,1]$). That is, this probability is upper bounded by
twice the LP cost of the edge $(u,v)$.
The probability that a negative edge $(u,v)\in E^-$ is
in disagreement with $S$ equals the probability that $t\leq \min(x_u,x_v)$
which is $2(\min(x_u,x_v) - 1/2)^+ = (2\min(x_u,x_v) -1)^+$.
This probability is upper bounded by the LP cost
of the negative edge $(u,v)$ (i.e., $(x_u+x_v-1)^-$). This concludes the
proof.
\end{proof}

% In Lemma~\ref{lm:2approx-round}, we show that the above algorithm outputs a set $S$ of cost at most $2$LP, where LP is the cost of the LP solution.
 Thus, to obtain a cover of all the vertices, we pick yet uncovered vertices $z\in V$ one by one and for each $z$, find a set $S(z)$ as described above.
Then, we remove those sets $S(z)$ that are completely covered by other sets. The obtained family of sets $\calS=\{S(z)\}$ satisfies the following
properties: (1) Sets in $\calS$ cover the entire set $V$; (2) $\cost(S) \leq 2OPT$ for each $S\in \calS$; (3) Each set $S\in \calS$ is not covered by the other
sets in $\calS$ (that is, for each $S \in \calS$, $S \not\subset \cup_{S' \in (\calS \setminus \{S\})} S'$). However, sets $S$ in $\calS$ are not necessarily disjoint.

Following~\cite{AKS}, we then apply an uncrossing procedure developed by~\cite{BFKMNS} to the sets $S_i$ in $\calS$ and obtain
disjoint sets $P_i$. Each set $P_i$ is a subset of $S_i$ and therefore $\cost^-(P_i)\leq \cost^-(S_i)$. Moreover, $\cost^+(P_i)\leq \cost^+(S_i) + \varepsilon OPT$ (see Section ~\ref{sec:aks-uncrossing}).
Hence, $P_1,\dots, P_k$ is a $2(1+\varepsilon)$ approximation for  Correlation Clustering with the AKS Min Max objective.

\bibliography{corr-clust}
\bibliographystyle{plainnat}

\appendix

\section{Uncrossing Overlapping Sets}\label{sec:aks-uncrossing}
For completeness, we present here a proof of the following lemma from~\cite{BFKMNS}. Denote by $\delta(S)$ the set of all positive edges leaving set
$S$ in graph $G$. Then, $\cost^+(S)=w(\delta(S))$.

\begin{lemma}[Uncrossing argument in~\cite{BFKMNS}]
There exists a polynomial-time algorithm that given a weighted graph $G=(V,E)$, a family of sets $S_1,\dots S_k$ that covers all
vertices in $G$, and a parameter $\varepsilon = 1/poly(n)$, finds disjoint sets $P_1,\dots,P_k$ covering $V$ such that for each $i$:
\begin{enumerate}
  \item $P_i\subset S_i$; and
  \item $w(\delta(P_i))\leq w(\delta(S_i)) + \varepsilon \max_j w(\delta(S_j))$.
\end{enumerate}
\end{lemma}
\begin{proof}
Let us first describe the uncrossing algorithm from the paper~~\cite{BFKMNS}. Initially, the algorithm
sets $P^0_i = S_i \setminus \cup_{j< i}S_j$ for each $i\in\{1,\dots, k\}$. Then, at every step $t$, it finds a set $P_i^t$ violating the
desired bound
\begin{equation}\label{eq:bound-uncrossing}
w(\delta(P^t_i))\leq w(\delta(S_i)) + \varepsilon \max_j w(\delta(S_j))
\end{equation}
and updates all sets as follows: $P^{t+1}_i = S_i$; and $P^{t+1}_j = P^t_j \setminus S_i$.
The algorithm terminates and outputs sets $P^t_i$ when bound~(\ref{eq:bound-uncrossing}) holds for all sets $P^t_i$.

It easy to see that the following loop invariants hold at every step of the algorithm: (1) each $P^t_i$ is a subset of $S_i$; (2) sets $P^t_i$ are disjoint; (4) sets $P^t_i$ cover all vertices in $V$. It is also immediate that when or if the algorithm terminates
sets $P^t_i$ satisfy~(\ref{eq:bound-uncrossing}). We only need to check that the algorithm stops in polynomial time.

Let $B = \max_j w(\delta(S_j))$. Define a potential function $\varphi(t) = \sum_{i=1}^k w(\delta(P_i))$. Observe that initially
$\varphi(0)\leq 2\sum_i w(\delta(S_i))$, since every edge cut by the partition $(P_1,\dots, P_k)$ belongs to some $S_i$.
Since, $w(\delta(S_i))\leq B$ for all $i$, we have $\varphi(0)\leq 2kB$. We will show that at every step of the algorithm
$\varphi(t)$ decreases by at least $2\varepsilon B$ and thus the algorithm terminates in at most $k/\varepsilon$ steps.

Consider step $t$ of the algorithm. Suppose that at this step of the algorithm, set $P^t_i$ violated the constraint and thus it was replaced by
$S_i$. Write,
\begin{align*}
\varphi(t+1) - \varphi(t) &= \Big(w(\delta(S_i)) -  w(\delta(P^t_i))\Big) + \sum_{j\neq i}(w(\delta(P^{t+1}_i)) - w(\delta(P^{t}_i)))\\
&=\Big(w(\delta(S_i)) -  w(\delta(P^t_i))\Big) + \sum_{j\neq i}(w(\delta(P^{t}_i\setminus S_i)) - w(\delta(P^{t}_i))).
\end{align*}
Observe that for every two subsets of vertices $P$ and $S$ we have the following inequality:
\begin{align*}
w(\delta(P\setminus S)) - w(\delta(P)) &= \phantom{-} \Big(w(E(P \setminus S, V\setminus P)) + w(E(P \setminus S, P\cap  S))\Big) \\
&\phantom{=} - \Big(w(E(P \setminus S, V\setminus P)) + w(E(P\cap  S, V \setminus P))\Big)\\
&= \phantom{-} w(E(P\cap  S, P \setminus S)) - w(E(P\cap  S, V \setminus P))\\
&\leq \phantom{-} w(E(P\cap  S, P \setminus S)) - w(E(P\cap  S, S \setminus P))\\
&=\phantom{-}\Big(w(E(P\cap S, P \setminus S)) + w(E(S \setminus P, P \setminus S))\Big)  \\ &\phantom{=} -
\Big(w(E(P\cap S, S \setminus P)) + w(E(P \setminus S, S \setminus P))\Big)\\
&=\phantom{-}w(E(S, P \setminus S)) -w(E(P, S \setminus P)).
\end{align*}
Also, note that $P^t_i \subset S_i \setminus P_j^t$ (since $P^t_i \subset S_i$ and all $P^t_j$ are disjoint). Consequently,
$w(E(P^t_i, P^t_j)) \leq  w(E(S_i \setminus P_j^t, P^t_j))$.
Therefore,
\begin{align*}
\varphi(t+1) - \varphi(t)
&= \Big(w(\delta(S_i)) -  w(\delta(P^t_i))\Big) + \sum_{j\neq i} w(E(S_i, P_j^t \setminus S_i)) -w(E(P_j^t, S_i \setminus P_j^t))\\
&\leq \Big(w(\delta(S_i)) -  w(\delta(P^t_i))\Big) + \sum_{j\neq i} w(E(S_i, P_j^t \setminus S_i)) -w(E(P_j^t, P_i^t)).
\end{align*}
Using again that the sets $P_j^t$ partition $V$ into disjoint pieces, we get
\begin{align*}
\varphi(t+1) - \varphi(t)
&\leq \Big(w(\delta(S_i)) -  w(\delta(P^t_i))\Big) + \sum_{j\neq i} w(E(S_i, P_j^t \setminus S_i)) -w(E(P_j^t, P_i^t))\\
&= \Big(w(\delta(S_i)) -  w(\delta(P^t_i))\Big) + w(E(S_i, \cup_{j\neq i} P_j^t \setminus S_i)) -w(E( \cup_{j\neq i} P_j^t, P_i^t))\\
&=\Big(w(\delta(S_i)) -  w(\delta(P^t_i))\Big) + \underbrace{w(E(S_i, V \setminus S_i))}_{=\delta(S_i)} - \underbrace{w(E(V\setminus P_i^t, P_i^t))}_{=\delta(P_i^t)}\\
&=2 \Big(w(\delta(S_i)) -  w(\delta(P^t_i))\Big) \leq -2\varepsilon B.
\end{align*}
This concludes the proof.
\end{proof}

\section{Integrality gap}

In this section, we present an integrality gap example for the convex program (P). We describe an instance of the $\ell_q$ $s-t$ cut problem on $\Theta(n)$ vertices that has an integrality gap of $\Omega(n^{\frac{1}{2} - \frac{1}{2q}})$. In our integrality gap example, we describe a layered graph with $\Theta(n^\frac{1}{2})$ layers, with each layer consisting of a complete bipartite graph on $\Theta(n^\frac{1}{2})$ vertices. Between each layer $i$ and $i+1$, there is a terminal $s_i$ which connects these two layers. Finally, the terminals $s$ and $t$ are located at opposite ends of this layered graph. We will observe that for any integral cut separating $s$ and $t$, there will be at least one vertex such that a large fraction of the edges incident to it are cut. We will show that there is a corresponding fractional solution that is cheaper compared to any integral cut as the fractional solution can ``spread'' the cut equally across the layers, thus not penalizing any individual layer too harshly. In doing so, we will prove the following theorem,

\begin{theorem}
The integrality gap for the convex relaxation (P) is $\Omega(n^{\frac{1}{2}-\frac{1}{2q}})$.
\end{theorem}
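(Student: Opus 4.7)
The plan is to exhibit an explicit layered gap instance on $\Theta(n)$ vertices whose LP value is $O(n^{1/(2q)})$ while every integral $s$-$t$ separating clustering has $\ell_q$ cost $\Omega(\sqrt{n})$, yielding the advertised ratio. First I would formalize the construction: fix $k=\Theta(\sqrt n)$, take $k+1$ terminals $s=s_0,s_1,\ldots,s_k=t$ and $k$ ``layers'' $L_1,\ldots,L_k$, where each $L_i$ is a complete bipartite graph $K_{k,k}$ between two sets $A_i,B_i$ of size $k$. In addition, $s_{i-1}$ is joined by a star to every vertex of $A_i$, and every vertex of $B_i$ is joined by a star to $s_i$. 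All these edges are positive and unit weight, and a single high-weight negative edge $(s,t)$ is added to force $x_{st}=1$ in any cheap fractional solution and to force $\calP(s)\ne \calP(t)$ in any reasonable integral solution. The total number of vertices is $\Theta(k^2)=\Theta(n)$.

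For the fractional upper bound I would embed every vertex on the unit interval: put $s_i$ at $i/k$ and put every vertex of $A_i\cup B_i$ at the midpoint $(i-\tfrac12)/k$, and set $x_{uv}=|\mathrm{pos}(u)-\mathrm{pos}(v)|$. This line metric automatically satisfies (P3)--(P5) and $x_{st}=1$. Every intra-layer edge in some $A_i$-$B_i$ then has length $0$, while every star edge has length $1/(2k)$. A direct calculation gives $y_{s_i}=1$ for internal terminals ($k$ outgoing and $k$ incoming star edges of length $1/(2k)$), $y_s=y_t=1/2$, and $y_v=1/(2k)$ for each of the $2k^2$ layer vertices. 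Hence $\|y\|_q^q=O(k)$ and $\sum_u z_u=\sum_u y_u=O(k)$ (unit weights), so the objective $\max(\|y\|_q,(\sum_u z_u)^{1/q})$ is $O(k^{1/q})=O(n^{1/(2q)})$.

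The heart of the argument is the integral lower bound: for every partition $\calP$ with $\calP(s)\ne \calP(t)$ I claim some vertex has at least $k/4$ disagreements, so $\|\disagree(\calP,E^+,E^-)\|_q \ge \|\disagree(\calP,E^+,E^-)\|_\infty \ge k/4 = \Omega(\sqrt n)$. Since $s_0$ and $s_k$ lie in different clusters, there exists $i$ with $\calP(s_{i-1})\ne \calP(s_i)$. Consider the $k^2$ length-three paths $s_{i-1}\to a\to b\to s_i$ with $a\in A_i$, $b\in B_i$; each must contain a cut edge. Classify the cut edges touching layer $i$ as $f_1$ star edges $s_{i-1}$-$A_i$, $f_3$ star edges $B_i$-$s_i$, and $f_2$ intra-layer edges $A_i$-$B_i$. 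Each star edge lies on $k$ such paths and each intra-layer edge on exactly one, yielding $k f_1 + f_2 + k f_3 \ge k^2$. Now a small case analysis finishes: if $f_1\ge k/4$ then $\disagree_{s_{i-1}}\ge k/4$; symmetrically for $f_3$; otherwise $k f_1+kf_3<k^2/2$, forcing $f_2\ge k^2/2$, and averaging over the $2k$ layer vertices produces some $v\in A_i\cup B_i$ with at least $k/2$ incident cut edges.

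Combining the two bounds gives integrality gap $\Omega(\sqrt n/n^{1/(2q)})=\Omega(n^{1/2-1/(2q)})$, as required. The main obstacle is the concentration claim in the integer bound: a naive min-cut argument only gives $\sum_v \disagree_v = \Omega(k)$, which by power-mean is too weak for large $q$; the three-path double-counting is what upgrades the bound to $\|\disagree\|_\infty=\Omega(k)$, and the only delicate part is choosing the threshold in the case analysis to keep the constants clean.
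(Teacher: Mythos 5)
Your construction is essentially the same as the paper's: a path of $\Theta(\sqrt n)$ terminals with a $K_{\sqrt n,\sqrt n}$ layer between each consecutive pair, a line-metric fractional solution of objective value $\Theta(n^{1/(2q)})$, and an integral argument showing some vertex in the separated layer must suffer $\Omega(\sqrt n)$ cut edges, so the gap is already witnessed by the $\ell_\infty$ term. The only substantive difference is in how you obtain the pointwise integral lower bound: you double-count the $k^2$ length-three paths $s_{i-1}\!\to\! a\!\to\! b\!\to\! s_i$ and split into cases on the star/intra-layer cut counts, whereas the paper argues by pigeonhole that if both terminals $s_{j-1},s_j$ have fewer than $a/2$ cut edges, then $s_j$ retains $\ge a/2$ neighbors in $R(G^j)$, forcing any surviving neighbor $u$ of $s_{j-1}$ in $L(G^j)$ to have $\ge a/2$ cut edges to $R(G^j)$. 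Both arguments deliver the same $\Omega(\sqrt n)$ bound on $\|\disagree\|_\infty$; your path-counting version is a touch more mechanical and arguably easier to make fully rigorous (the paper's statement even contains a stray $\Omega(a^{1/2})$ where $\Omega(a)$ is meant, though its derivation is fine), while the paper's pigeonhole argument is shorter. One minor point worth tightening in your write-up: you encode the $s$-$t$ separation via a single heavy negative edge $(s,t)$, which only forces $x_{st}$ \emph{close} to $1$ in cheap fractional solutions; the paper sidesteps this by posing the instance directly as a min $\ell_q$ $s$-$t$ cut problem, which is cleaner. Either way your bound goes through.
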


\begin{proof}
We now give a more formal description of the layered graph discussed above. The construction has two parameters $a$ and $b$, so we will call such a graph $G_{a,b}$. The graph consists of $b$ layers with each layer consisting of the complete bipartite graph $K_{a,a}$. We refer to layer $i$ of the graph as $G^i_{a,b}$ and refer to the left and right hand of the bipartition as $L(G^i_{a,b})$ and $R(G^i_{a,b})$ respectively. In addition to these layers, the graph consists of $b+1$ terminals $\{s, t, s_1, \ldots, s_{b-1}\}$ (we will refer to $s$ as $s_0$ and $t$ as $s_b$ interchangeably). For each $i \in \{1, \ldots, b-1\}$, the vertex $s_i$ is connected to all the vertices in $R(G^i_{a,b})$ and $L(G^{i+1}_{a,b})$. Finally, $s$ is connected to all the vertices in $L(G^1_{a,b})$ and $t$ is connected to all the vertices in $R(G^b_{a,b})$.

Consider any integral cut separating $s$ and $t$ in the graph $G_{a,b}$. Any such cut must disconnect at least one pair of consecutive terminals (if all pairs of consecutive terminals are connected, then $s$ is still connected to $t$). Thus let $j \in \{0, 1, \ldots, b\}$ be such that $s_{j-1}$ is disconnected from $s_{j}$ and consider the subgraph induced on $\{s_{j-1} \cup s_{j} \cup G^j_{a,b}\}$. We will show that this induced subgraph contains a vertex such that $\Omega(a^\frac{1}{2})$ of its incident edges are cut. Intuitively, since $s_{j-1}$ is separated from $s_j$, if the majority of the edges incident to $s_{j-1}$ and $s_j$ are not cut, then $s_{j-1}$ and $s_{j}$ have many neighbors in $L(G^{j}_{a,b})$ and $R(G^{j}_{a,b})$ respectively. As $G^{j}_{a,b}$ is highly connected, in order for $s_{j-1}$ to be separated from $s_j$, there must be a vertex in $G^j_{a,b}$ with many incident edges which are cut. If $cut(s_{j-1})$ or  $cut(s_{j})$ is at least $a/2$, then we are done. Otherwise, $s_j$ is connected to at least $a/2$ vertices in $R(G^{j}_{a,b})$, so every $u$ adjacent to $s_{j-1}$ must have at least $a/2$ incident edges which are cut. Therefore, $OPT^q \geq \Omega(a^q)$.

We now present a fractional cut separating $s$ and $t$. If an edge $e$ connects $s_i$ to a vertex in $R(G^i_{a,b})$ for some $i \in \{1, \ldots, b\}$, set the length of the edge to be $1/b$; otherwise set the edge length to be $0$. We let $x_{uv}$ be the shortest path metric in this graph. It is easy to see that such a solution is feasible. We now analyze the quality of this solution. 
For each $i \in \{1, \ldots, b\}$, we have $y_{s_i} = a/b$ and for each $u \in R(G^i_{a,b})$, we have $y_u = 1/b$. Thus 
$$LP^q = ab\Big(\frac{1}{b}\Big)^q + b\Big(\frac{a}{b}\Big)^q.$$ 
If $b>a$, then 
$$LP^q \leq ab\Big(\frac{1}{b}\Big) + b\Big(\frac{a}{b}\Big) = 2a$$ 
and if $b > a$, then 
$$LP^q \leq ab\Big(\frac{1}{b}\Big) + b\Big(\frac{a}{b}\Big)^q \leq a^q\Big(a^{-(q-1)} + b^{-(q-1)}\Big).$$

Setting $a = b = \Omega({n^\frac{1}{2}})$ gives 
$$\frac{OPT^q}{LP^q} = \Omega\Big(n^{\frac{q}{2}-\frac{1}{2}}\Big),$$
so the integrality gap is $\frac{OPT}{LP} = \Omega(n^{\frac{1}{2}-\frac{1}{2q}})$.
\end{proof}

\begin{figure*}
  \centering
  \begin{center}
\begin{tikzpicture}

\fill[fill=black!20!white,draw=black,thick] (-12,0) ellipse (1.3 and 3.3);
\node[shape=circle,fill=black] (j1) at (-12.5,2.5) {};
\node[shape=circle,fill=black] (j2) at (-12.5,2) {};
\node[shape=circle,fill=black] (j3) at (-12.5,-2.5) {};

\node[shape=circle,fill=black] (j4) at (-11.5,2.5) {};
\node[shape=circle,fill=black] (j5) at (-11.5,2) {};
\node[shape=circle,fill=black] (j6) at (-11.5,-2.5) {};

\draw[draw=black]  (j1) edge node {} (j4);
\draw[draw=black]  (j1) edge node {} (j5);
\draw[draw=black]  (j1) edge node {} (j6);

\draw[draw=black]  (j2) edge node {} (j4);
\draw[draw=black]  (j2) edge node  {} (j5);
\draw[draw=black]  (j2) edge node  {} (j6);

\draw[draw=black]  (j3) edge node  {} (j4);
\draw[draw=black]  (j3) edge node  {} (j5);
\draw[draw=black]  (j3) edge node  {} (j6);

\path (j2) -- (j3) node [black, font=\Huge, midway, sloped] {$\dots$};
\path (j5) -- (j6) node [black, font=\Huge, midway, sloped] {$\dots$};

\node[shape=circle, fill=white, draw=black] (s) at (-14,0) {$s$};
\node[shape=circle, fill=white, draw=black] (s1) at (-9.75,0) {$s_1$};
\node[shape=circle, fill=white, draw=black] (t) at (-0.5,0) {$t$};

\fill[fill=black!20!white,draw=black,thick] (-7.5,0) ellipse (1.3 and 3.3);
\node[shape=circle,fill=black] (k1) at (-8,2.5) {};
\node[shape=circle,fill=black] (k2) at (-8,2) {};
\node[shape=circle,fill=black] (k3) at (-8,-2.5) {};

\node[shape=circle,fill=black] (k4) at (-7,2.5) {};
\node[shape=circle,fill=black] (k5) at (-7,2) {};
\node[shape=circle,fill=black] (k6) at (-7,-2.5) {};

\draw[draw=black]  (k1) edge node {} (k4);
\draw[draw=black]  (k1) edge node {} (k5);
\draw[draw=black]  (k1) edge node {} (k6);

\draw[draw=black]  (k2) edge node {} (k4);
\draw[draw=black]  (k2) edge node  {} (k5);
\draw[draw=black]  (k2) edge node  {} (k6);

\draw[draw=black]  (k3) edge node  {} (k4);
\draw[draw=black]  (k3) edge node  {} (k5);
\draw[draw=black]  (k3) edge node  {} (k6);

\path (k2) -- (k3) node [black, font=\Huge, midway, sloped] {$\dots$};
\path (k5) -- (k6) node [black, font=\Huge, midway, sloped] {$\dots$};

\node at (-5,0) {$\dotsb$};

\fill[fill=black!20!white,draw=black,thick] (-2.5,0) ellipse (1.3 and 3.3);
\node[shape=circle,fill=black] (l1) at (-3,2.5) {};
\node[shape=circle,fill=black] (l2) at (-3,2) {};
\node[shape=circle,fill=black] (l3) at (-3,-2.5) {};

\node[shape=circle,fill=black] (l4) at (-2,2.5) {};
\node[shape=circle,fill=black] (l5) at (-2,2) {};
\node[shape=circle,fill=black] (l6) at (-2,-2.5) {};

\draw[draw=black]  (l1) edge node {} (l4);
\draw[draw=black]  (l1) edge node {} (l5);
\draw[draw=black]  (l1) edge node {} (l6);

\draw[draw=black]  (l2) edge node {} (l4);
\draw[draw=black]  (l2) edge node  {} (l5);
\draw[draw=black]  (l2) edge node  {} (l6);

\draw[draw=black]  (l3) edge node  {} (l4);
\draw[draw=black]  (l3) edge node  {} (l5);
\draw[draw=black]  (l3) edge node  {} (l6);

\path (l2) -- (l3) node [black, font=\Huge, midway, sloped] {$\dots$};
\path (l5) -- (l6) node [black, font=\Huge, midway, sloped] {$\dots$};

\draw[draw=black]  (s) edge node {} (j1);
\draw[draw=black]  (s) edge node {} (j2);
\draw[draw=black]  (s) edge node {} (j3);

\draw[draw=black]  (s1) edge node {} (j4);
\draw[draw=black]  (s1) edge node {} (j5);
\draw[draw=black]  (s1) edge node {} (j6);

\draw[draw=black]  (s1) edge node {} (k1);
\draw[draw=black]  (s1) edge node {} (k2);
\draw[draw=black]  (s1) edge node {} (k3);

\draw[draw=black]  (t) edge node {} (l4);
\draw[draw=black]  (t) edge node {} (l5);
\draw[draw=black]  (t) edge node {} (l6);

\draw [decorate,decoration={brace,amplitude=10pt},xshift=0.45cm,yshift=0pt]
(-7,2.5) -- (-7,-2.5) node [black,midway,xshift=0.55cm]
{\footnotesize $a$};

\draw [decorate,decoration={brace,amplitude=10pt},xshift=0cm,yshift=0.45pt]
(-12.5,3.5) -- (-2,3.5) node [black,midway,yshift=0.55cm]
{\footnotesize $b$};

\end{tikzpicture}
\end{center} 
  \caption{Integrality gap example.}\label{fig:int-gap}
\end{figure*}

\section{Hardness of approximation}
In this section, we prove the following hardness result.

\begin{theorem}
It is NP-hard to approximate the min $\ell_\infty$ s-t cut problem within a factor of $2 - \varepsilon$ for every positive $\varepsilon$.
\end{theorem}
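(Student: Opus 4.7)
The plan is to exhibit a hardness gap at the first nontrivial integer threshold of the objective. Specifically, I would show that it is NP-hard to decide, for a connected graph $G$ with terminals $s,t$, whether the minimum $\ell_{\infty}$ $s$-$t$ cut satisfies $\OPT = 1$ or $\OPT \ge 2$. Since these two cases differ by a factor of $2$ and the optimum is integer-valued on integral cuts, any polynomial-time $(2-\varepsilon)$-approximation would separate them (returning a value strictly less than $2$ in the first case and at least $2$ in the second), and would therefore solve this decision problem in polynomial time.

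For the reduction, I would use the Matching Cut problem: given a graph $H$, decide whether $V(H)$ admits a bipartition into two nonempty parts whose crossing edges form a matching (so that no vertex is incident to two of them). This was shown to be NP-complete by Chv\'atal (1984), and I would rely on that result as a black box. The version I actually need is the rooted ``$s$-$t$ Matching Cut'' that fixes $s$ on one side and $t$ on the other; its NP-hardness follows from Matching Cut by a straightforward search over the $\binom{n}{2}$ possible terminal pairs, since every matching cut separates some pair and, conversely, a matching cut for any fixed pair is a matching cut of $H$.

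The central combinatorial observation I would then verify is that on a connected graph $G$, the min $\ell_{\infty}$ $s$-$t$ cut satisfies $\OPT = 1$ if and only if $(G,s,t)$ admits an $s$-$t$ matching cut, and $\OPT \ge 2$ otherwise. The forward direction is immediate: connectedness forces at least one cut edge in any $s$-$t$ separating partition, and an $\ell_{\infty}$ value equal to $1$ says exactly that every vertex is incident to at most one cut edge, i.e., that the cut edges form a matching. For the contrapositive, if no such matching cut exists, then every $s$-$t$ separating partition leaves some vertex incident to at least two cut edges. Consequently, feeding $(G,s,t)$ into a hypothetical $(2-\varepsilon)$-approximation and comparing its output against the threshold $2$ would decide $s$-$t$ Matching Cut and hence Matching Cut.

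The main obstacle is essentially external: the argument rests entirely on Chv\'atal's NP-completeness proof for Matching Cut (whose original reduction is from NAE-$3$-SAT), which I would invoke rather than reprove. Once that is in hand the remainder is purely structural and requires no gap amplification: the factor-$2$ inapproximability falls directly out of the integrality of the objective together with the exact correspondence between $\ell_{\infty}$-value-$1$ cuts and matching cuts.
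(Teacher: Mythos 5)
Your proof is correct, and it takes a genuinely different route from the paper. The paper gives a self-contained reduction directly from \textsc{3SAT}: for each formula $\phi$ it builds a graph $G_\phi$ out of per-literal gadgets (four nodes $x_i^T, x_i^F, x_i^\dagger, \bar{x}_i^\dagger$ hung between infinite-weight anchors at ``True'' and ``False'') and per-clause gadgets (five nodes), and proves that $\phi$ is satisfiable iff the minimum $\ell_\infty$ True--False cut has value $1$, and value $\ge 2$ otherwise. You instead observe the clean structural equivalence that, on a connected graph, an $s$-$t$ separating partition has $\ell_\infty$ cut value $1$ precisely when its crossing edges form a nonempty matching, i.e.\ an $s$-$t$ matching cut; the multi-part-to-two-part collapse $A = \cP(s)$, $B = V\setminus A$ only shrinks the cut-edge set, so this really does characterize $\OPT = 1$ versus $\OPT \ge 2$. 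Combined with Chv\'atal's NP-completeness of \textsc{Matching Cut} (and your Turing reduction to the rooted variant by trying all $\binom{n}{2}$ terminal pairs), a $(2-\varepsilon)$-approximation would decide \textsc{Matching Cut}, giving the hardness. What your approach buys: it is shorter, avoids designing and verifying bespoke gadgets, yields unweighted instances (the paper's construction uses infinite-weight edges), and makes the factor-$2$ threshold transparent as the $1$-versus-$2$ integrality gap of the objective. What it costs: it leans on an external NP-completeness result rather than being self-contained, and the reduction to the rooted problem is a Turing rather than a Karp reduction (harmless for an ``unless P\,=\,NP'' conclusion, but worth flagging). Both arguments are sound; the paper's is self-contained, yours is more structural and arguably more illuminating about \emph{why} the threshold is $2$.
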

\begin{proof}
The proof follows a reduction from $3$SAT. We will describe a procedure that reduces every instance of a $3$CNF formula $\phi$ to a graph $G_\phi$ such that the minimum $\ell_\infty$ \textit{s-t} cut for $G_\phi$ has a certain value if and only if the formula $\phi$ is satisfiable.

\medskip

\noindent \textbf{Reduction from 3SAT:} Given a $3$CNF instance $\phi$ with $n$ literals and $m$ clauses, we describe a graph $G_\phi$ with $(2 + 4n + 5m)$ vertices
and $(6n + 8m)$ edges. We refer to the vertex and edge set of $G_\phi$ as $V(G_\phi)$ and $E(G_\phi)$. For every literal $x_i,  i \in \{1,\dots, n\}$, we have four nodes,
$x^T_i$, $x^F_i$, $x^{\dagger}_i$ and $\br{x}_i^\dagger$. Additionally, we have a ``False'' and a ``True'' node. For every $i \in \{1,\dots,n\}$, we connect 
``False'' with $x^F_i$ and ``True'' with $x^T_i$ using an infinite weight edge. Both $x^F_i$ and $x^T_i$ are connected to $x^{\dagger}_i$ and $\br{x}_i^\dagger$ using edges
of weight $1$.

For every clause $C$ in $\phi$, we will create a gadget in $G_\phi$ consisting of five nodes. We will refer to the subgraph induced by these nodes as $G_\phi[C]$. Let the 
clause $C = (y_1 \lor y_2 \lor y_3)$.  We have a node in the gadget for each $y_i, i \in \{1,2,3\}$, and two additional nodes $C_a$ and $C_b$. We connect $y_2$ and $y_3$ to $C_b$, 
and $y_1$ and $C_b$ to $C_a$, all using unit weight edges.

We connect the gadget $G_\phi[C]$ for clause $C = (y_1 \lor y_2 \lor y_3)$ to the main graph as follows. For each $i \in \{1,2,3\}$, connect the vertex for the 
literal $y_i$ to the vertex $y^\dagger_i$ with a unit weight edge. Finally connect the node $C_a$ to the ``True'' vertex using an infinite weight edge. An example of
a 3CNF formula $\phi$ and the corresponding $G_\phi$ is given in Figure~\ref{figure:hardness}.

\textbf{Fact 1.} Consider the gadget $G_\phi[C]$ for the clause $C = (y_1 \lor y_2 \lor y_3)$. If all three nodes $y_1, y_2$, and $y_3$ need to be disconnected
from $C_a$, then either $|\cut_{C_a}| = 2$ or $\cut_{C_b} = 2$.  If at most two of the three nodes $y_1, y_2$ and $y_3$ need to be disconnected from $C_a$, then there 
is a cut that separates those nodes from $C_a$ such that both $\cut_{C_a}$ and $\cut_{C_b}$ are at most $1$.

\begin{lemma}
Given a 3CNF formula $\phi$, consider the graph $G_\phi$ constructed according to the reduction described above. The formula $\phi$ is satisfiable
if and only if the minimum $\ell_\infty$ True-False cut $\cP$ for the graph $G_\phi$ has value $1$, that is, $||\cut_\cP||_\infty = 1$.
\end{lemma}

\begin{proof}
\textbf{3SAT $\Rightarrow$ minimum $\ell_\infty$ \textit{True-False cut}}: If the 3CNF formula $\phi$ is satisfiable, then the graph $G_\phi$ has a minimum $\ell_\infty$ \textit{s-t} cut of value exactly $1$. This can be seen as follows. Given a satisfying assignment $x^*$, we will construct a cut $E_\cP$ (and corresponding partition $\cP$) such that for every vertex $u \in V(G_\phi)$, $\cut_\cP(u) \leq 1$. For every $i \in \{1,\dots, n\}$, if $x^*_i$ is True, then include $(x^{\dagger}_i, x_i^F)$ and $(\br{x}_i^\dagger, x_i^T)$ as part of the cut $E_\cP$, else include $(x^{\dagger}_i, x_i^T)$ and $(\br{x}_i^\dagger, x_i^F)$ as part of the cut $E_\cP$. Note that this cuts exactly one edge incident to each vertex $x^{\dagger}_i, x_i^F, \br{x}_i^\dagger$ and $x_i^T$ for $i \in \{1,\dots, n\}$. Since $\phi$ has a satisfiable assignment, each clause $C$ has at least one literal which is True, and hence the node corresponding to this literal is not connected to the vertex False in $G_\phi - E_\cP$. Thus, each clause $C$ has at most two literals that are False, and thus there are at most two False-True paths that go through this gadget. From Fact 1, we can know that we can include edges from $E(G_\phi[C])$ in $E_\cP$ such that both $\cut_\cP(C_a)$ and $\cut_\cP(C_b)$ are at most $1$ and the False-True paths through this gadget are disconnected. Thus, cut $E_\cP$ disconnects True from False such that $||\cut_\cP(G_\phi)||_\infty = 1$.

\medskip

\textbf{minimum $\ell_\infty$ \textit{True-False cut} $\Rightarrow$ 3SAT}:  Let $G_\phi$ be the graph constructed for the 3CNF formula $\phi$ such that there is a cut $E_\cP \subseteq E(G_\phi)$ (and corresponding partition $\cP$) such that $\cP$ separates True from False and $||\cut_\cP(G_\phi)||_\infty = 1$. We will construct a satisfying assignment $x^*$ from the formula $\phi$. Since $\cut_\cP(u) \leq 1$ for every $u \in V(G_\phi)$, none of the $(True, x^T_i)$, $(x^F_i, False)$ edges are part of the cut $\cP$ for $i \in \{1,\dots, n\}$. In order for True to be separated from False, either the edges $(x^{\dagger}_i, x_i^F)$ and $(\br{x}_i^\dagger, x_i^T)$ are part of the cut $E_\cP$, or the edges $(x^{\dagger}_i, x_i^T)$ and $(\br{x}_i^\dagger, x_i^F)$ are part of the cut $E_\cP$. This gives us our assignment; for each $i \in \{1,\dots, n\}$, if $(x^T_i, x^{\dagger}_i) \in E \setminus E_\cP$, then assign $x^*_i$ as True and $\br{x}^*$ as False. Otherwise $(x^F_i, x^{\dagger}_i) \in E \setminus E_\cP$, so assign $x^*_i$ as False and $\br{x}^*$ as True. Now, we show that $x^*$ is a satisfiable assignment for $\phi$. To see this, note that for each clause $C$, there exists at least one literal $y_i$ such that the node corresponding to $y_i$ is still connected to $C_a$. As the cut $E_\cP$ separates True and False, $(y^\dagger_i, y^T_i) \in E \setminus E(G_\phi)$ and hence $y^*_i = $ True. Thus, the assignment $x^*$ is satisfiable for $\phi$.

\end{proof}

Thus, we can conclude Theorem 5.1 from the reduction procedure provided and Lemma 5.2.
\end{proof}

\begin{figure*}
  \centering
  \begin{tikzpicture}

\node[draw,circle,minimum size=1cm,inner sep=0pt] at (-7,0) (T) {TRUE};
\node[draw,circle,minimum size=1cm,inner sep=0pt] at (7,0) (F) {FALSE};

\node[draw,circle,minimum size=0.7cm,inner sep=0pt] at (0,8) (1) {$x^\dagger_1$};
\node[draw,circle,minimum size=0.7cm,inner sep=0pt] at (0,7.2) (1B) {$\br{x}_1^\dagger$};
\node[draw,circle,minimum size=0.4cm,inner sep=0pt] at (-3.5,7.6) (1T) {$x^T_1$};
\node[draw,circle,minimum size=0.4cm,inner sep=0pt] at (3.5,7.6) (1F) {$x^F_1$};

\node[draw,circle,minimum size=0.7cm,inner sep=0pt] at (0,6) (2) {$x^\dagger_2$};
\node[draw,circle,minimum size=0.7cm,inner sep=0pt] at (0,5.2) (2B) {$\br{x}_2^\dagger$};
\node[draw,circle,minimum size=0.4cm,inner sep=0pt] at (-3.5,5.6) (2T) {$x^T_2$};
\node[draw,circle,minimum size=0.4cm,inner sep=0pt] at (3.5,5.6) (2F) {$x^F_2$};

\node[draw,circle,minimum size=0.7cm,inner sep=0pt] at (0,4)  (3) {$x^\dagger_3$};
\node[draw,circle,minimum size=0.7cm,inner sep=0pt] at (0,3.2) (3B) {$\br{x}_3^\dagger$};
\node[draw,circle,minimum size=0.4cm,inner sep=0pt] at (-3.5,3.6) (3T) {$x^T_3$};
\node[draw,circle,minimum size=0.4cm,inner sep=0pt] at (3.5,3.6) (3F) {$x^F_3$};

\node[draw,circle,minimum size=0.7cm,inner sep=0pt] at (0,2) (4) {$x^\dagger_4$};
\node[draw,circle,minimum size=0.7cm,inner sep=0pt] at (0,1.2) (4B) {$\br{x}_4^\dagger$};
\node[draw,circle,minimum size=0.4cm,inner sep=0pt] at (-3.5,1.6) (4T) {$x^T_4$};
\node[draw,circle,minimum size=0.4cm,inner sep=0pt] at (3.5,1.6) (4F) {$x^F_4$};

\node[draw,circle,minimum size=0.7cm,inner sep=0pt] at (0,0) (5) {$x^\dagger_5$};
\node[draw,circle,minimum size=0.7cm,inner sep=0pt] at (0,-0.8) (5B) {$\br{x}_5^\dagger$};
\node[draw,circle,minimum size=0.4cm,inner sep=0pt] at (-3.5,-0.4) (5T) {$x^T_5$};
\node[draw,circle,minimum size=0.4cm,inner sep=0pt] at (3.5,-0.4) (5F) {$x^F_5$};

%Edges
\draw[red!80!black]
  (T) -- (1T)  (T) -- (2T) (T) -- (3T)  (T) -- (4T) (T) -- (5T)
  (F) -- (1F)  (F) -- (2F) (F) -- (3F)  (F) -- (4F) (F) -- (5F);

  \draw[black]
  (1T) -- (1) (1T) -- (1B) (1F) -- (1) (1F) -- (1B)
  (2T) -- (2) (2T) -- (2B) (2F) -- (2) (2F) -- (2B)
  (3T) -- (3) (3T) -- (3B) (3F) -- (3) (3F) -- (3B)
  (4T) -- (4) (4T) -- (4B) (4F) -- (4) (4F) -- (4B)
  (5T) -- (5) (5T) -- (5B) (5F) -- (5) (5F) -- (5B);

  %Gadgets

 %First Gadget
\node[draw,circle,minimum size=0.7cm,inner sep=0pt] at (5,-2)  (p1){$x_1$};
\node[draw,circle,minimum size=0.7cm,inner sep=0pt] at (5,-2.8)  (p2){$\br{x}_2$};
\node[draw,circle,minimum size=0.7cm,inner sep=0pt] at (5,-3.6)  (p3){$x_3$};
\node[draw,circle,minimum size=0.4cm,inner sep=0pt] at (3,-2.4)  (pa){$a$};
\node[draw,circle,minimum size=0.4cm,inner sep=0pt] at (4,-3.2)  (pb){$b$};
\draw[black] (p2) -- (pb) (p3) -- (pb) (p1) -- (pa) (pa) -- (pb);

%Second Gadget
\node[draw,circle,minimum size=0.7cm,inner sep=0pt] at (1,-3)  (q1){$x_2$};
\node[draw,circle,minimum size=0.7cm,inner sep=0pt] at (1,-3.8)  (q2){$\br{x}_4$};
\node[draw,circle,minimum size=0.7cm,inner sep=0pt] at (1,-4.6)  (q3){$\br{x}_5$};
\node[draw,circle,minimum size=0.4cm,inner sep=0pt] at (-1,-3.4)  (qa){$a$};
\node[draw,circle,minimum size=0.4cm,inner sep=0pt] at (0,-4.2)  (qb){$b$};
\draw[black] (q2) -- (qb) (q3) -- (qb) (q1) -- (qa) (qa) -- (qb);

%Third Gadget
\node[draw,circle,minimum size=0.7cm,inner sep=0pt] at (-3,-4)  (r1){$\br{x}_1$};
\node[draw,circle,minimum size=0.7cm,inner sep=0pt] at (-3,-4.8)  (r2){$x_4$};
\node[draw,circle,minimum size=0.7cm,inner sep=0pt] at (-3,-5.6)  (r3){$x_5$};
\node[draw,circle,minimum size=0.4cm,inner sep=0pt] at (-5,-4.4)  (ra){$a$};
\node[draw,circle,minimum size=0.4cm,inner sep=0pt] at (-4,-5.2)  (rb){$b$};
\draw[black] (r2) -- (rb) (r3) -- (rb) (r1) -- (ra) (ra) -- (rb);

%Edges for the gadgets
\draw[black]
(1) to[out=20,in=0, looseness=1.5] (p1)  (2B) to[out=40,in=0, looseness=2] (p2) (3) to[out=20,in=0, looseness=2]  (p3)
(2) to[out=-45,in=80, looseness=1.5] (q1)  (4B) to[out=-40,in=50, looseness=2] (q2) (5B) to[out=-40,in=20, looseness=2]  (q3)
(1B) to[out=-160,in=60, looseness=1.1] (r1)  (4) to[out=-130,in=50, looseness=2] (r2) (5) to[out=-130,in=20, looseness=1]  (r3);

\draw[red!80!black] (T) -- (pa) (T) -- (qa) (T) -- (ra);
\end{tikzpicture}
  \caption{$G_\phi$ for the 3CNF  formula $\phi = (x_1 \lor \br{x}_2 \lor x_3) \land (x_2 \lor \br{x}_4 \lor \br{x}_5) \land (\br{x}_1 \lor x_4\lor x_5)$.}\label{figure:hardness}
\end{figure*}

\section{Correlation Clustering on Complete Bipartite Graphs}
Let $(V = L \cup R, E)$ be a complete bipartite graph with $L$ and $R$ being the bipartition of the vertex set. In this section, we provide and analyze an algorithm for correlation clustering on complete graphs with an approximation guarantee of $5$ for minimizing the mistakes on one side of the bipartition (which without loss of generality will be $L$). The algorithm and analysis for complete bipartite graphs is very similar to the algorithm and analysis for complete graphs. At each step $t$ of our algorithm, we select a cluster center $w_t \in L$ and a cluster $C_t \subseteq (L \cup R)$ and remove it from the graph. This clustering step is repeated until all vertices in $L$ are part of some cluster. If there are any remaining vertices in $R$ which are unclustered, we put them in a single cluster.

Similar to the definition of $\Ball(w, \rho)$ in Section~\ref{sec:cor-clust-complete}, we define $\Ball_S(w, \rho) = \{u \in S : x_{uw} \leq \rho\}$. We select the cluster centers $w_t$ in step $t$ as follows. Let $V_t \subseteq V$ be the set of unclustered vertices at the start of step $t$. We redefine $L^S_t(w) = \sum_{u \in Ball_{V_t \cap S}(w, r)} r - x_{uw}$. We select $w_t$ as the vertex $w \in L$ that maximizes $L_t(w)$. We then select $Ball_{L \cup R}(w, 2r)$ as our cluster and repeat. A pseudocode for the above algorithm is provided in Figure~\ref{fig:Alg3}.

\begin{figure}
\notarxiv{\begin{center}}
\begin{algorithm}%%[H]
\openLP
\smallskip

\noindent \textbf{Input: } Optimal solution $x$ to the linear program (P).\notarxiv{\\}

\par\noindent \textbf{Output: } Clustering $\calC$.

\medskip
\begin{enumerate}
\item Let  $V_0 = V$, $r = 1/5$, $t = 0$.
\item \textbf{while} ($V_t \cap L \neq \varnothing$)
  \begin{itemize}
	\item Find $w_t = \argmax\limits_{w \in L} L^R_t(w)$.
	\item Create a cluster $C_t = \Ball_{L \cup R}(w_t,2r)\cap V_t$.
	\item Set $V_{t+1} = V_t \setminus C_t$ and $t = t+1$.
\end{itemize}
\item Let $\calC_L = (C_0,\dots, C_{t-1})$.
\item \textbf{if} ($R \cap V_t \neq \emptyset$)
\begin{itemize}
	\item Let $C_R = R \cap V_t$.
\end{itemize}
\item Return  $\calC = \calC_L \cup \{C_R\}$.
\end{enumerate}
\notarxiv{\caption{Correlation Clustering on complete bipartite graphs}\label{alg:corelation-bipartite-complete}}
\end{algorithm}
\notarxiv{\end{center}}
\closeLP
\caption{Algorithm for Correlation Clustering on complete bipartite graphs.}\label{fig:Alg3}\label{alg:corelation-bipartite-complete}
\end{figure}

\subsection{Analysis}
In this section, we present an analysis of our algorithm.

\begin{theorem}
Algorithm 3 gives a $5$-approximation for Correlation Clustering on complete biparite graphs where disagreements are measured on only one side of the bipartition.
\end{theorem}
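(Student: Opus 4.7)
The plan is to repeat the analysis of Theorem~\ref{thm:5-apx-main} while restricting the per-vertex profit accounting to $u\in L$. First, I would define $\pft{u}=\sum_{(u,v)\in E}\lp{u,v}-r\sum_{(u,v)\in E}\alg{u,v}$ with $r=1/5$ only for $u\in L$, and decompose $\pft{u}=\sum_t\prft{u}{t}$ across the steps of Algorithm~\ref{alg:corelation-bipartite-complete} exactly as in Section~\ref{sec:cor-clust-complete}. Showing $\prft{u}{t}\ge 0$ for every $u\in L\cap V_t$ and every step $t$ immediately yields $ALG(u)\le 5\,y(u)$ for all $u\in L$, and hence $\bigl(\sum_{u\in L}ALG(u)^q\bigr)^{1/q}\le 5\bigl(\sum_{u\in L}y(u)^q\bigr)^{1/q}\le 5\cdot LP$, which is the desired $5$-approximation for the $\ell_q$ objective restricted to $L$. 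The final bucket $C_R$ formed by the algorithm needs no extra analysis: every edge from $u\in L$ has its unique $R$-endpoint, and it contributes to $\pft{u}$ at the unique step in which the first of its endpoints is clustered.

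Fix a step $t$, a vertex $u\in L\cap V_t$, and let $w=w_t\in L$ be the greedy center $w_t=\argmax_{w'\in L}L^R_t(w')$. Since $G$ is bipartite, every $v$ with $(u,v)\in E$ lies in $R$, so $\prft{u}{t}$ is a sum over $v\in R\cap V_t$ with $(u,v)\in \Delta E_t$. I would split by the value of $x_{uw}$. For the two extreme ranges $x_{uw}\in[0,r]$ and $x_{uw}\in(3r,1]$, the triangle inequality forces every disagreeing positive edge $(u,v)$ to satisfy $x_{uv}\ge r$ exactly as in Lemmas~\ref{lem:0r} and~\ref{lem:r1} (either $u\in C_t$ and $v\notin C_t$ with $x_{uv}\ge x_{vw}-x_{uw}\ge r$, or symmetrically), while Claim~\ref{cl:neg-edge-profit-nenneg} still applies to negative disagreeing edges inside the diameter-$4r$ cluster. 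Both arguments use only the metric $x$ on $V$ and transfer verbatim, independent of completeness.

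The interesting range is $x_{uw}\in(r,3r]$. Here I would split $\prft{u}{t}=P_{high}(u)+P_{low}(u)$, with $P_{high}(u)=\sum_{v\in \Ball_R(w,r)\cap V_t}\prft{u,v}{t}$ and $P_{low}(u)=\sum_{v\in (R\cap V_t)\setminus \Ball_R(w,r)}\prft{u,v}{t}$, and note that because every partner of $u$ lies in $R$, these two sums together exhaust $\prft{u}{t}$. The pointwise bounds $\prft{u,v}{t}\ge r-x_{vw}$ for close $v$ and $\prft{u,v}{t}\ge \min(x_{uv}-r,0)$ for far $v$, i.e., the bipartite analogues of Claims~\ref{claim:prof-from-core-v} and~\ref{claim:prof-uv-lower-bound}, follow from the same triangle inequality case distinctions as in the complete-graph proof. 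Summing yields $P_{high}(u)\ge L^R_t(w)$ and $P_{low}(u)\ge -L^R_t(u)$. The greedy rule then seals the argument: $u\in L$ was eligible to be the center and $w$ was the maximizer of $L^R_t$ over $L$, so $L^R_t(w)\ge L^R_t(u)$ and hence $\prft{u}{t}\ge 0$.

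The main conceptual check, such as it is, lies in verifying that the two asymmetries of the bipartite setup line up: centers are drawn from $L$ and ranked by a potential summed over $R$-neighbors, while the profits we must control sit on $L$-vertices and involve sums over $R$-neighbors. Because these restrictions match precisely, the transplantation from the complete-graph analysis to the bipartite one costs nothing, and combining the three ranges gives $\prft{u}{t}\ge 0$ for every $u\in L\cap V_t$ and every step $t$, completing the proof.
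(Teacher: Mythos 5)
Your proposal is correct and follows essentially the same route as the paper's appendix proof: define the per-vertex profit for $u\in L$, decompose it over steps, dispatch $x_{uw}\in[0,r]\cup(3r,1]$ via the triangle-inequality lemmas, and for $x_{uw}\in(r,3r]$ split into $P_{high}$ and $P_{low}$ over $\Ball_R(w,r)\cap V_t$ and its complement, bounding them by $L^R_t(w)$ and $-L^R_t(u)$ and invoking the greedy choice of $w_t$. Your remark that the leftover cluster $C_R$ needs no separate accounting (because every edge incident to $u\in L$ is consumed by the step in which $u$ is clustered) is a helpful clarification that the paper leaves implicit, but it does not change the argument.
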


The proof of this theorem is almost identical to the proof of Theorem~\ref{thm:5-apx-main}. We define
$\lp{u,v}$, $\alg{u,v}$, $\prft{u,v}{t}$ for every edge $(u,v)$ and $\pft{u}$, $\prft{u}{t}$ for every vertex $u$
as in Section~\ref{sec:cor-clust-complete}. We then
show that for each vertex $u \in L$, we have $\pft{u} \geq 0$ and, therefore, the number of
disagreeing edges incident to $u$ is upper bounded by $5y(u)$:
$$ALG(u) = \smashoperator[r]{\sum_{v:(u,v) \in E}} \alg{u,v} \leq \frac{1}{r} \smashoperator[r]{\sum_{v:(u,v) \in E}} \lp{u,v} = 5y(u).$$
Thus, $\|ALG\|_q \leq 5 \|y\|_q$ for any $q\geq 1$. Consequently, the approximation ratio of the algorithm is at most $5$ for any norm $\ell_q$.

\begin{lemma}
For every $u\in L$, we have $\pft{u} \geq 0$.
\end{lemma}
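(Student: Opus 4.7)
The plan is to mirror the proof of Lemma~\ref{lem:pft} from the complete graph case, adapted to the bipartite setting. First I would decompose $\pft{u} = \sum_t \prft{u}{t}$, noting that the sets $\Delta E_t$ are disjoint, so it suffices to show $\prft{u}{t} \geq 0$ for each $t$ and each $u \in L \cap V_t$ (the profit is zero for $u \notin V_t$). Fix such a $t$ and let $w = w_t \in L$ be the cluster center chosen at this step. Since $u \in L$ and the bipartite graph has no edges within $L$, every edge incident to $u$ goes to some $v \in R$, so only the restriction of $V_t$ to $R$ matters in the summations defining $\prft{u}{t}$.

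Next, I would handle negative edges as in Claim~\ref{cl:neg-edge-profit-nenneg}: if $(u,v) \in E^-$ is in disagreement, both endpoints lie in $\Ball_{L \cup R}(w, 2r)$, so $x_{uv} \leq 4r$ and $\lp{u,v} = 1 - x_{uv} \geq r$, giving $\prft{u,v}{t} \geq 0$. Hence only positive edges require real analysis. Then I would split on $x_{uw}$ into three ranges. If $x_{uw} \leq r$, then $u$ lies in $C_t$, and any disagreeing positive edge $(u,v)$ has $v \notin C_t$, so $x_{vw} \geq 2r$ and by the triangle inequality $x_{uv} \geq r$; hence $\prft{u,v}{t} \geq 0$ (mirroring Lemma~\ref{lem:0r}). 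Symmetrically, if $x_{uw} \geq 3r$, then $u \notin C_t$ and any disagreeing positive $(u,v)$ has $v \in C_t$, giving $x_{uv} \geq r$ by the triangle inequality (mirroring Lemma~\ref{lem:r1}).

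The main obstacle, and the only case that requires the greedy choice of $w_t$, is when $x_{uw} \in (r, 3r]$. Here I would split
\[
\prft{u}{t} = \underbrace{\sum_{v \in \Ball_R(w,r) \cap V_t} \prft{u,v}{t}}_{P_{\mathrm{high}}(u)} \;+\; \underbrace{\sum_{v \in (R \setminus \Ball_R(w,r)) \cap V_t} \prft{u,v}{t}}_{P_{\mathrm{low}}(u)},
\]
where the sums are restricted to $R$ because $u \in L$. Using the same case analysis as in Claim~\ref{claim:prof-from-core-v} (splitting on $x_{uw} \in (r, 2r]$ vs.\ $(2r, 3r]$ and using the triangle inequality in both directions), I obtain $\prft{u,v}{t} \geq r - x_{vw}$ for every $v \in \Ball_R(w,r) \cap V_t$, and hence $P_{\mathrm{high}}(u) \geq L_t^R(w)$. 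For the low-profit term, the analog of Claim~\ref{claim:prof-uv-lower-bound} gives $\prft{u,v}{t} \geq \min(x_{uv} - r, 0)$, and summing over $v \in R \cap V_t$ (adding nonpositive terms only makes the bound smaller) yields $P_{\mathrm{low}}(u) \geq -L_t^R(u)$.

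To finish, I would invoke the selection rule of Algorithm~\ref{alg:corelation-bipartite-complete}: since $w_t = \argmax_{w \in L} L_t^R(w)$ and $u \in L \cap V_t$, we have $L_t^R(w) \geq L_t^R(u)$, so $\prft{u}{t} \geq L_t^R(w) - L_t^R(u) \geq 0$. The only genuinely new ingredient relative to the complete graph proof is the bookkeeping that the relevant ball and ``$L$-function'' are restricted to $R$; this restriction is both consistent (all $u$-incident edges go to $R$) and matches the definition of $L_t^R$ over which the algorithm optimizes, so the argument goes through verbatim.
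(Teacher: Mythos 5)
Your proposal is correct and follows essentially the same decomposition as the paper's proof: reduce to showing $\prft{u}{t}\geq 0$ per step, handle negative edges via the $4r$-diameter bound, dispatch $x_{uw}\in[0,r]\cup[3r,1]$ by triangle inequality, and for $x_{uw}\in(r,3r]$ split into $P_{\text{high}}$ and $P_{\text{low}}$ bounded by $L_t^R(w)$ and $-L_t^R(u)$ respectively, closing with the greedy choice of $w_t$. Your explicit restriction of the $P_{\text{low}}$ summation to $v\in R\cap V_t$ (rather than all of $V_t$) is in fact slightly cleaner than the paper's phrasing, which nominally sums $\min(x_{uv}-r,0)$ over all of $V_t$ but then equates it with $-L_t^R(u)$; the restriction to $R$ is what makes that last equality hold, and is justified exactly as you say, since $\prft{u,v}{t}=0$ when $v\in L$.
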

As in Lemma~\ref{lem:pft}, we need to show that $\prft{u}{t}\geq 0$ for all $t$. Note that we only need
to consider $u\in V_t\cap L$ as $\prft{u}{t} = 0$ for $u\notin V_t$.

Consider a step $t$ of the algorithm and vertex $u\in V_t \cap L$. Let $w = w_t$ be the center of the cluster chosen at this step.
First, we show that since the diameter of the cluster $C_t$ is $4r$, for all negative edges $(u,v) \in E^-$ with $u,v \in C_t$, we can charge the cost of
disagreement to the edge itself, that is, $\prft{u,v}{t}$ is nonnegative for $(u,v)\in E^-$ (see Lemma~\ref{cl:neg-edge-profit-nenneg}). We then consider two cases:
$x_{uw}\in [0, r]\cup [3r,1]$ and $x_{uw}\in (r,3r]$.

The former case is fairly simple since disagreeing positive edges $(u,v)\in E^+$ (with  $x_{uw}\in [0, r]\cup [3r,1]$) have a ``large'' LP cost. In Lemma~\ref{lem:0r} and Lemma~\ref{lem:r1},
we prove that the cost of disagreement can be charged to the edge itself and hence $\prft{u}{t} \geq 0$.

We then consider the latter case. Similarly to Lemma~\ref{lem:pft}, we split the profit at step $t$ for vertices $u$ with $x_{uw} \in (r, 3r]$
into the profit they get from edges $(u,v)$ with $v$ in $\Ball_R(w,r)\cap V_t$ and from edges with $v$ in $V_t \setminus \Ball_R(w,r)$. That is,
\begin{multline*}
\prft{u}{t} =\\= \underbrace{\sum_{v\in \Ball_R(w,r) \cap V_t}\prft{u,v}{t}}_{P_{high}(u)} + \underbrace{\sum_{v\in V_t\setminus \Ball_R(w,r)}\prft{u,v}{t}}_{P_{low}(u)}.
\end{multline*}
Denote the first term by $P_{high}(u)$ and the second term by $P_{low}(u)$. We show that $P_{low}(u)\geq -L^R_t(u)$ (see Lemma~\ref{lem:PLow-LtRu} )
and $P_{high}\geq L^R_t(w) = \sum_{v \in \Ball_R{w, r} \cap V_t} r - x_{vw}$  (see Lemma~\ref{lem:PHigh-LtRu} ) and
 conclude that $\prft{u}{t} = P_{high}(u) + P_{low}(u)\geq L^R_t(w)-L^R_t(u)\geq 0$ since $L^R_t(w) = \max_{w'\in V_t} L^R_t(w') \geq L^R_t(u)$.

Consider $u$ such that $x_{uw} \in (r, 3r]$. First, we show that the profit we obtain from every edge $(u,v)$ with $v \in \Ball_R(w,r)$ is
at least $r - x_{vw}$, regardless of whether the edge is positive or negative.
\begin{claim}
If $x_{uw} \in (r,3r]$ and $v \in \Ball_R(w,r)\cap V_t$, then $\prft{u,v}{t}\geq r-x_{vw}$.
\end{claim}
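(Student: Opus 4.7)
The plan is to mirror the proof of Claim~\ref{claim:prof-from-core-v} from the complete graph setting, since this bipartite claim has an essentially identical structure. The key observation is that $v \in \Ball_R(w,r) \cap V_t$ implies $x_{vw} \leq r \leq 2r$, so $v$ belongs to the new cluster $C_t$; in particular $(u,v) \in \Delta E_t$, so we actually incur both an LP charge and potentially an ALG charge for this edge, and the proof reduces to a case analysis of the four possibilities: the edge is positive vs.\ negative, and $u \in C_t$ vs.\ $u \notin C_t$ (determined by whether $x_{uw} \leq 2r$ or $x_{uw} > 2r$).

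First I would handle the subcase $x_{uw} \in (r, 2r]$, where $u \in C_t$ and thus $u$ and $v$ lie in the same new cluster. By the triangle inequality, $x_{uv} \leq x_{uw} + x_{vw} \leq 2r + x_{vw}$ and $x_{uv} \geq x_{uw} - x_{vw} \geq r - x_{vw}$. If $(u,v) \in E^+$ then it is not in disagreement so $\prft{u,v}{t} = x_{uv} \geq r - x_{vw}$. If $(u,v) \in E^-$ then it is in disagreement, so $\prft{u,v}{t} = (1 - x_{uv}) - r \geq 1 - 3r - x_{vw} = 2r - x_{vw} \geq r - x_{vw}$, using $r = 1/5$.

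Next I would handle the subcase $x_{uw} \in (2r, 3r]$, where $u \notin C_t$ while $v \in C_t$, so $u$ and $v$ are separated by the partition produced at step $t$. Again by the triangle inequality, $x_{uv} \geq x_{uw} - x_{vw} \geq 2r - x_{vw}$ and $x_{uv} \leq x_{uw} + x_{vw} \leq 3r + x_{vw}$. If $(u,v) \in E^+$ then it is in disagreement and $\prft{u,v}{t} = x_{uv} - r \geq r - x_{vw}$. If $(u,v) \in E^-$ then it is in agreement, so $\prft{u,v}{t} = 1 - x_{uv} \geq 1 - 3r - x_{vw} = 2r - x_{vw} \geq r - x_{vw}$.

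Combining the two subcases yields $\prft{u,v}{t} \geq r - x_{vw}$ in all situations. I do not expect any substantive obstacle: the bipartite structure plays no real role here, since the edge $(u,v)$ exists (with $u \in L$, $v \in R$) and the triangle inequality on the LP metric applies to any triple. The only thing to be careful about is that we are genuinely in $\Delta E_t$, which is guaranteed by $v \in C_t$; this is what allows the LP cost $\lp{u,v}$ to appear in $\prft{u,v}{t}$ in the first place.
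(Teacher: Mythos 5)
Your proof is correct and follows essentially the same route as the paper's: the same two subcases $x_{uw}\in(r,2r]$ and $x_{uw}\in(2r,3r]$, the same triangle-inequality bounds on $x_{uv}$, and the same arithmetic with $r=1/5$. The only difference is that you spell out explicitly which combination of (positive/negative, in/out of $C_t$) yields agreement or disagreement, whereas the paper leaves this implicit; that extra care is warranted, since the bound for a positive edge in the first subcase genuinely requires using that the edge is in agreement, and your observation that $v\in\Ball_R(w,r)\cap V_t$ forces $(u,v)\in\Delta E_t$ is exactly the right sanity check.
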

\begin{proof}
First consider $u$ such that $x_{uw} \in (r, 2r]$. Note that $x_{uv} \geq x_{uw} - x_{vw} \geq r - x_{vw}$. Moreover, $x_{uv} \leq x_{uw} + x_{vw} \leq 2r + x_{vw}$. Thus, if $(u,v) \in E^+$, then $\prft{u,v}{t} \geq r - x_{vw}$. Otherwise, $\prft{u,v}{t} \geq (1 - 2r - x_{vw}) - r \geq 2r - x_{vw}$.

For $u \in (2r, 3r]$, note that $x_{uv} \geq x_{uw} - x_{vw} \geq 2r - x_{vw}$. Moreover, $x_{uv} \leq x_{uw} + x_{vw} \leq 3r + x_{vw}$.
Thus, if $(u,v) \in E^+$, then $\prft{u,v}{t} \geq (2r - x_{vw}) - r \geq r - x_{vw}$. Otherwise, $\prft{u,v}{t} \geq (1 - 3r - x_{vw}) \geq 2r - x_{vw}$.
\end{proof}

Using the above claim, we can sum up the profits from all vertices $v$ in $\Ball_R(w, r)$ and lower bound $P_{high}(u)$ as follows.
\begin{lemma}\label{lem:PHigh-LtRu}
If $x_{uw}\in (r,3r]$, then $P_{high}(u) \geq L^R_t(w)$.
\end{lemma}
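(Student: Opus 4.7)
The plan is to mirror the proof of Lemma~\ref{lem:PHigh-Ltw} from the complete graph case almost verbatim, since the only structural change is that we now restrict to the right side of the bipartition. First I would invoke the claim immediately preceding this lemma (the bipartite analogue of Claim~\ref{claim:prof-from-core-v}), which gives the pointwise bound $\prft{u,v}{t} \geq r - x_{vw}$ for every $v \in \Ball_R(w,r) \cap V_t$ whenever $x_{uw} \in (r,3r]$. The key point is that this claim's proof only used the triangle inequality between $u$, $v$, and $w$ together with the cases $(u,v)\in E^+$ and $(u,v)\in E^-$, none of which care whether $v$ lies in $L$ or $R$; it applies here without change.

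Second, I would sum this pointwise bound over all $v \in \Ball_R(w,r) \cap V_t$, which is exactly the index set defining $P_{high}(u)$:
\[
P_{high}(u) \;=\; \sum_{v \in \Ball_R(w,r)\cap V_t} \prft{u,v}{t} \;\geq\; \sum_{v \in \Ball_R(w,r)\cap V_t} (r - x_{vw}).
\]
By the definition of $L^R_t(w) = \sum_{v \in \Ball_{V_t \cap R}(w,r)} (r - x_{vw})$, the right-hand side is precisely $L^R_t(w)$, completing the bound.

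There is no real obstacle here: the only bookkeeping issue is to make sure that the partition of $\prft{u}{t}$ into $P_{high}(u)$ and $P_{low}(u)$ in the bipartite setting consistently uses $\Ball_R(w,r)$ rather than $\Ball(w,r)$, and that the cluster-center selection rule in Algorithm~3 (which maximizes $L^R_t(w)$ over $w \in L$) matches the $L^R_t$ appearing in the statement. Both of these are already set up in the preceding paragraphs, so the lemma follows immediately once the pointwise claim is in hand.
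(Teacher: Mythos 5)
Your proof is correct and follows exactly the same route as the paper's: invoke the bipartite analogue of Claim~\ref{claim:prof-from-core-v} to get the pointwise bound $\prft{u,v}{t}\geq r-x_{vw}$ for $v\in\Ball_R(w,r)\cap V_t$, then sum over that index set to recover $L^R_t(w)$. No differences of substance.
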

\begin{proof}
By Claim , we have $\prft{u,v}{t}\geq r-x_{vw}$ for all $v\in R \cap V_t$. Thus,
\begin{align*}
P_{high}(u) &= \sum_{v\in \Ball_R(w,r)\cap V_t}\prft{u,v}{t}\\ &\geq \sum_{v\in \Ball_R(w,r)\cap V_t}r-x_{vw} = L^R_t(w).
\end{align*}
\end{proof}

We now lower bound $P_{low}(u)$. To this end. we estimate each term $\prft{u,v}{t}$ in the definition of $P_{low}$.
\begin{claim}\label{claim:lb-bp}
If $x_{uw} \in (r,3r]$ and $v \in V_t \setminus \Ball_R(w,r)$, then $\prft{u,v}{t}\geq \min(x_{uv} - r, 0)$.
\end{claim}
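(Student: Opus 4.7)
The plan is to follow the same case-analysis as in Claim~\ref{claim:prof-uv-lower-bound} from the complete-graph section, since nothing in that argument exploited the global completeness of the graph. Because we are in a bipartite setting with $u \in L$, and edges only exist between $L$ and $R$, the hypothesis $v \in V_t \setminus \Ball_R(w,r)$ forces $v \in R$ with $x_{vw} > r$ (if $v \in L$, then $(u,v) \notin E$ and $\prft{u,v}{t}$ is defined to be $0$, trivially satisfying the bound).

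First I would dispose of negative edges $(u,v) \in E^-$: by Claim~\ref{cl:neg-edge-profit-nenneg}, $\prft{u,v}{t} \geq 0 \geq \min(x_{uv} - r, 0)$, so the inequality holds. Next, for positive edges $(u,v) \in E^+$, I would split on whether $(u,v) \in \Delta E_t$. If $(u,v) \notin \Delta E_t$ (neither $u$ nor $v$ is placed in $C_t$ at this step), then by definition $\prft{u,v}{t} = 0 \geq \min(x_{uv} - r, 0)$.

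For the remaining case, $(u,v) \in E^+ \cap \Delta E_t$, I would directly compute the profit. If $(u,v)$ is in agreement with the clustering, then $\alg{u,v} = 0$ and $\prft{u,v}{t} = \lp{u,v} = x_{uv} \geq x_{uv} - r$. If $(u,v)$ is in disagreement, then $\prft{u,v}{t} = x_{uv} - r$. In either subcase $\prft{u,v}{t} \geq x_{uv} - r \geq \min(x_{uv} - r, 0)$, finishing the proof.

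There is no real obstacle here: the inequality is essentially identical to its complete-graph analog, and the bipartite structure only restricts which pairs have edges at all. The one thing to be careful about is the edge-case $v \in L$, where the LP distance $x_{uv}$ is still defined (via triangle inequality on the multicut metric) but the edge $(u,v)$ is absent from $E$; handling this by observing $\prft{u,v}{t} = 0$ by convention keeps the case analysis clean.
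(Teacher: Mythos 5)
Your proof is correct and follows essentially the same route as the paper's: dispose of negative edges via Claim~\ref{cl:neg-edge-profit-nenneg}, observe the profit is $0$ when $(u,v)\notin\Delta E_t$, and for a positive edge in $\Delta E_t$ compute the profit directly as $x_{uv}$ (agreement) or $x_{uv}-r$ (disagreement), both of which dominate $\min(x_{uv}-r,0)$. The only addition you make is to explicitly point out the $v\in L$ case, which the paper handles implicitly (such a pair is not in $E$, hence not in $\Delta E_t$, so $\prft{u,v}{t}=0$); this is a harmless clarification, not a different argument.
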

\begin{proof}
By Claim~\ref{cl:neg-edge-profit-nenneg}, if $(u,v)$ is a negative edge, then $\prft{u,v}{t} \geq 0$. The profit is $0$ if $x_{uv}\notin \Delta E_t$ (i.e., neither $u$ nor $v$ belong
to the new cluster). So let us assume that $(u,v)$ is a positive edge in $\Delta E_t$. Then, the profit obtained from $(u,v)$ is $x_{uv}$ if $(u,v)$ is in agreement
and $x_{uv} - r$ if $(u,v)$ is in disagreement. In any case, $\prft{u,v}{t} \geq x_{uv} - r \geq \min(x_{uv} - r, 0)$.
\end{proof}

Lemma~\ref{lem:PLow-LtRu} is an immediate corollary of Claim~\ref{claim:lb-bp}.
\begin{lemma}\label{lem:PLow-LtRu}
If $x_{uw}\in (r,3r]$, then $P_{low}(u) \geq -L^R_t(u)$.
\end{lemma}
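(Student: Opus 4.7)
The plan is to follow the outline of Lemma~\ref{lem:PLow-Ltu}, exploiting the bipartite structure so that only the $R$-vertices contribute. First I would observe that since $u \in L$ and the graph is bipartite, for every $v \in L$ the pair $(u,v)$ is not an edge of $G$; hence $\prft{u,v}{t} = 0$ exactly, and these terms drop out of $P_{low}(u)$. Consequently, only $v \in (R \cap V_t) \setminus \Ball_R(w, r)$ contribute nontrivially to $P_{low}(u)$.

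Next, I would apply Claim~\ref{claim:lb-bp} to each of these remaining terms to obtain $\prft{u,v}{t} \geq \min(x_{uv} - r, 0)$. Summing and then enlarging the index set to all of $R \cap V_t$ — legitimate because each $\min(x_{uv}-r, 0)$ is nonpositive, so including the extra terms from $\Ball_R(w,r) \cap V_t$ only weakens the bound — yields
$$P_{low}(u) \;\geq\; \sum_{v \in R \cap V_t} \min(x_{uv} - r, 0).$$

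Finally, I would collapse the right-hand side by noting that $\min(x_{uv} - r, 0) = 0$ whenever $x_{uv} > r$, so only $v \in \Ball_R(u, r) \cap V_t$ actually contribute, giving
$$\sum_{v \in \Ball_R(u, r) \cap V_t} (x_{uv} - r) \;=\; -L_t^R(u),$$
which is the desired inequality. The only subtle point is recognizing that the $L$-side terms contribute \emph{exactly} zero rather than merely being upper-bounded by zero; this is what makes $L_t^R$ (as opposed to a larger quantity that sums also over $L$) the right object to appear on the right-hand side. Once this observation is in place, the argument is otherwise a direct replay of the proof of Lemma~\ref{lem:PLow-Ltu}.
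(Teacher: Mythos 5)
Your proof is correct and follows the same three-step outline as the paper's (apply Claim~\ref{claim:lb-bp}, enlarge the index set, collapse the $\min$), but you handle one point more carefully than the paper does, and that extra care is actually necessary. The paper's displayed chain bounds $P_{low}(u) \geq \sum_{v\in V_t} \min(x_{uv}-r,0)$ and then asserts this equals $\sum_{v\in \Ball_R(u,r)\cap V_t}(x_{uv}-r)$; but as written that last equality does not follow from the justification ``$\min(x_{uv}-r,0)=0$ if $v\notin\Ball(u,r)$,'' since the sum over $V_t$ would still pick up any $v\in L\cap V_t$ with $x_{uv}<r$, and those terms are nonpositive, so dropping them would move the lower bound in the wrong direction. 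Your fix --- observing that $(u,v)\notin E$ whenever both endpoints lie in $L$, so $\prft{u,v}{t}=0$ \emph{exactly} and those terms can be discarded \emph{before} invoking the $\min$ lower bound --- closes this gap cleanly, and your remark about why $L^R_t$ (rather than a sum over all of $\Ball(u,r)\cap V_t$) is the correct quantity is precisely the point the paper glosses over. In short: same approach, but your version is the rigorous one.
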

\begin{proof}
By Claim~\ref{claim:prof-uv-lower-bound}, we have $\prft{u,v}{t}\geq \min(x_{uv} - r,0)$ for all $v\in V_t$. Thus,
\begin{align*}
P_{low}(u) &= \sum_{v\in V_t\setminus \Ball_R(w,r)}\prft{u,v}{t}\\
&\geq  \sum_{v\in V_t\setminus \Ball_R(w,r)} \min(x_{uv} - r,0)\\
&\overset{a}{\geq} \;\;\;\;\;\sum_{v\in V_t} \min(x_{uv} - r,0) \\
&\overset{b}{=} \sum_{v\in \Ball_R(u,r) \cap V_t} x_{uv} - r \\
&= - L_t^R(u).
\end{align*}
Here we used that (a) all terms $\min(x_{uv} - r,0)$ are nonpositive, and (b) $\min(x_{uv} - r, 0) = 0$ if $v\notin \Ball(u,r)$.
\end{proof}

\end{document}